\title{Zero-Sum Stochastic Stackelberg Games}
\author{Denizalp Goktas 
\\
Department of Computer Science\\
Brown University\\
Providence, RI 02906, USA \\
\texttt{denizalp\_goktas@brown.edu} \\
\And
Jiayi Zhao \\
Department of Computer Science \\
Pomona College \\
Pomona, CA, USA \\
\texttt{jzae2019@mymail.pomona.edu} \\
\AND
Amy Greenwald \\
Brown University\\
Providence, RI 02906, USA \\
\texttt{amy\_greenwald@brown.edu}
}
\begin{document}

\maketitle

\begin{abstract}
    Zero-sum stochastic games have found important applications in a variety of fields, from machine learning to economics. Work on this model has primarily focused on the computation of Nash equilibrium due to its effectiveness in solving adversarial board and video games. Unfortunately, a Nash equilibrium is not guaranteed to exist in zero-sum stochastic games when the payoffs at each state are not convex-concave in the players' actions. A Stackelberg equilibrium, however, is guaranteed to exist. Consequently, in this paper, we study zero-sum stochastic Stackelberg games. Going beyond known existence results for (non-stationary) Stackelberg equilibria, we prove the existence of recursive (i.e., Markov perfect) Stackelberg equilibria (recSE) in these games, provide necessary and sufficient conditions for a policy profile to be a recSE, and show that recSE can be computed in (weakly) polynomial time via value iteration. Finally, we show that zero-sum stochastic Stackelberg games can model the problem of pricing and allocating goods across agents and time. More specifically, we propose a zero-sum stochastic Stackelberg game whose recSE correspond to the recursive competitive equilibria of a large class of stochastic Fisher markets. We close with a series of experiments that showcase how our methodology can be used to solve the consumption-savings problem in stochastic Fisher markets.
\end{abstract}

Min-max optimization has paved the way for recent progress in a variety of fields, from machine learning to economics.
In a \mydef{constrained min-max optimization problem},
$\min_{\outer \in \outerset} \max_{\inner \in \innerset} \obj(\outer, \inner)$, the objective function $\obj: \outerset \times \innerset \to \R$ is continuous, and the constraint sets $\outerset \subset \R^\outerdim$ and $\innerset \subset \R^\innerdim$ are nonempty and compact.
When $\obj$ is convex-concave, and the constraint sets $\outerset$ and $\innerset$ are convex, the seminal minimax theorem \cite{neumann1928theorie, sion1958general} holds, i.e., $\min_{\outer \in \outerset} \max_{\inner \in \innerset} \obj(\outer, \inner) = \max_{\inner \in \innerset} \min_{\outer \in \outerset} \obj(\outer, \inner)$, and such problems can be interpreted as solving a \mydef{min-max} (or \mydef{zero-sum}) one-shot \emph{simultaneous-move\/} \mydef{game} 
between an \mydef{outer player} $\outer$ and an \mydef{inner player} $\inner$ with respective payoff functions $-\obj$, $\obj$ and respective action sets $\outerset$, $\innerset$, where the solutions $(\outer^*, \inner^*) \in \outerset \times \innerset$ 
are \mydef{Nash equilibria}: i.e., \mydef{best responses} to one another with $\outer^* \in \argmin_{\outer \in \outerset} \obj(\outer, \inner^*)$ and $\inner^* \in \argmax_{\inner \in \innerset} \obj(\outer^*, \inner)$.

More generally, one can consider \mydef{zero-sum stochastic games} played over an infinite discrete time horizon $\N_+$.
The game starts at some initial state $\staterv[0] \sim \initstates$. 
At each subsequent time-step $\iter \in \N_+$, players encounter a new state $\state[\iter] \in \states$.
After taking their respective actions $(\outer[\iter], \inner[\iter])$ from their respective action spaces $\outerset(\state[\iter]) \subseteq \R^\outerdim$ and $\innerset(\state[\iter]) \subseteq \R^\innerdim$,
they receive payoffs $\reward (\state[\iter], \outer[\iter], \inner[\iter])$, and then either transition to a new state $\staterv[\iter+1] \sim \trans (\cdot \mid \state[\iter], \outer[\iter], \inner[\iter])$ with probability $\discount$, or the game ends with the remaining probability.
The goal of the outer (resp.{} inner) player is to play  
a (\mydef{non-stationary}) \mydef{policy}, i.e., a sequence of history-dependent actions $\{\outer[\iter]\}_{\iter = 0}^\infty$ (resp.{} $\{\inner[\iter]\}_{\iter = 0}^\infty$),
that maximizes (resp.{} minimizes) their expected cumulative discounted payoff (resp. loss) $\mathop{\Ex} \left[\sum_{\iter = 0}^\infty \discount^\iter \reward(\staterv[\iter], \outer[\iter], \inner[\iter]) \right]$, fixing their opponent's policy. 

Note that any zero-sum stochastic game can be (non-compactly) represented as a zero-sum one-shot game with objective function $\mathop{\Ex} \left[\sum_{\iter = 0}^\infty \discount^\iter \reward(\staterv[\iter], \outer[\iter], \inner[\iter]) \right]$ and action sets comprising sequences $ \prod_{t=0}^\infty \outerset$ and $ \prod_{t=0}^\infty \innerset$ for the outer and inner players, respectively.
Although the traditional minimax theorem \cite{neumann1928theorie, sion1958general} does not hold in these games, because this objective function is not convex-concave in these actions, \citeauthor{shapley1953stochastic} nonetheless establishes the existence of Nash equilibria \cite{shapley1953stochastic},
by deriving a more general minimax theorem.

A \mydef{stationary} (or \mydef{Markovian~\cite{maskin2001markov}}) \mydef{policy} is a mapping from states to actions.
When $\reward (\state, \outer, \inner)$ is bounded, continuous, and concave-convex in $(\outer, \inner)$, for all $\state \in \states$, 
we are guaranteed the existence of a stationary policy profile, i.e., a pair of stationary policies $\policy[\outer]: \states \to \outerset$, $\policy[\inner]: \states \to \innerset$ for the outer and inner players, respectively, specifying the actions taken at each state, with a unique value such that both players maximize their expected payoffs, as a generalization of the minimax theorem holds \cite{shapley1953stochastic}:%
\footnote{Shapley's original results, which concern state-dependent payoff functions that are bilinear in the outer and inner players' actions, extend directly to payoffs which are convex-concave in the players' actions.}
\vspace{-5mm}

{\small
\begin{align*}
\min_{\policy[\outer] \in \outerset^\states} \max_{\policy[\inner] \in \innerset^\states} \mathop{\E} \left[\sum_{\iter = 0}^\infty \discount^\iter \reward(\staterv[\iter], \policy[\outer](\staterv[\iter]), \policy[\inner](\staterv[\iter])) \right]
&= \max_{\policy[\inner] \in \innerset^\states} \min_{\policy[\outer] \in \outerset^\states}  \mathop{\E} \left[\sum_{\iter = 0}^\infty \discount^\iter \reward(\staterv[\iter], \policy[\outer](\staterv[\iter]), \policy[\inner](\staterv[\iter])) \right]
\end{align*}}

\vspace{-2mm}
In other words, under the aforementioned assumptions, we are guaranteed the existence of a \mydef{recursive Nash equilibrium} (sometimes called Markov perfect equilibrium \cite{maskin2001markov}, a refinement of extensive-form game subgame perfect equilibrium for Markov games), namely a stationary policy profile in which players not only play Nash equilibria: i.e., best responses to one another, but do so regardless of initial state.
Additionally, when the rewards at each state are convex-concave, a recursive Nash equilibrium can be computed in polynomial time by iterative application of the min-max operator  \cite{shapley1953stochastic}.
Zero-sum \emph{stochastic\/} games generalize zero-sum games from a single state to multiple states, and have found even more applications in a variety of fields \cite{jaskiewicz2018non}. 

When the objective function in a min-max optimization problem is not convex-concave, a minimax theorem is not guaranteed to hold, precluding the interpretation of the game as simultaneous-move, and the guaranteed existence of Nash equilibrium.
Nonetheless, the optimization problem can still be viewed as a min-max (or zero-sum) one-shot \emph{Stackelberg\/} game, in which the outer player moves before the inner one.
The canonical solution concept in such games is \mydef{Stackelberg equilibrium (SE)}.
Interestingly, in Stackelberg games, the inner player's actions can be constrained by the outer player's choice, without impacting existence~\cite{goktas2021minmax}.
The result is a \mydef{min-max one-shot Stackelberg game}: i.e., $\min_{\outer \in \outerset} \max_{\inner \in \innerset: h(\outer, \inner) \geq \zeros} \obj (\outer, \inner)$, where $\obj, h: \outerset \times \innerset \to \R$ are continuous, and $\outerset, \innerset$ are non-empty and compact.
Even more problems of interest are captured by this model \cite{bertsimas2011theory,  fisac2015reach, bansal2017hamilton, leung2022learning}.

One can likewise consider \mydef{zero-sum stochastic Stackelberg games}, which generalize both zero-sum one-shot Stackelberg games and zero-sum stochastic (simultaneous-move) games.
Similar to zero-sum stochastic games, 
these games are played over an infinite discrete time horizon $\N_+$, start at some state $\staterv[0] \sim \initstates$ and consist of nonempty and compact actions spaces $\outerset \subset \R^\outerdim$ and $\innerset \subset \R^\innerdim$%
\footnote{To simplify notation, we drop the dependency of action spaces on states going forward, but our theory applies in this more general setting.},
a state-dependent payoff function
$\reward (\state, \outer, \inner)$, a transition probability $\trans(\state^\prime \mid \state, \outer, \inner)$, and a discount rate $\discount$.
In addition, they are augmented with a state-dependent (joint action) constraint function $\constr(\state, \outer, \inner)$,
with two players that seek to optimize their cumulative discounted payoffs, in expectation, while satisfying the constraint $\constr(\state, \outer, \inner) \geq \zeros$ at each state $\state \in \states$.
Applications of this model include autonomous driving \cite{fisac2015reach, leung2022learning}, reach-avoid problems in human-robot interaction \cite{bansal2017hamilton}, and robust optimization in stochastic environments \cite{bertsimas2011theory}, 
\deni{Double check if there are more papers to add!.}
and, as we show, economic markets.

\deni{Here is the new story: at a high level, while in a stochastic game, players pick policies *before the start of the game* simultaneously, in stochastic Stackelberg games, a leader picks a policy first, and then the follower pick a best response policy *before the start of the game*. Our goal then becomes to investigate if playing stochastic Stackelberg game, can be seen as equivalent to playing a Stackelberg game at each state of the game, i.e., does \recSE{} or Markov perfect Stackelberg equilibrium exist.}

While in stochastic games, players announce their policies simultaneously before play commences, in stochastic Stackelberg games, the outer player, 
announces their policy first, after which the inner player announces theirs.
The canonical solution concept for such games is the \mydef{Stackelberg equilibrium}, which is guaranteed to exist in non-stationary policies under mild assumptions \cite{goktas2021minmax,vorobeychik2012computing}.
The computational complexity of such policies, however, can be prohibitive, since even representing such policies in an infinite horizon setting is intractable.
A natural question to ask, then, is whether \emph{stationary\/} equilibria exist in zero-sum Stackelberg games, i.e., stationary policy profiles in which, starting at any state of the game, the outer player maximizes their expected discounted cumulative payoff while the inner player best responds.
Following the analogous Nash nomenclature, we call such policies \mydef{recursive Stackelberg equilibria (\recSE{})} (or Markov perfect Stackelberg equilibria).

In this paper, we define and prove the existence of \recSE{} in zero-sum stochastic Stackelberg games, provide necessary and sufficient conditions for a policy profile to be a \recSE, and show that a \recSE{} can be computed in (weakly) polynomial time via value iteration.
We further show that zero-sum stochastic Stackelberg games can be used to solve problems of pricing and allocating goods across agents and time.
In particular, we introduce \mydef{stochastic Fisher markets}, a stochastic generalization of the Fisher market \cite{brainard2000compute}, and a special case of \citeauthor{Friesen1979stochastic}'s \cite{Friesen1979stochastic} financial market model, which itself is a stochastic generalization of the \citeauthor{arrow-debreu} model of a competitive economy \cite{arrow-debreu}.
We then prove the existence of recursive competitive equilibrium (recCE) \cite{mehra1977recursive} in this model, under the assumption that consumers have continuous and homogeneous utility functions, by characterizing the recCE of any stochastic Fisher market as the recSE of a corresponding zero-sum stochastic Stackelberg game.
Finally, we use value iteration to solve various stochastic Fisher markets, highlighting the issues that value iteration can encounter, depending on the smoothness properties of the
utilities.
%
\if 0
We summarize our contributions as follows:
\begin{itemize}

    \item We prove the existence of recursive Stackelberg equilibrium in min-max stochastic Stackelberg games with dependent strategy sets under the assumptions that the payoff function is continuous and bounded at each state. (\Cref{thm:existence_stackelberg}). 
    
    \item We prove a subdifferential generalization of the Benveniste-Scheinkman theorem \cite{Benveniste1979value}, which allows us to subdifferentiate parametrized Bellman equations (\Cref{thm:subdiff_bellman}).
    
    \item Using our subdifferential version of the Benveniste-Scheinkman theorem (\Cref{thm:subdiff_bellman}), we provide necessary conditions for a policy profile to be a Stackelberg equilibrium of a stochastic Stackelberg game, assuming the payoff function is concave in the outer player's action at each state (\Cref{thm:necessary_conditions}).
    These conditions become sufficient when, in addition, the payoff function is concave in the inner player's action (\Cref{thm:necessary_sufficient_conditions}).
    
    \item We introduce stochastic Fisher markets, and provide a constructive proof of existence of recursive competitive equilibria assuming buyers with continuous and homogeneous utility functions by showing that such stochastic Fisher markets are an instance of a stochastic Stackelberg game (\Cref{thm:fisher_market_recursive_eqm}).
    
    \item We solve for a recursive competitive equilibrium in various stochastic Fisher markets using value iteration, highlighting the issues that value iteration might face as a consequence of the smoothness properties of the utility functions.
\end{itemize}
\fi
\paragraph{Related Work}

Algorithms for min-max optimization problems (i.e., zero-sum games) with independent strategy sets have been extensively studied 
\cite{tseng1995variational, nesterov2006variational, gidel2020variational, mokhtari2020convergence, ibrahim2019lower, zhang2020lower, lin2020near, alkousa2020accelerated, juditsky2011first, hamedani2018primal, zhao2019optimal, thekumparampil2019efficient, ouyang2018lower, nemirovski2004prox, nesterov2007dual, tseng2008accelerated, sanjabi2018stoch, nouiehed2019solving, lu2019block, jin2020local, ostrovskii2020efficient, lin2020gradient, zhao2020primal, rafique2019nonconvex} (for a summary see, Section G \cite{goktas2021minmax}).
\citeauthor{goktas2021minmax} studied min-max games with dependent strategy sets, proposing polynomial-time nested gradient descent ascent (GDA) \cite{goktas2021minmax} and simultaneous GDA algorithms for such problems \cite{goktas2021robustminmax}.

The computation of Stackelberg equilibrium in two-player stochastic Stackelberg games has been studied in several interesting settings, in which the leader moves before the follower, but without the leader's actions impacting the followers' choice sets.
\citeauthor{bensoussan2015maximum} \cite{bensoussan2015maximum} study continuous-time general-sum stochastic Stackelberg games with continuous action spaces, and prove existence of a solution in this setting.
\citeauthor{vorobeychik2012computing} \cite{vorobeychik2012computing} consider a general-sum
stochastic Stackelberg game with finite state-action spaces and an infinite horizon.
These authors show that stationary SE policies do not exist in this very general setting, but nonetheless identify a subclass of games, namely team (or potential) Stackelberg games for which stationary Stackelberg equilibrium policies do exist.
\citeauthor{vu2022stackelberg} \cite{vu2022stackelberg} study the empirical convergence of policy gradient methods in the same setting as \citeauthor{vorobeychik2012computing} \cite{vorobeychik2012computing}, while \citeauthor{ramponi2022learning} \cite{ramponi2022learning} study non-stationary equilibria in this same setting, assuming a finite horizon.
\citeauthor{chang2015leader} \cite{chang2015leader} and \citeauthor{sengupta2020multi} \cite{sengupta2020multi} consider a partially observable version of \citeauthor{vorobeychik2012computing}'s  \cite{vorobeychik2012computing} model, and provide methods to compute Stackelberg equilibria in their setting.


Some recent research concerns one leader-many followers Stackelberg games.
\citeauthor{vasal2020stochastic} \cite{vasal2020stochastic} studies a discrete-time, finite horizon one leader-many follower stochastic Stackelberg game
with discrete action and state spaces, and provides algorithms to solve such games.
\citeauthor{demiguel2009stochastic} \cite{demiguel2009stochastic} consider a stochastic Stackelberg game-like market model with $n$ leaders and $m$ followers; they prove the existence of a SE in their model, and provide (without theoretical guarantees) algorithms that converge to such an equilibrium in experiments.
Dynamic Stackelberg games \cite{li2017review} have been applied to a wide range of problems, including security \cite{vasal2020stochastic, vorobeychik2012computing}, insurance provision \cite{chen2018new, yuan2021robust}, advertising \cite{he2008cooperative}, robust agent design \cite{rismiller2020stochastic}, allocating goods across time intertemporal pricing \cite{oksendal2013stochastic}. 

The study of algorithms that compute competitive equilibria in Fisher markets was initiated by \citeauthor{devanur2002market}, who provided a polynomial-time method for solving these markets assuming linear utilities.
More recently, there have been efforts to study markets in dynamic settings \cite{cheung2019tracing, gao2021online, goktas2021minmax}, in which the goal is to either track the changing equilibrium of a changing market, or minimize some regret-like quantity for the market.
The models considered in these earlier works differ from ours as they do not have stochastic structure and do not invoke a dynamic solution concept. \deni{Add the work of Michael Jordan's group on stochastic EE.}

\section{Preliminaries}\label{sec:prelim}

\paragraph{Notation} 
We use caligraphic uppercase letters to denote sets (e.g., $\calX$);
bold lowercase letters to denote vectors (e.g., $\price, \bm \pi$);
bold uppercase letters to denote matrices and vector-valued random variables (e.g., $\allocation$, $\bm \Gamma$)---which one should be clear from context;
lowercase letters to denote scalar quantities (e.g., $x, \gamma$);
and uppercase letters to denote scalar-valued random variables (e.g., $X, \Gamma$).
We denote the $i$th row vector of a matrix (e.g., $\allocation$) by the corresponding bold lowercase letter with subscript $i$ (e.g., $\allocation[\buyer])$. 
Similarly, we denote the $j$th entry of a vector (e.g., $\price$ or $\allocation[\buyer]$) by the corresponding Roman lowercase letter with subscript $j$ (e.g., $\price[\good]$ or $\allocation[\buyer][\good]$).
We denote functions by a letter:
e.g., $f$ if the function is scalar valued, and $\f$ if the function is vector valued.
We denote the vector of ones of size $\numbuyers$ by $\ones[\numbuyers]$.
We denote the set of integers $\left\{1, \hdots, n\right\}$ by $[n]$, the set of natural numbers by $\N$, the set of real numbers by $\R$. We denote the postive and strictly positive elements of a set by a $+$ and $++$ subscript respectively, e.g., $\R_+$ and $\R_{++}$. 
We denote the orthogonal projection operator onto a set $C$ by $\project[C]$, i.e., $\project[C](\x) = \argmin_{\y \in C} \left\|\x - \y \right\|^2$. We denote by $\simplex[n] = \{\x \in \R_+^n \mid \sum_{i = 1}^n x_i = 1\}$, and by $\simplex(A)$, the set of probability measures on the set $A$.

\deni{Should we stick to the outer, inner player language or should we instead embrace the leader-follower language since in the stochastic setting the problem "looks less" inner-outer?}
\amy{leader follower, but another time!}

A \mydef{stochastic Stackelberg game} $(\states, \outerset, \innerset, \initstates, \reward[\outer], \reward[\inner], \constr, \trans, \discount)$ is a two-player game played over an infinite discrete time horizon $\N_+$.
At each time-step $\iter \in \N_+$, the players, who we call the outer- (resp.{} inner-) players, encounter a new state $\state \in \states$, and choose an action to play from their continuous set of actions $\outerset \subset \R^\outerdim$ (resp.\ $\innerset \subset \R^\innerdim$).
Play initiates at a start state $\staterv[0]$ drawn from a distribution $\initstates: \states \to [0, 1]$.
At each state $\state \in \states$ the action $\outer \in \outerset$ chosen by the outer player determines the set of \mydef{feasible} actions $\{\inner \in \innerset \mid \constr(\state, \outer, \inner) \geq \zeros \}$ available to the inner player, where $\constr: \states \times \outerset \times \innerset \to \R^\numconstrs$.
After the outer and inner players both make their moves, they receive payoffs 
$\reward[\outer]: \states \times \outerset \times \innerset \to \R$ and $\reward[\inner]: \states \times \outerset \times \innerset \to \R$, respectively, and the game either ends with probability $1-\discount$, where $\discount \in (0,1)$ is called the \mydef{discount factor}, or transitions to a new state $\state^\prime \in \states$, according to a \mydef{transition} probability function $\trans: \states \times \states \times \outerset \times \innerset \to [0,1]$ s.t.\ $\trans(\state^\prime \mid \state, \outer, \inner) \in [0,1]$ denotes the probability of transitioning to state $\state^\prime \in \states$ from state $\state \in \states$ when action profile $(\outer, \inner) \in  \outerset \times \innerset$ is chosen by the players.


In this paper, we focus on \mydef{zero-sum} stochastic Stackelberg games 
$\initgame \doteq (\states, \outerset, \innerset, \initstates, \reward, \constr, \trans, \discount)$, 
in which the outer player's loss is the inner player's gain, i.e., $\reward[\outer] = -\reward[\inner]$. 
A zero-sum stochastic Stackelberg game reduces to zero-sum (simultaneous-move) stochastic game \cite{shapley1953stochastic} in the special case where $\constr(\state, \outer, \inner) \geq 0$, for all state-action tuples $(\state, \outer, \inner) \in \states \times \outerset \times \innerset$.
More generally, a policy profile $(\policy[\outer], \policy[\inner]) \in  \outerset^\states \times \innerset^\states$ is said to be \mydef{feasible} if $\constr(\state, \policy[\outer](\state), \policy[\inner](\state)) \geq 0$, for all states $\state \in \states$.
To simplify notation, we introduce a function $\constrs : \outerset^\states \times \innerset^\states \to \R^{\numstates \times \numconstrs}$ such that $\constrs(\policy[\outer], \policy[\inner]) = (\constr(\state, \policy[\outer](\state), \policy[\inner](\state)))_{\state \in \states}$, and define feasible policy profiles as those $(\policy[\outer], \policy[\inner]) \in  \outerset^\states \times \innerset^\states$ s.t.\ $\constrs(\policy[\outer], \policy[\inner]) \geq \zeros$.
From now on, we assume:

\begin{assumption}
\label{assum:main}
1.~For all states $\state \in \states$, the functions $\reward (\state, \cdot, \cdot), \constr(\state, \cdot, \cdot)$ are continuous in $(\outer, \inner) \in \outerset \times \innerset$,
and payoffs are bounded, i.e., $\left\| \reward \right\|_{\infty} \leq \rewardbound < \infty$, for some $\rewardbound \in \R_{+}$,
2.~$\outerset, \innerset$ are non-empty and compact, and for all $\state \in \states$ and $\outer \in \outerset$ there exists $\inner \in \innerset$ s.t.\ $\constr(\state, \outer, \inner) \geq \zeros$.%
\footnote{Note that this condition is weaker than Slater's condition; it simply ensures the feasible action sets are non-empty for the inner player at each state.}

\end{assumption} 

Given a zero-sum stochastic Stackelberg game $\initgame$, the \mydef{state-value function}, $\statevalue: \states \times \outerset^\states \times \innerset^\states \to \R$, and the \mydef{action-value function}, $\actionvalue: \states \times \outerset \times \innerset \times \outerset^\states \times \innerset^\states \to \R$, respectively, are defined as: \\
{\small\begin{align}
    \statevalue(\state; \policy[\outer], \policy[\inner] ) &= \mathop{\E^{\policy[\outer], \policy[\inner]}}_{\staterv[\iter+1] \sim \trans( \cdot \mid \staterv[\iter], \outerrv[\iter], \innerrv[\iter])}\left[\sum_{\iter = 0}^\infty 
    \discount^\iter \reward(\staterv[\iter], \outerrv[\iter], \innerrv[\iter]) \mid
    \staterv[0] = \state \right] \\ 
    \actionvalue(\state, \outer, \inner; \policy[\outer], \policy[\inner]) &= \mathop{\E^{\policy[\outer], \policy[\inner]}}_{\staterv[\iter+1] \sim \trans( \cdot \mid \staterv[\iter], \outerrv[\iter], \innerrv[\iter])}\left[\sum_{\iter = 0}^\infty 
    \discount^\iter \reward(\staterv[\iter], \outerrv[\iter], \innerrv[\iter]) \mid
    \staterv[0] = \state, \outerrv[0] = \outer, \innerrv[0] = \inner \right] 
\end{align}}

Again, to simplify notation, we write expectations conditional on $\outerrv[\iter] = \policy[\outer]( \staterv[\iter])$ and $\innerrv[\iter] = \policy[\inner](\staterv[\iter])$ as $\E^{\policy[\outer], \policy[\inner]}$, and denote the state- and action-value functions by $\statevalue[][{\policy[\outer]}][{ \policy[\inner]}](\state)$, and $\actionvalue[][{\policy[\outer]}][{ \policy[\inner]}](\state, \outer, \inner)$, respectively.
Additionally, we let $\statevalfuncs  = [-\nicefrac{\rewardbound}{1- \discount}, \nicefrac{\rewardbound}{1- \discount}]^{\states}$ be the space of all state-value functions of the form $\statevalue: \states \to [-\nicefrac{\rewardbound}{1- \discount}, \nicefrac{\rewardbound}{1- \discount}]$, and we let $\actionvalfuncs = [-\nicefrac{\rewardbound}{1- \discount}, \nicefrac{\rewardbound}{1- \discount}]^{\states \times \outerset \times \innerset}$ be the space of all action-value functions of the form $\actionvalue: \states \times \outerset \times \innerset \to [-\nicefrac{\rewardbound}{1- \discount}, \nicefrac{\rewardbound}{1- \discount}]$.
Note that by \Cref{assum:main} the range of the state- and action-value functions is $[-\nicefrac{\rewardbound}{1- \discount}, \nicefrac{\rewardbound}{1- \discount}]$.
The cumulative payoff function of the game $\cumul: \outerset^\states \times \innerset^\states \to \R$ is the total expected loss (resp.\ gain) of the outer (resp.\ inner) player, given by $\cumul (\policy[\outer], \policy[\inner]) = \mathop{\E}_{\state \sim \initstates (\state)} \left[ \statevalue[][{\policy[\outer]}][{\policy[\inner]}] (\state) \right]$.

The canonical solution concept for stochastic Stackelberg games is the \mydef{Stackelberg equilibrium (SE)}.
A feasible policy profile $(\policy[\outer]^*, \policy[\inner]^*) \in \outerset^\states \times \innerset^\states$ is said to be a Stackelberg equilibrium (SE) of a zero-sum stochastic Stackelberg game $\initgame$ iff \\
    $$\max_{\policy[\inner] \in \innerset^\states : \constrs( \policy[\outer]^*, \policy[\inner]) \geq \zeros}  \cumul \left( \policy[\outer]^*, \policy[\inner] \right) \leq \cumul \left( \policy[\outer]^*, \policy[\inner]^* \right)  \leq \min_{\policy[\outer] \in \outerset^\states} \max_{\policy[\inner] \in \innerset^\states : \constrs( \policy[\outer], \policy[\inner]) \geq \zeros}  \cumul \left( \policy[\outer], \policy[\inner] \right) \enspace .$$
Note the strength of this definition, as it requires the constraints $\constr(\state, \policy[\outer], \policy[\inner]) \geq \zeros$ to be satisfied at all states $\state \in \states$, not only states which are reached with strictly positive probability.
A SE is guaranteed to exist in zero-sum stochastic Stackelberg games, under \Cref{assum:main}, as a corollary of \citeauthor{goktas2021minmax}'s \cite{goktas2021minmax} Proposition B.2.; however, this existence result is non-constructive.%
\footnote{We note SE should technically be defined in terms of non-stationary policies; however, as we will show, stationary policies suffice, since SE exist in stationary policies.}

In this paper, we study a Markov perfect refinement of SE, which we call \mydef{recursive Stackelberg equilibrium} (\recSE{}).

\begin{definition}[Recursive Stackelberg Equilibrium (\recSE{})]\label{lemma:stackelberg_action_value}
Given $\initgame$, a policy profile $(\policy[\outer]^*, \policy[\inner]^*) \in \states^\outerset \times \states^\innerset$ is a \mydef{recursive Stackelberg equilibrium} (\recSE{}) iff, for all $\state \in \states$, it holds that:
{\small
$$
 \max_{\inner \in \innerset : \constr(\state, \policy[\outer]^*(\state), \inner) \geq \zeros}  \actionvalue[][{\policy[\outer]^*} ][{\policy[\inner]^*}](\state, \policy[\outer]^*(\state), \inner) \leq 
\actionvalue[][{\policy[\outer]^*}][ {\policy[\inner]^*}](\state, \policy[\outer]^*(\outer), \policy[\inner]^*(\inner))
\leq \min_{\outer \in \outerset} \max_{\inner \in \innerset : \constr(\state, \outer, \inner) \geq \zeros} \actionvalue[][{\policy[\outer]^*}][ {\policy[\inner]^*}](\state, \outer, \inner).
$$
}
\end{definition}

Equivalently, a policy profile $(\policy[\outer]^*, \policy[\inner]^*)$ is a \recSE{} if $(\policy[\outer]^*(\state), \policy[\inner]^*(\state))$ is a SE with value $\statevalue[][{\policy[\outer]^*}][ {\policy[\inner]^*}](\state)$ at each state $\state \in \states$:
i.e., $\statevalue[][{\policy[\outer]^*}][ {\policy[\inner]^*}](\state) = \min_{\outer \in \outerset} \max_{\inner \in \innerset : \constr(\state, \outer, \inner) \geq \zeros} \actionvalue[][{\policy[\outer]^*}][ {\policy[\inner]^*}](\state, \outer, \inner)$, for all 
$\state \in \states$.

\paragraph{Mathematical Preliminaries}
A probability measure $q_1 \in \simplex(\states)$ \mydef{convex stochastically dominates (CSD)} $q_2 \in \simplex(\states)$ if $\int_\states \statevalue(s) q_1(s) ds \geq \int_\states \statevalue(s) q_2(s) ds$ for all continuous, bounded, and convex functions $\statevalue$ on $S$.
A transition function $\trans$ is termed \mydef{CSD convex} in $\outer$ if, for all $\lambda \in (0,1)$, $\inner \in \innerset$ and any $(\state^\prime, \outer^\prime), (\state^\dagger, \outer^\dagger) \in \states \times \outerset$, with $(\state, \outer) = \lambda (\state^\prime, \outer^\prime) + (1-\lambda) (\state^\dagger, \outer^\dagger)$, it holds that $\lambda \trans(\cdot \mid \state^\prime, \outer^\prime, \inner) + (1-\lambda) \trans(\cdot \mid \state^\dagger, \outer^\dagger, \inner)$ CSD $\trans(\cdot \mid \state, \outer, \inner)$.
A transition function $\trans$ is termed \mydef{CSD concave} in $\inner$ if, for all $\lambda \in (0,1)$ and any $(\state^\prime, \inner^\prime), (\state^\dagger, \inner^\dagger) \in  \states \times \outerset \times \innerset$, with $(\state, \inner) = \lambda (\state^\prime,  \inner^\prime) + (1-\lambda) (\state^\dagger, \inner^\dagger)$, it holds that $\trans(\cdot \mid \state, \outer, \inner)$ CSD $\lambda \trans(\cdot \mid \state^\prime, \outer, \inner^\prime) + (1-\lambda) \trans(\cdot \mid \state^\dagger, \outer, \inner^\dagger)$.
A mapping $L : \calA \to \calB$ is said to be a \mydef{contraction mapping} (resp.{} \mydef{non-expansion}) w.r.t.\ norm $\left\| \cdot \right\|$ iff for all $\x, \y \in \calA$, and for $k \in [0,1)$ (resp.{} $k = 1$) such that $\left\| L(\x) - L(\y) \right\| \leq k \left\| \x - \y \right\|$.
The \mydef{min-max operator} $\min_{\outer \in \outerset} \max_{\inner \in \innerset}: \R^{\outerset \times \innerset} \to \R$ w.r.t. to sets $\outerset, \innerset$ takes as input a function $\obj: \outerset \times \innerset \to \R$ and outputs $\min_{\outer \in \outerset} \max_{\inner \in \innerset} \obj(\outer, \inner)$.
The \mydef{generalized min-max operator} $\min_{\outer \in \outerset} \max_{\inner \in \innerset: \constr(\outer, \inner) \geq \zeros}: \R^{\outerset \times \innerset} \to \R$ w.r.t. to sets $\outerset, \innerset$ and the function $\constr: \outerset \times \innerset \to \R$ takes as input a function $\obj: \outerset \times \innerset \to \R$ and outputs $\min_{\outer \in \outerset} \max_{\inner \in \innerset: \constr(\outer, \inner) \geq \zeros} \obj(\outer, \inner)$.

\section{Properties of Recursive Stackelberg equilibrium}
\label{sec:properties}

In this section, we show that a \recSE{} exists in all zero-sum stochastic Stackelberg games.%
\footnote{All omitted results and proofs can be found in the appendix.}
To do so, we first associate an operator $\bellopt: \statevalfuncs \to \statevalfuncs$ with any zero-sum stochastic Stackelberg game $\initgame$, the fixed points of which we show satisfy \Cref{lemma:stackelberg_action_value}, and hence are necessary and sufficient to
characterize \deni{Is necessary and sufficient redundant here? I don't feel it is?}\amy{i do think it is redundant. i think that is exactly what characterization means. but i am also fine with saying this twice}
the value function associated with a \recSE{} of $\initgame$.
We then show that this operator is a contraction mapping, thereby establishing the existence of such a fixed point.
Together these results generalize \citeauthor{shapley1953stochastic}'s theorem on the existence of Markov perfect Nash equilibria in zero-sum stochastic games \cite{shapley1953stochastic}.

Given a zero-sum stochastic Stackelberg game $\initgame$, define $\bellopt: \statevalfuncs \to \statevalfuncs$ as the operator $\left(\bellopt \statevalue[] \right) (\state) = \min_{\outer \in \outerset} \max_{\inner \in \innerset : \constr(\state, \outer, \inner)\geq \zeros} \mathop{\E}_{\staterv^\prime \sim \trans(\cdot \mid \state, \outer, \inner)} \left[ \reward(\state, \outer, \inner) +  \discount \statevalue[](\staterv^\prime)  \right]$. 
We call the set of $\numstates$ equations $\left(\bellopt \statevalue[] \right) (\state) =  \statevalue[](\state)$ with $\statevalue[]$ unknown the \mydef{Bellman equations} for $\initgame$.
The next theorem shows that these Bellman equations
characterize the state-value function associated with a \recSE.
Consequently, the solution to a zero-sum stochastic Stackelberg game can be described by its Bellman equations.

 

\begin{restatable}{theorem}{thmfpbelloptisstackelberg}
\label{thm:fp_bellopt_is_stackelberg}
Given $\initgame$, the policy profile $(\policy[\outer]^*, \policy[\inner]^*)$ is a \recSE{} iff it induces a value function which is a fixed point of $\bellopt$: i.e., $(\policy[\outer]^*, \policy[\inner]^*)$ is a \recSE{} iff, for all $\state \in \states,$ $\left(\bellopt \statevalue[][{\policy[\outer]^*}][{\policy[\inner]^*}] \right) (\state) = \statevalue[][{\policy[\outer]^*}][{\policy[\inner]^*}](\state)$.
\end{restatable}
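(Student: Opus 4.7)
The plan is to reduce the theorem to two ingredients: (i) the ``equivalent'' formulation of \recSE{} noted immediately after \Cref{lemma:stackelberg_action_value}, namely that
$\statevalue[][{\policy[\outer]^*}][{\policy[\inner]^*}](\state) = \min_{\outer \in \outerset} \max_{\inner \in \innerset : \constr(\state, \outer, \inner) \geq \zeros} \actionvalue[][{\policy[\outer]^*}][{\policy[\inner]^*}](\state, \outer, \inner)$ for every $\state \in \states$; and (ii) the Bellman consistency identity for the $Q$-function under any stationary policy profile $(\policy[\outer], \policy[\inner])$:
\[ \actionvalue[][{\policy[\outer]}][{\policy[\inner]}](\state, \outer, \inner) \;=\; \mathop{\E}_{\staterv^\prime \sim \trans(\cdot \mid \state, \outer, \inner)} \bigl[\reward(\state, \outer, \inner) + \discount \, \statevalue[][{\policy[\outer]}][{\policy[\inner]}](\staterv^\prime)\bigr]. \]
Given (i) and (ii), substituting (ii) into the right-hand side of (i) produces exactly the fixed-point equation $\statevalue[][{\policy[\outer]^*}][{\policy[\inner]^*}](\state) = (\bellopt \, \statevalue[][{\policy[\outer]^*}][{\policy[\inner]^*}])(\state)$; running the substitution in reverse covers the other direction.

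For (i), I would observe that the two inequalities in \Cref{lemma:stackelberg_action_value} must both collapse to equalities. The middle term equals $\statevalue[][{\policy[\outer]^*}][{\policy[\inner]^*}](\state)$ by the routine identity $\actionvalue[][{\policy[\outer]}][{\policy[\inner]}](\state, \policy[\outer](\state), \policy[\inner](\state)) = \statevalue[][{\policy[\outer]}][{\policy[\inner]}](\state)$. Since $(\policy[\outer]^*, \policy[\inner]^*)$ is feasible, $\policy[\inner]^*(\state)$ lies in the feasible set of the left-hand max, forcing that max to equal the middle term. Restricting the outer min in the right-hand expression to the choice $\outer = \policy[\outer]^*(\state)$ upper bounds it by the middle term, forcing the second inequality to also be tight.

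For (ii), I would expand the definition of $\actionvalue[][{\policy[\outer]}][{\policy[\inner]}](\state, \outer, \inner)$, split off the $\iter = 0$ summand (which is the deterministic quantity $\reward(\state, \outer, \inner)$ because $\staterv[0], \outerrv[0], \innerrv[0]$ are fixed to $\state, \outer, \inner$), and factor $\discount$ out of the remaining sum. Boundedness of $\reward$ combined with $\discount \in (0,1)$ makes the series absolutely convergent, permitting interchange of sum and expectation via dominated convergence. Conditioning on $\staterv[1]$ and invoking the Markov property of $\trans$ together with stationarity of $(\policy[\outer], \policy[\inner])$ identifies the conditional expectation of the tail starting from $\staterv[1]$ with $\statevalue[][{\policy[\outer]}][{\policy[\inner]}](\staterv[1])$, which yields the claim.

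The main obstacle is step (ii): in continuous state-action spaces this requires some care with measurability of $\statevalue[][{\policy[\outer]}][{\policy[\inner]}]$ and with the existence of a well-defined regular conditional distribution for the trajectory given the first step. Once that lemma is in hand, the theorem reduces to a one-line substitution that is symmetric in its two directions.
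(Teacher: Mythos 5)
Your proposal follows essentially the same route as the paper's proof: the paper likewise isolates the Bellman consistency identity for the action-value function as a separate result (your step (ii)), combines it with the min-max characterization of \recSE{} stated immediately after \Cref{lemma:stackelberg_action_value} (your step (i)), and reads off the fixed-point equation by substitution. Your derivation of step (i) in the direction ``\recSE{} $\implies$ value identity'' is in fact more explicit than the paper's, which simply asserts the ``equivalent'' characterization.

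The genuine weak point---shared by the paper's own one-direction-plus-symmetry proof, but worth naming because you explicitly lean on it---is the claim that ``running the substitution in reverse covers the other direction.'' Reversing the substitution only recovers the value identity $\statevalue[][{\policy[\outer]^*}][{\policy[\inner]^*}](\state) = \min_{\outer \in \outerset} \max_{\inner \in \innerset : \constr(\state, \outer, \inner) \geq \zeros} \actionvalue[][{\policy[\outer]^*}][{\policy[\inner]^*}](\state, \outer, \inner)$; it does not recover the two inequalities in \Cref{lemma:stackelberg_action_value}. The middle term of that definition always equals $\statevalue[][{\policy[\outer]^*}][{\policy[\inner]^*}](\state)$ and, by feasibility of $\policy[\inner]^*(\state)$, is bounded \emph{above} by $\max_{\inner \in \innerset : \constr(\state, \policy[\outer]^*(\state), \inner) \geq \zeros} \actionvalue[][{\policy[\outer]^*}][{\policy[\inner]^*}](\state, \policy[\outer]^*(\state), \inner)$, which is the \emph{reverse} of the first required inequality; nothing in the value identity forces $\policy[\inner]^*(\state)$ to attain the inner maximum at $\policy[\outer]^*(\state)$, nor $\policy[\outer]^*(\state)$ to attain the outer minimum. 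A profile can induce the fixed-point value without its actions being greedy with respect to it (one can already arrange this in a one-state example where the chosen action pair achieves the min-max \emph{value} of the stage payoff but the inner action is not a best response to the outer one). Closing the direction ``fixed point $\implies$ \recSE{}'' therefore needs an additional argument---e.g., a one-step deviation argument showing that a non-best-responding $\policy[\inner]^*(\state)$ would contradict the fixed-point equation at some state, or an explicit restriction to profiles that are greedy with respect to their own value function---rather than a symmetric reversal of the substitution.
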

The following technical lemma is crucial to proving that $\bellopt$ is a contraction mapping.
It tells us that the generalized min-max operator is non-expansive; in other words, the generalized min-max operator is 1-Lipschitz w.r.t.\ the sup-norm.

\begin{restatable}{lemma}{lemmaminmaxlipschitz}
\label{lemma:minmax_lipschitz}
Suppose that $f, h: \outerset \times \innerset \to \R$, $\constr: \outerset \times \innerset \to \R^\numconstrs$ are continuous functions, and $\outerset, \innerset$ are compact sets.
Then
$\left| \min_{\outer \in \outerset} \max_{\inner \in \innerset : \constr(\outer, \inner) \geq \zeros} f(\outer, \inner) - \min_{\outer \in \outerset} \max_{\inner \in \innerset : \constr(\outer, \inner) \geq \zeros} h(\outer, \inner) \right|$ \\
$\leq \max_{(\outer, \inner) \in \outerset \times \innerset } \left| f(\outer, \inner) -  h(\outer, \inner) \right|$.
\end{restatable}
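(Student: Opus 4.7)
The plan is to prove the inequality by the standard two-sided sup-norm domination argument, with the one caveat that the inner maximization is over a constraint-dependent set which must be handled carefully.

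Set $M \doteq \max_{(\outer,\inner)\in\outerset\times\innerset}|f(\outer,\inner)-h(\outer,\inner)|$; this maximum is attained because $f,h$ are continuous on the compact set $\outerset\times\innerset$. I would first argue that both generalized min-max values are attained and finite. For each fixed $\outer\in\outerset$, the feasible inner set $\innerset(\outer)\doteq\{\inner\in\innerset:\constr(\outer,\inner)\ge\zeros\}$ is the intersection of the compact set $\innerset$ with the closed set $\{\constr(\outer,\cdot)\ge\zeros\}$, hence compact, and nonempty by \Cref{assum:main} (or the analogous assumption stated in the lemma's hypotheses, since the result is stated abstractly). Continuity of $f,h$ on this compact set then guarantees that $\max_{\inner\in\innerset(\outer)}f(\outer,\inner)$ and $\max_{\inner\in\innerset(\outer)}h(\outer,\inner)$ are attained. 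A Berge-type argument shows these values are continuous in $\outer$, so the outer minimum over the compact set $\outerset$ is also attained.

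The main step is then the one-sided bound. From $f(\outer,\inner)\le h(\outer,\inner)+M$ pointwise, I would conclude that for every fixed $\outer\in\outerset$,
\[
\max_{\inner\in\innerset(\outer)} f(\outer,\inner) \;\le\; \max_{\inner\in\innerset(\outer)}\bigl(h(\outer,\inner)+M\bigr) \;=\; \max_{\inner\in\innerset(\outer)} h(\outer,\inner)+M,
\]
where the critical observation is that the constraint set $\innerset(\outer)$ is identical on both sides (it depends on $\constr$ and $\outer$ alone, not on the objective). Taking the minimum over $\outer\in\outerset$ on both sides gives
\[
\min_{\outer\in\outerset}\max_{\inner\in\innerset(\outer)} f \;\le\; \min_{\outer\in\outerset}\max_{\inner\in\innerset(\outer)} h + M,
\]
and by symmetry (swap the roles of $f$ and $h$), the reverse inequality. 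Combining the two yields the absolute-value bound in the statement.

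The only subtlety I anticipate is verifying that the feasible set $\innerset(\outer)$ is well-behaved enough to support the interchange above; this is immediate here because the constraint does not involve the objective, so the pointwise inequality $f\le h+M$ lifts unchanged to the supremum over the same set. Attainment of the min over $\outer$ is not strictly necessary for the proof (working with infima suffices), but it streamlines presentation; hence no delicate obstruction arises.
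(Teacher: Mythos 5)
Your proof is correct, and it takes a slightly different route from the paper's. The paper argues pointwise: it fixes Stackelberg equilibria $(\outer^*, \inner^*)$ and $(\outer^\prime, \inner^\prime)$ of the two problems, introduces the auxiliary point $\bar{\inner} \in \argmax_{\inner \in \innerset: \constr(\outer^\prime, \inner) \geq \zeros} f(\outer^\prime, \inner)$, assumes without loss of generality that the $f$-value exceeds the $h$-value, and chains inequalities through these specific points to reach $|f(\outer^\prime, \bar{\inner}) - h(\outer^\prime, \bar{\inner})| \leq \max_{(\outer,\inner)}|f - h|$. You instead use the uniform domination $f \leq h + M$ and push it through the max and the min, exploiting monotonicity of both operators and translation by the constant $M$. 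The two arguments rest on the same essential observation---the feasible set $\innerset(\outer)$ depends only on $\constr$ and $\outer$, not on the objective, so the same set appears on both sides of each comparison---but your version avoids the explicit case split on which value is larger (symmetry in $f$ and $h$ handles it once) and does not require attainment of any of the optima (infima and suprema suffice), whereas the paper's version makes the witnessing points explicit at the cost of a sign-dependent chain of absolute-value inequalities. Your remark that the Berge-type continuity and attainment claims are dispensable is accurate; they are not needed for the bound itself.
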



\begin{restatable}{theorem}{thmcontractionmapping}\label{thm:contraction_mapping}
Consider the operator $\bellopt$ associated with a zero-sum stochastic Stackelberg game $\initgame$. Under \Cref{assum:main}, $\bellopt$ is a contraction mapping w.r.t.\ to the sup norm $\left\| . \right\|_{\infty}$ with 
constant $\gamma$.
\end{restatable}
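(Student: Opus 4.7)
My plan is to exploit the scalar factor $\discount$ in front of the expected continuation value and feed this into the non-expansiveness property of the generalized min-max operator established in Lemma \ref{lemma:minmax_lipschitz}. This is the standard template for showing Bellman operators contract in sup-norm, with the only novelty being that we have a constrained inner maximization nested inside an outer minimization rather than a single maximization as in classical MDP theory.

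Concretely, I would fix arbitrary $\statevalue[1], \statevalue[2] \in \statevalfuncs$ and an arbitrary state $\state \in \states$, and define for each $i \in \{1,2\}$ the auxiliary function
$$g_i(\outer, \inner) \;:=\; \reward(\state, \outer, \inner) + \discount \mathop{\E}_{\staterv' \sim \trans(\cdot \mid \state, \outer, \inner)}\!\left[\statevalue[i](\staterv')\right],$$
so that $(\bellopt \statevalue[i])(\state) = \min_{\outer \in \outerset} \max_{\inner \in \innerset : \constr(\state, \outer, \inner) \geq \zeros} g_i(\outer, \inner)$. Both $g_1$ and $g_2$ are continuous on $\outerset \times \innerset$ (continuity of $\reward(\state, \cdot, \cdot)$ is given by \Cref{assum:main}, and continuity of the integral term follows from boundedness of $\statevalue[i]$ together with continuity of the transition kernel), and $\outerset, \innerset$ are compact. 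Thus \Cref{lemma:minmax_lipschitz} applies and gives
$$\left| (\bellopt \statevalue[1])(\state) - (\bellopt \statevalue[2])(\state) \right| \;\leq\; \max_{(\outer, \inner) \in \outerset \times \innerset} \left| g_1(\outer, \inner) - g_2(\outer, \inner) \right|.$$

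Next I would bound the pointwise difference. Since the reward terms cancel, $g_1(\outer,\inner) - g_2(\outer,\inner) = \discount \mathop{\E}_{\staterv' \sim \trans(\cdot \mid \state, \outer, \inner)}\!\left[\statevalue[1](\staterv') - \statevalue[2](\staterv')\right]$, and because $\trans(\cdot \mid \state, \outer, \inner)$ is a probability measure, the integrand being pointwise bounded by $\|\statevalue[1] - \statevalue[2]\|_\infty$ yields $|g_1(\outer, \inner) - g_2(\outer, \inner)| \leq \discount \|\statevalue[1] - \statevalue[2]\|_\infty$ uniformly in $(\outer, \inner)$. Chaining with the previous inequality and then taking the supremum over $\state \in \states$ delivers the contraction bound $\|\bellopt \statevalue[1] - \bellopt \statevalue[2]\|_\infty \leq \discount \|\statevalue[1] - \statevalue[2]\|_\infty$, as required; one should also verify that $\bellopt$ maps $\statevalfuncs$ into itself, which is immediate from $\|\reward\|_\infty \leq \rewardbound$ and a geometric series bound.

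The main obstacle, and essentially the only nontrivial point, lies in the lemma itself rather than in this theorem: the contraction step is a one-line consequence once one has non-expansiveness of the constrained min-max. The secondary subtlety is the implicit continuity requirement on $\trans(\cdot \mid \state, \cdot, \cdot)$ needed to justify continuity of $g_i$ in $(\outer, \inner)$; if Assumption \ref{assum:main} does not already encode this (e.g., via weak continuity of the transition kernel), one would need to add it or argue directly with a uniform bound rather than going through the continuous-function form of the lemma.
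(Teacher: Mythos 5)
Your proposal is correct and follows essentially the same route as the paper's proof: apply the non-expansiveness of the generalized min-max operator (Lemma~\ref{lemma:minmax_lipschitz}) to the two action-value functions, cancel the reward terms, and bound the expected difference of the continuation values by $\discount \left\| \statevalue[1] - \statevalue[2]\right\|_\infty$ before taking the supremum over states. Your added remarks on the continuity of the $g_i$ and on $\bellopt$ mapping $\statevalfuncs$ into itself are details the paper leaves implicit, but they do not change the argument.
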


\begin{proof}
We will show that $\bellopt$ is a contraction mapping.
The result then follows by Banach's fixed point theorem \cite{banach1922operations}.
Let $\statevalue[], {\statevalue[]}^\prime \in \statevalfuncs$ be any two state value functions, and let $\actionvalue, {\actionvalue}^\prime \in \actionvalfuncs$ be the associated action-value functions, respectively.
Then, it holds that
\begin{align}
    &\left\| \bellopt \statevalue[] - \bellopt {\statevalue[]}^\prime \right\|_{\infty} \nonumber \\
    &= \max_{\state \in \states} \left|\min_{\outer \in \outerset} \max_{\inner \in \innerset : \constr(\state, \outer, \inner) \geq  \zeros} \actionvalue(\state, \outer, \inner) - \min_{\outer \in \outerset} \max_{\inner \in \innerset : \constr(\state, \outer, \inner) \geq  \zeros} {\actionvalue}^{\prime}(\state, \outer, \inner) \right| \nonumber \\
    &\leq \max_{\state \in \states} \max_{(\outer, \inner) \in \outerset \times \innerset} \left| \actionvalue(\state, \outer, \inner) - {\actionvalue}^{\prime}(\state, \outer, \inner) \right| \label{eq:dist_min_maxes} \\
    &= \max_{\state \in \states} \max_{(\outer, \inner) \in \outerset \times \innerset} \left|\mathop{\E}_{\staterv^\prime \sim \trans(\cdot \mid \state, \outer, \inner)} \left[ \reward(\state, \outer, \inner) +  \discount \statevalue[](\staterv^\prime)  \right] - \mathop{\E}_{\staterv^\prime \sim \trans(\cdot \mid \state, \outer, \inner)} \left[ \reward(\state, \outer, \inner) +  \discount {\statevalue[]}^\prime(\staterv^\prime)  \right] \right| \nonumber \\
    &=\discount \max_{\state \in \states} \max_{(\outer, \inner) \in \outerset \times \innerset} \left|\mathop{\E}_{\staterv^\prime \sim \trans(\cdot \mid \state, \outer, \inner)} \left[ \statevalue[](\staterv^\prime)  -  {\statevalue[]}^\prime(\staterv^\prime)  \right] \right| \nonumber \\
    &\leq\discount \max_{\state \in \states} \max_{(\outer, \inner) \in \outerset \times \innerset} \max_{\state^\prime \in \states} \left| \statevalue[](\state^\prime)  -  {\statevalue[]}^\prime(\state^\prime) \right| \label{eq:exp_less_than_max} \\
    &= \discount \max_{\state \in \states} \max_{(\outer, \inner) \in \outerset \times \innerset} \left| \statevalue[](\state)  -  {\statevalue[]}^\prime(\state) \right| \nonumber \\
    &= \discount \left\|\statevalue[] - {\statevalue[]}^\prime \right\|_\infty \enspace , \nonumber
\end{align}

\noindent
where \Cref{eq:dist_min_maxes} follows by \Cref{lemma:minmax_lipschitz} and \Cref{eq:exp_less_than_max} follows from the fact that the expectation of a random variable is less than or equal to its maximum value.
\end{proof}

Given an initial state-value function $\statevalue[0] \in \statevalfuncs$, we define the \mydef{value iteration} process  as $\statevalue[\iter+1] = \bellopt \statevalue[\iter]$, for all $\iter \in \N_+$ (\Cref{alg:value_iter}).
One way to interpret $\statevalue[\iter]$ is as the function that returns the value $\statevalue[\iter](\state)$ of each state $\state \in \states$ in the $\iter$-stage zero-sum stochastic Stackelberg game starting at the last stage $\iter$ and proceeding backwards to stage $0$, with terminal payoffs given by $\statevalue[0]$.
The following theorem, which is a consequence of Theorems~\ref{thm:fp_bellopt_is_stackelberg} and~\ref{thm:contraction_mapping}, and the Banach fixed point theorem \cite{banach1922operations}, not only proves the existence of a \recSE{}, but further provides us with a means of computing a \recSE{} via value iteration. 

\begin{restatable}{theorem}{thmexistencestackelberg}
\label{thm:existence_stackelberg}
Consider a zero-sum stochastic Stackelberg game $\initgame$.
Under \Cref{assum:main}, $\initgame$ has
a unique
value function $\statevalue[][{\policy[\outer]^*}][{\policy[\inner]^*}]$ associated with all \recSE{} $(\policy[\outer]^*, \policy[\inner]^*)$,
which can be computed by iteratively applying $\bellopt$ to
any initial state-value function $\statevalue[0] \in \statevalfuncs$: i.e., $\lim_{\iter \to \infty} \statevalue[\iter] = \statevalue[][{\policy[\outer]^*}][{\policy[\inner]^*}]$.
\end{restatable}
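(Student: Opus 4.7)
The plan is to combine \Cref{thm:fp_bellopt_is_stackelberg} and \Cref{thm:contraction_mapping} with Banach's fixed-point theorem to isolate a unique fixed-point value, and then to construct a stationary policy profile that realises this value and is therefore a \recSE.

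\textbf{Step 1 (unique fixed point and value iteration).} First I would observe that $\statevalfuncs$ under the sup norm is a complete metric space, since it is a closed and bounded subset of the Banach space of bounded real-valued functions on $\states$. By \Cref{thm:contraction_mapping}, $\bellopt : \statevalfuncs \to \statevalfuncs$ is a $\discount$-contraction on it. Banach's fixed-point theorem therefore produces a unique $\statevalue^* \in \statevalfuncs$ with $\bellopt \statevalue^* = \statevalue^*$, and guarantees that the value-iteration sequence $\statevalue[\iter+1] = \bellopt \statevalue[\iter]$ converges in sup norm to $\statevalue^*$ from any initialiser $\statevalue[0] \in \statevalfuncs$. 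This already delivers the claimed convergence of value iteration, modulo identifying $\statevalue^*$ with a \recSE{} value.

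\textbf{Step 2 (constructing a \recSE{} from $\statevalue^*$).} Next I would construct a \recSE{} whose induced state-value function is $\statevalue^*$. Define $\actionvalue^*(\state, \outer, \inner) \doteq \mathop{\E}_{\staterv^\prime \sim \trans(\cdot \mid \state, \outer, \inner)}\bigl[\reward(\state, \outer, \inner) + \discount \statevalue^*(\staterv^\prime)\bigr]$ and, for each $\state \in \states$, select
$$\policy[\outer]^*(\state) \in \argmin_{\outer \in \outerset} \max_{\inner \in \innerset : \constr(\state, \outer, \inner) \geq \zeros} \actionvalue^*(\state, \outer, \inner), \quad \policy[\inner]^*(\state) \in \argmax_{\inner \in \innerset : \constr(\state, \policy[\outer]^*(\state), \inner) \geq \zeros} \actionvalue^*(\state, \policy[\outer]^*(\state), \inner).$$
Existence of these selectors follows from compactness of $\outerset, \innerset$ in \Cref{assum:main}, continuity of $\reward$ and $\constr$, and Berge's maximum theorem applied to the feasibility correspondence $\outer \mapsto \{\inner \in \innerset : \constr(\state, \outer, \inner) \geq \zeros\}$, which is non-empty and continuous under \Cref{assum:main} and yields upper semicontinuity of the inner value in $\outer$. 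By construction, $\actionvalue^*(\state, \policy[\outer]^*(\state), \policy[\inner]^*(\state)) = (\bellopt \statevalue^*)(\state) = \statevalue^*(\state)$ for every $\state \in \states$.

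\textbf{Step 3 (identifying $\statevalue[][{\policy[\outer]^*}][{\policy[\inner]^*}] = \statevalue^*$ and wrapping up).} The previous identity is exactly the policy-evaluation Bellman equation for the stationary profile $(\policy[\outer]^*, \policy[\inner]^*)$. The associated policy-evaluation operator is itself a $\discount$-contraction on $\statevalfuncs$ (by essentially the same argument as in the proof of \Cref{thm:contraction_mapping}, with the min-max layer removed), so it admits a unique fixed point, which by definition equals $\statevalue[][{\policy[\outer]^*}][{\policy[\inner]^*}]$; hence $\statevalue[][{\policy[\outer]^*}][{\policy[\inner]^*}] = \statevalue^*$. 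Because this induced value is a fixed point of $\bellopt$, \Cref{thm:fp_bellopt_is_stackelberg} certifies that $(\policy[\outer]^*, \policy[\inner]^*)$ is a \recSE{}. Uniqueness of the value function across \emph{all} \recSE{} is then immediate: \Cref{thm:fp_bellopt_is_stackelberg} forces any \recSE's induced value to be a fixed point of $\bellopt$, and Banach uniqueness forces it to coincide with $\statevalue^*$.

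The hard part is the selector step: when $\states$ is uncountable, measurability of $\policy[\outer]^*$ and $\policy[\inner]^*$ as functions of $\state$ is non-trivial, and I would handle it by invoking a measurable selection theorem (e.g.\ Kuratowski--Ryll-Nardzewski) applied to the upper-hemicontinuous argmin/argmax correspondences supplied by Berge. If one is content with the pointwise notion of stationary policy already implicit in \Cref{lemma:stackelberg_action_value}, this issue dissolves and the unadorned pointwise selection suffices.
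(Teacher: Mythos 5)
Your proposal is correct and follows essentially the same route as the paper's own proof: \Cref{thm:contraction_mapping} together with Banach's fixed-point theorem yields a unique fixed point of $\bellopt$ and convergence of value iteration from any initialiser, and \Cref{thm:fp_bellopt_is_stackelberg} identifies that fixed point with the value function of every \recSE{}. If anything, your Steps 2--3 are more careful than the paper's one-line argument, which passes directly from ``a fixed point of $\bellopt$ exists'' to ``a \recSE{} exists'' without exhibiting a policy profile that induces the fixed point --- the greedy argmin/argmax selection and the policy-evaluation contraction you supply are exactly what is needed to close that step (the paper only introduces the greedy policy profile after this proof, without using it).
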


\begin{remark}
Unlike Shapley's existence theorem for recursive Nash equilibria in zero-sum stochastic games, \Cref{thm:existence_stackelberg} does not require the payoff function to be convex-concave.
The only conditions needed are continuity of the payoffs and constraints, and bounded payoffs.
This makes the \recSE{} a potentially useful solution concept, even for non-convex-non-concave stochastic games. 
\end{remark}


Since a \recSE{} is guaranteed to exist, and is by definition independent of the initial state distribution, we can infer that the \recSE{} of any zero-sum stochastic Stackelberg game $\initgame = (\states, \outerset, \innerset, \initstates, \reward, \constr, \trans, \discount)$ is independent of the initial state distribution $\initstates$.
Hence,
in the remainder of the paper, we denote zero-sum stochastic Stackelberg games by $\game \doteq (\states, \outerset, \innerset, \reward, \constr, \trans, \discount)$. 

\if 0
\sdeni{It seems that a stronger equilibrium concept for zero-sum stochastic Stackelberg games would be one that requires the strategy profile be a \recSE{} of each subgame,
aptly called \mydef{subgame perfect Stackelberg equilibrium}.
A feasible policy profile $(\policy[\outer]^*, \policy[\inner]^*) \in \states^\outerset \times \states^\innerset$ is a subgame perfect Stackelberg equilibrium of a zero-sum stochastic Stackelberg game $\game$ if, for all $\state \in \states, \iter \in \N, \statedist[\iter+1](\state) = \sum_{\state^\prime \in \states} \trans(\state \mid \state^\prime, \policy[\outer]^*(\state^\prime), \policy[\inner]^*(\state^\prime)) \statedist[\iter ](\state^\prime)$, $(\policy[\outer]^*, \policy[\inner]^*)$ is a \recSE{} of $(\states, \outerset, \innerset, \statedist[\iter], \reward, \constr, \trans, \discount)$.
By definition, the set of subgame perfect Stackelberg equilibria of a zero-sum stochastic Stackelberg game is a subset of the set of \recSE{} of the game.
Since the set of \recSE{} is independent of the initial state distribution $\initstates$, the reverse inclusion also holds.
We thus have the following corollary:
}{}
\begin{corollary}
\sdeni{The set of \recSE{} and subgame perfect Stackelberg equilibria coincide in zero-sum stochastic Stackelberg games.}{}
\end{corollary}
\fi

\Cref{thm:existence_stackelberg} tells us that value iteration 
converges to the value function associated with a \recSE.
Additionally, under \Cref{assum:main}, \recSE{} is computable in (weakly) polynomial time.%
\footnote{This convergence is only weakly polynomial time, because the computation of the generalized min-max operator applied to an arbitrary continuous function is an NP-hard problem; it is at least as hard as non-convex optimization.
If, however, we restrict attention to convex-concave stochastic Stackelberg games, then Stackelberg equilibria are computable in polynomial time.}

\begin{restatable}[Convergence of Value Iteration]{theorem}{thmvalueiter}\label{thm:value_iter}
Suppose value iteration is run on input $\game$.
Let $(\policy[\outer]^*, \policy[\inner]^*)$ be \recSE{} of $\game$ with value function $\statevalue[][{\policy[\outer]^*}][{\policy[\inner]^*}]$.
Under \Cref{assum:main}, if we initialize $\statevalue[0] (\state) = 0$, for all $\state \in \states$, then for $k \geq \frac{1}{1 - \discount}  \log \frac{ \rewardbound}{\epsilon(1 - \discount)}$, it holds that
    $\statevalue[k](\state)  - \statevalue[][{\policy[\outer]^*}][{\policy[\inner]^*}](\state) \leq \epsilon$. 
\end{restatable}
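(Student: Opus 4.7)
The plan is to obtain the stated bound as a direct geometric-decay consequence of \Cref{thm:contraction_mapping}, which guarantees that $\bellopt$ is a contraction with modulus $\discount$ in the sup-norm, and of \Cref{thm:existence_stackelberg}, which identifies the unique fixed point of $\bellopt$ as $\statevalue[][{\policy[\outer]^*}][{\policy[\inner]^*}]$. First I would invoke the contraction property iteratively: since $\statevalue[k] = \bellopt \statevalue[k-1]$ and $\statevalue[][{\policy[\outer]^*}][{\policy[\inner]^*}] = \bellopt \statevalue[][{\policy[\outer]^*}][{\policy[\inner]^*}]$, applying the contraction bound $k$ times yields
\[
\left\| \statevalue[k] - \statevalue[][{\policy[\outer]^*}][{\policy[\inner]^*}] \right\|_{\infty} \;\leq\; \discount^{k} \left\| \statevalue[0] - \statevalue[][{\policy[\outer]^*}][{\policy[\inner]^*}] \right\|_{\infty}.
\]

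Next I would bound the initial error. Because every state-value function in $\statevalfuncs$ takes values in $[-\nicefrac{\rewardbound}{1-\discount}, \nicefrac{\rewardbound}{1-\discount}]$ by \Cref{assum:main}, and we initialized $\statevalue[0] \equiv 0$, we immediately get $\left\| \statevalue[0] - \statevalue[][{\policy[\outer]^*}][{\policy[\inner]^*}] \right\|_{\infty} \leq \nicefrac{\rewardbound}{(1-\discount)}$. Combining the two bounds gives, for every $\state \in \states$,
\[
\statevalue[k](\state) - \statevalue[][{\policy[\outer]^*}][{\policy[\inner]^*}](\state) \;\leq\; \frac{\discount^{k} \rewardbound}{1 - \discount}.
\]

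To reach an accuracy of $\epsilon$, it suffices to choose $k$ so that $\discount^{k} \rewardbound / (1-\discount) \leq \epsilon$, equivalently $k \log(1/\discount) \geq \log \bigl(\rewardbound / (\epsilon(1-\discount))\bigr)$. Here I would use the standard elementary inequality $\log(1/\discount) \geq 1 - \discount$, which holds for $\discount \in (0,1)$ by concavity of the logarithm (since $\log \discount \leq \discount - 1$). This immediately yields the sufficient condition
\[
k \;\geq\; \frac{1}{1 - \discount} \log \frac{\rewardbound}{\epsilon(1-\discount)},
\]
completing the argument.

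I do not expect any real obstacle: the only subtlety is remembering that both $\statevalue[0]$ and $\statevalue[][{\policy[\outer]^*}][{\policy[\inner]^*}]$ live in $\statevalfuncs$ to get the initial error bound $\nicefrac{\rewardbound}{1-\discount}$ (rather than, e.g., $\nicefrac{2 \rewardbound}{1-\discount}$), and using the log-inequality cleanly to pass from $1/\log(1/\discount)$ to $1/(1-\discount)$. Both are routine.
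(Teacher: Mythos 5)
Your proposal is correct and follows essentially the same route as the paper's proof: iterate the contraction bound from \Cref{thm:contraction_mapping}, bound the initial error by $\nicefrac{\rewardbound}{1-\discount}$ using $\statevalue[0]\equiv 0$, and pass from $\discount^{k}$ to $e^{-(1-\discount)k}$ via the inequality $\log\discount \leq \discount - 1$ (the paper states it as $1-x \leq e^{-x}$, which is the same fact). No gaps.
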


\section{Subdifferential Envelope Theorems and Optimality Conditions for Recursive Stackelberg Equilibrium}\label{sec:benveniste}

In this section, we derive optimality conditions for recursive Stackelberg equilibria. 
In particular, we provide necessary conditions for a policy profile to be a \recSE{} of any zero-sum stochastic Stackelberg game, and show that under additional convexity assumptions, these conditions are also sufficient. 
\if 0
Using these results, we prove in the next section that recursive market equilibrium \cite{mehra1977recursive}
is an instance of \recSE{} in a large class of stochastic markets. 
\fi

The Benveniste-Scheinkman theorem characterizes the derivative of the optimal value function associated with a recursive optimization problem
w.r.t.\ its parameters, when it is differentiable \cite{Benveniste1979value}.
Our proofs of the necessary and sufficient optimality conditions rely on a novel subdifferential generalization (\Cref{thm:subdiff_bellman}, \Cref{sec_app:benveniste}) of this theorem, which applies even when the optimal value function is not differentiable.
\if 0
\begin{proof}

As $\constr[1], \hdots, \constr[\numconstrs]$ are continuous, and Slater's condition holds,  the constraint correspondence associated with the problem is continuous, non-empty-, and compact-valued. Hence, by \citeauthor{berge1997topological}'s maximum theorem \cite{berge1997topological}, $\marginal(\state, \outer)$ is continuous in $\outer$.
\if 0
    First note that we can re-express the marginal function $\marginal(\state, \outer)$ as follows:
    \begin{align}
        \marginal(\state, \outer) = \max_{\substack{\left\{ \inner[\iter] \in  \innerset: \constr(\state, \outer, \inner[\iter]) \geq \zeros  \right\}_{\iter = 1}^\infty}} \sum_{\iter=0}^\infty \discount^\iter \mathop{\E}_{\state^\prime \sim \trans(\cdot \mid \state, \inner[\iter-1])} \left[\reward(\state^\prime, \outer, \inner[\iter]) \right]
    \end{align}
    Since for all $\numiters \in \N_+$, we have that $\sum_{\iter=0}^\numiters \discount^\iter \mathop{\E}_{\state^\prime \sim \trans(\cdot \mid \state, \outer, \inner[\iter-1])} \left[\reward(\state^\prime, \inner[\iter]) \right] \leq \frac{\rewardbound}{1-\discount}$, and the Bellman operator is a contraction mapping, $\left\{\sum_{\iter=0}^\numiters \discount^\iter \mathop{\E}_{\state^\prime \sim \trans(\cdot \mid \state, \outer, \inner[\iter-1])} \left[\reward(\state^\prime, \inner[\iter]) \right]\right\}_\numiters$ passes the Weierstrass M-test \cite{rudin1991functional} which means that $\left\{\sum_{\iter=0}^\numiters \discount^\iter \mathop{\E}_{\state^\prime \sim \trans(\cdot \mid \state, \outer, \inner[\iter-1])} \left[\reward(\state^\prime, \inner[\iter]) \right]\right\}_\numiters$ is a uniformly converging sequence. Since $\reward$ is bounded and continuous, by the Lebesgue dominated convergence theorem \cite{rudin1991functional}, $\left\{ \mathop{\E}_{\state^\prime \sim \trans(\cdot \mid \state, \outer, \inner[\iter-1])} \left[\reward(\state^\prime, \outer, \inner[\iter]) \right]\right\}_\iter$ is a sequence of continuous functions, hence by the uniform limit theorem \cite{munkres2000topology} $\sum_{\iter=0}^\infty \discount^\iter \mathop{\E}_{\state^\prime \sim \trans(\cdot \mid \state, \outer, \inner[\iter-1])} \left[\reward(\state^\prime, \inner[\iter]) \right]$ is continuous in $\outer$. 
    Additionally,

    Since $\reward(\outer, \inner)$ is concave, and integration preserves concavity (\cite{boyd2004convex}, section 3.2.1), $ \mathop{\E}_{\state^\prime \sim \trans(\cdot \mid \state, \inner[\iter-1])} \left[\reward(\state^\prime, \outer, \inner[\iter]) \right]$ is concave in $\outer$.
    \fi
     Since Slater's condition is satisfied, and the KKT theorem \cite{kuhn1951kkt} applies, which means that for all $\outer \in \outerset$ and $\inner \in \innerset$, the optimal KKT multipliers $\left\{{\langmult^{(\iter)}}^*\right\}_{\iter = 1}^\infty$ exist and are well-defined: i.e, there exists $c \in \R_+$ such that $\forall \numconstr \in [\numconstrs], 0 \leq \langmult[\numconstr]^* \leq c < \infty$, we have:
    \begin{align}
        \statevalue(\outer) 
        &=  \max_{\inner \in  \innerset: \constr(\state, \outer, \inner) \geq \zeros} \min_{ \langmult \in  [0, c]^\numconstrs } \left\{\reward(\state, \outer, \inner) + \discount \mathop{\E}_{\state^\prime \sim \trans(\cdot \mid \state, \outer,  \inner)} \left[\marginal(\state^\prime, \inner) \right]  + \sum_{\numconstr = 1}^\numconstrs \langmult[\numconstr]\constr[\numconstr]( \outer, \inner) \right\}\\
        &=  \max_{\inner \in  \innerset: \constr(\state, \outer, \inner) \geq \zeros} \ \min_{ \langmult \in  [0, c]^\numconstrs } \left\{\reward(\state, \outer, \inner) + \discount \mathop{\E}_{\state^\prime \sim \trans(\cdot \mid \state, \outer,  \inner)} \left[\marginal(\state^\prime, \inner) \right]  +  \sum_{\numconstr = 1}^\numconstrs\langmult[\numconstr]\constr[\numconstr]( \outer, \inner) ) \right\}
    \end{align}
    Additionally, as $\constr[1](\outer, \inner), \hdots, \constr[\numconstrs](\outer, \inner)$ are concave in $\outer$ for all $\inner \in \innerset$ and $\min_{ \langmult \in  [0, c]^\numconstrs } \left\{\reward(\state, \outer, \inner) + \discount \mathop{\E}_{\state^\prime \sim \trans(\cdot \mid \state, \inner)} \left[\marginal(\state^\prime, \inner) \right]  +  \sum_{\numconstr = 1}^\numconstrs\langmult[\numconstr]\constr[\numconstr]( \outer, \inner) ) \right\}$ is the $\min$-projection of $\lang$ w.r.t. $\langmult$ onto the compact set $[0, c]^\numconstrs$, $\statevalue(\state, \outer)$ must be convex in $\outer$ (\cite{rockafellar2009variational}).
\end{proof}
\fi 
A consequence of our subdifferential version of the Benveniste-Scheinkman theorem is that we can easily derive the first-order necessary 
conditions for a policy profile to be a \recSE{} of any zero-sum stochastic Stackelberg game $\game$ satisfying \Cref{assum:main}, under standard regularity conditions.

\begin{restatable}{theorem}{thmnecessaryconditions}
\label{thm:necessary_conditions}
Consider a zero-sum stochastic Stackelberg game $\game$, where $\outerset = \{\outer \in \R^\outerdim \mid \outerconstr[1](\outer)\leq 0, \hdots, \outerconstr[\outernumconstrs](\outer) \leq 0 \}$ and $\innerset = \{\inner \in \R^\innerdim \mid 
\innerconstr[1](\inner)\geq 0, \hdots, \innerconstr[\innernumconstrs](\inner)  \geq 0 \}$ are convex.
Let $\lang[\state, \outer](\inner, \langmult) =  \reward(\state, \outer, \inner) + \discount \mathop{\E}_{\staterv^\prime \sim \trans(\cdot \mid \state, \outer, \inner)} \left[\statevalue(\staterv^\prime, \outer) \right] +  \sum_{\numconstr = 1}^\numconstrs \langmult[\numconstr] \constr[\numconstr](\state, \outer, \inner)$ where $\bellopt \statevalue = \statevalue$.

Suppose that  \Cref{assum:main} holds, and that 
1.~for all $\state \in \states$,
$\max_{\inner \in \innerset: \constr(\state, \outer, \inner) \geq \zeros} \left\{ \reward(\state, \outer, \inner) + \discount \mathop{\E}_{\staterv^\prime \sim \trans(\cdot \mid \state, \outer, \inner)} \left[\statevalue(\staterv^\prime, \outer) \right] \right\}$
is
concave in $\outer$, 
2.~$\grad[\outer] \reward(\state, \outer, \inner), \grad[\outer] \constr[1](\state, \outer, \inner), \ldots, \grad[\outer] \constr[\numconstrs](\state, \outer, \inner)$, $\grad[\inner] \reward(\state, \outer, \inner), \grad[\inner] \constr[1](\state, \outer, \inner), \ldots, \grad[\inner] \constr[\numconstrs](\state, \outer, \inner)$ exist, for all $\state \in \states, \outer \in \outerset, \inner \in \innerset$,
4.~$\trans(\state^\prime \mid \state, \outer, \inner)$ is continuous
and differentiable in $(\outer, \inner)$, and
5.~Slater's condition holds, i.e., $\forall \state \in \states, \outer \in \outerset, \exists \widehat{\inner} \in \innerset$ s.t.\ $\constr[\numconstr](\state, \outer, \widehat{\inner}) > 0$, for all $\numconstr = 1, \ldots, \numconstrs$ and $\innerconstr[j](\widehat{\inner})  > 0$, for all $j = 1, \hdots, \innernumconstrs$, and $\exists \outer \in \R^\outerdim$ s.t. $\outerconstr[\numconstr](\outer) < 0$ for all $\numconstr =1 \hdots, \outernumconstrs$.
Then, there exists $\bm{\mu}^*: \states \to \R_+^\outernumconstrs$, $\langmult^* : \states \times \outerset \to \R_+^\numconstrs$, and $\bm{\nu}^* : \states \times \outerset \to \R_+^\innernumconstrs$ s.t.\ a policy profile $(\policy[\outer]^*, \policy[\inner]^*) \in \outerset^\states \times \innerset^\states$ is a \recSE{} of $\game$ only if it satisfies the following conditions, for all $\state \in \states$:
\begin{align}
    &\grad[\outer] \lang[\state, {\policy[\outer]^*(\state)}](\policy[\inner]^*(\state), \langmult^*(\state, \policy[\outer]^*(\state))) + \sum_{\numconstr = 1}^\outernumconstrs \mu^*_\numconstr(\state) \grad[\outer]\outerconstr[\numconstr](\policy[\outer]^*(\state)) = 0\\
    &\grad[\inner] \lang[\state, {\policy[\outer]^*( \state)}](\policy[\inner]^*(\state), \langmult^*(\state, \policy[\outer]^*(\state))) + \sum_{\numconstr = 1}^\innernumconstrs \nu^*_\numconstr(\state, \policy[\outer]^*(\state)) \grad[\outer]\innerconstr[\numconstr](\policy[\inner]^*(\state)) = 0
\end{align}
\begin{align}
    &\mu^*_\numconstr(\state)\outerconstr[\numconstr](\policy[\outer]^*(\state)) = 0 &\outerconstr[\numconstr](\policy[\outer]^*(\state)) \leq 0 && \forall \numconstr \in [\outernumconstrs]\\
    &\constr[\numconstr](\state, \policy[\outer]^*(\state), \policy[\inner]^*(\state)) \geq 0 &\langmult[\numconstr]^*(\state, \policy[\outer]^*(\state)) \constr[\numconstr](\state, \policy[\outer]^*(\state), \policy[\inner]^*(\state)) = 0  && \forall \numconstr \in [\numconstrs]\\
    &\nu^*_\numconstr(\state, \policy[\outer]^*(\state)) \grad[\outer]\innerconstr[\numconstr](\policy[\inner]^*(\state)) = 0  &\innerconstr[\numconstr](\policy[\outer]^*(\state)) \geq 0 && \forall \numconstr \in [\innernumconstrs]
\end{align}
\end{restatable}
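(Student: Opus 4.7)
The plan is to chain two applications of the Karush-Kuhn-Tucker (KKT) theorem---one for the inner player's constrained maximization at each state, and one for the outer player's minimization of the resulting marginal function---and to use the subdifferential Benveniste-Scheinkman theorem (Theorem~\ref{thm:subdiff_bellman}) to differentiate through the Bellman fixed point that implicitly defines $\statevalue$.

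First I would invoke Theorem~\ref{thm:fp_bellopt_is_stackelberg}: since $(\policy[\outer]^*, \policy[\inner]^*)$ is a \recSE{}, the induced value function $\statevalue$ is a fixed point of $\bellopt$, so at every state $\state \in \states$ the pair $(\policy[\outer]^*(\state), \policy[\inner]^*(\state))$ attains the state-wise min-max
$$\min_{\outer \in \outerset} \max_{\inner \in \innerset :\, \constr(\state, \outer, \inner) \geq \zeros} \left\{ \reward(\state, \outer, \inner) + \discount \mathop{\E}_{\staterv^\prime \sim \trans(\cdot \mid \state, \outer, \inner)}\left[\statevalue(\staterv^\prime)\right] \right\}.$$
Fixing $\state$, the problem reduces to a one-shot constrained Stackelberg program to which classical KKT can be applied in two stages.

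For the inner player I would apply KKT to the maximization over $\inner$ at $(\state, \policy[\outer]^*(\state))$. By hypothesis~5, Slater's condition holds for both $\constr(\state, \policy[\outer]^*(\state), \cdot)$ and $\innerconstr$; by hypothesis~2 these constraints and $\reward(\state, \policy[\outer]^*(\state), \cdot)$ are differentiable in $\inner$; and by hypothesis~4 the expectation $\E[\statevalue(\staterv^\prime)]$ is differentiable in $\inner$ because smoothness of $\trans$ in $\inner$ transfers to the integral without requiring smoothness of $\statevalue$ itself. This yields multipliers $\langmult^*(\state, \policy[\outer]^*(\state)) \in \R_+^{\numconstrs}$ and $\bm\nu^*(\state, \policy[\outer]^*(\state)) \in \R_+^{\innernumconstrs}$ satisfying the inner stationarity equation for $\lang$, primal feasibility of $\constr$ and $\innerconstr$, and the associated complementary-slackness identities---precisely the inner-player conditions listed in the theorem.

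For the outer player I would introduce the marginal $\marginal(\state, \outer) = \max_{\inner \in \innerset :\, \constr(\state, \outer, \inner) \geq \zeros}\{\reward(\state, \outer, \inner) + \discount \E[\statevalue(\staterv^\prime)]\}$, which is continuous in $\outer$ by Berge's maximum theorem and concave in $\outer$ by hypothesis~1, so its subdifferential (in the sense of concave analysis) is non-empty. The envelope step is to show that $\grad[\outer] \lang[\state, \outer](\policy[\inner]^*(\state), \langmult^*(\state, \outer))$, evaluated at the inner KKT pair, lies in this subdifferential; this is exactly Theorem~\ref{thm:subdiff_bellman}, which once again leans on hypothesis~4 to avoid differentiating $\statevalue$. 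Applying KKT to the outer minimization $\min_{\outer :\, \outerconstr(\outer) \leq \zeros}\marginal(\state, \outer)$, with the Slater condition on $\outerconstr$ from hypothesis~5 serving as the constraint qualification, then yields multipliers $\bm\mu^*(\state) \in \R_+^{\outernumconstrs}$ together with the outer stationarity, primal feasibility, and complementary-slackness conditions as stated.

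The main obstacle is the outer envelope step: $\statevalue$ is only implicitly defined as the fixed point of $\bellopt$ and is generically non-differentiable, so naive Danskin-type differentiation fails. The subdifferential generalization of the Benveniste-Scheinkman theorem is tailor-made for this difficulty, pushing the differentiation in $\outer$ onto the smooth primitives $\reward, \constr, \trans$ and exploiting the concavity of $\marginal$ in $\outer$ to guarantee a non-empty subdifferential containing the Lagrangian gradient. Once this envelope identity is in hand, stitching together the inner and outer KKT systems and reading off the claimed stationarity, feasibility, and complementary-slackness equalities is a routine exercise in convex analysis.
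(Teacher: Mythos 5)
Your proposal follows essentially the same route as the paper's proof: reduce to the per-state Stackelberg problem via Theorem~\ref{thm:fp_bellopt_is_stackelberg}, use the subdifferential Benveniste--Scheinkman theorem (Theorem~\ref{thm:subdiff_bellman}) to make the outer first-order condition explicit through the marginal function, and apply KKT separately to the inner maximization and the outer minimization under Slater's condition before combining the two systems. The only cosmetic difference is the order (you treat the inner player first, the paper treats the outer player first), which does not affect the argument.
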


Under the conditions of \Cref{thm:necessary_conditions}, if we additionally assume that
for all $\state \in \states$ and $\outer \in \outerset$, both $\reward(\state, \outer, \inner)$ and $\constr[1](\state, \outer, \inner), \hdots, \constr[\numconstrs](\state, \outer, \inner)$
are concave in $\inner$, and $\trans(\state^\prime \mid \state, \outer, \inner)$ is continuous, CSD concave in $ \inner$, and differentiable in $(\outer, \inner)$, 
\Crefrange{eq:optimality_inner_1}{eq:optimality_inner_5} become necessary \emph{and sufficient} optimality conditions.
For completeness, the reader can find the necessary and sufficient optimality conditions for convex-concave stochastic Stackelberg games under standard regularity conditions in \Cref{thm:necessary_sufficient_conditions} (\Cref{sec_app:benveniste}).
The proof follows exactly as that of \Cref{thm:fp_bellopt_is_stackelberg}.

\section{Recursive Market Equilibrium}
\label{sec:fisher}

We now introduce 
an application of zero-sum stochastic Stackelberg games, which generalizes a well known market model, the Fisher market \cite{brainard2000compute}, to a dynamic setting in which buyers not only participate in markets across time, but their wealth persists. 
A \mydef{(one-shot) Fisher market} consists of $\numbuyers$ buyers and $\numgoods$ divisible goods \cite{brainard2000compute}.
Each buyer $\buyer \in \buyers$ is endowed with a budget $\budget[\buyer] \in \budgetspace[\buyer] \subseteq \mathbb{R}_{+}$ and a utility function $\util[\buyer]: \mathbb{R}_{+}^{\numgoods} \times \typespace[\buyer] \to \mathbb{R}$, which is parameterized by a type $\type[\buyer] \in \typespace[\buyer]$ that defines a preference relation over the consumption space $\R^\numgoods_+$.
Each good is characterized by a supply $\supply[\good] \in \supplyspace[\good] \subset \R_+$.

An instance of a Fisher market is then a tuple $\calM \doteq (\numbuyers, \numgoods, \util, \type, \budget, \supply)$, where $\util = \left\{\util[1], \hdots, \util[\numbuyers] \right\}$ is a set of utility functions, one per buyer,
$\budget \in \R_{+}^{\numbuyers}$ is the vector of buyer budgets, and $\supply \in \R_{+}^{\numgoods}$ is the vector of supplies.
When clear from context, we simply denote $\calM$
by $(\type, \budget, \supply)$.

Given a Fisher market $(\type, \budget, \supply)$, an \mydef{allocation} $\allocation = \left(\allocation[1], \hdots, \allocation[\numbuyers] \right)^T \in \R_+^{\numbuyers \times \numgoods}$ is a map from goods to buyers, represented as a matrix, s.t. $\allocation[\buyer][\good] \ge 0$ denotes the amount of good $\good \in \goods$ allocated to buyer $\buyer \in \buyers$. Goods are assigned \mydef{prices} $\price = \left(\price[1], \hdots, \price[\numgoods] \right)^T \in \mathbb{R}_+^{\numgoods}$. A tuple $(\allocation^*, \price^*)$ is said to be a \mydef{competitive equilibrium (CE)} \cite{arrow-debreu, walras} if 
1.~buyers are utility maximizing, constrained by their budget, i.e., $\forall \buyer \in \buyers, \allocation[\buyer]^* \in \argmax_{\allocation[ ] : \allocation[ ] \cdot \price^* \leq \budget[\buyer]} \util[\buyer](\allocation[ ], \type[\buyer])$;
and 2.~the market clears, i.e., $\forall \good \in \goods,  \price[\good]^* > 0 \Rightarrow \sum_{\buyer \in \buyers} \allocation[\buyer][\good]^* = \supply[\good]$ and $\price[\good]^* = 0 \Rightarrow\sum_{\buyer \in \buyers} \allocation[\buyer][\good]^* \leq \supply[\good]$.


\if 0
A \mydef{stochastic Fisher market} is a dynamic market in which each state corresponds to a static Fisher market: i.e., each state $\state \in \states$ is \sdeni{characterized by}{defined as} a tuple $\state \doteq (\type, \budget, \supply)$.
In each state, the buyers choose their allocations $\allocation = \left(\allocation[1], \hdots, \allocation[\numbuyers] \right)^T \in \R_+^{\numbuyers \times \numgoods}$ and the market determines prices, after which the market terminates with probability $(1-\discount)$, or it moves into a new state $\state^\prime$ with probability $\discount \trans(\state^\prime \mid \state, \allocation)$.
\fi

A \mydef{stochastic Fisher market with savings} is a dynamic market in which each state corresponds to a static Fisher market: i.e., each state $\state \in \states = \typespace \times \budgetspace \times \supplyspace$ is characterized by a tuple $\state \doteq (\type, \budget, \supply)$.
The market is initialized at state $\staterv[0] \sim \initstates$, and for each state $\staterv[\iter] \in \states$ encountered at time step $\iter \in \N_+$, the market determines the prices $\price[][\iter]$ of the goods, while the buyers choose their allocations $\allocation[][][\iter]$ and set aside some \mydef{savings} $\saving[\buyer][\iter] \in [0, \budget[\buyer]]$ to potentially spend at some future state.
Once allocations, savings, and prices have been determined, the market terminates with probability $1-\discount$, or it transitions to a new state $\staterv[\iter + 1] = \state^\prime$ with probability $\discount \trans(\state^\prime \mid \state, \saving)$, depending on the buyers' saving decisions.%
\footnote{In our model, which is consistent with the literature~\cite{romer2012advanced} 1.~prices do not determine the next state since market prices are set by a ``fictional auctioneer,'' not an actual market participant; 2.~allocations do not determine the next state.
Only savings, which are forward-looking decisions, affect future states---budgets, specifically.}
We denote a stochastic Fisher market by $\fishermkt[0] \doteq (\numbuyers, \numgoods, \util, \states, \initstates, \trans, \discount)$.

\if 0
A stochastic Fisher market with savings is denoted by 
$(\states, \util, \budget[][0], \trans, \discount)$.
\fi


\if 0
Given a stochastic Fisher market, a \mydef{recursive competitive equilibrium (recCE)} \cite{mehra1977recursive} is a per-state tuple $(\allocation^*, \price^*) \in \R_+^{\numbuyers \times \numgoods} \times  \R_+^{\numgoods}$, meaning allocation and pricing policies, i.e., mappings from states to allocations and prices, s.t.\ 1) the buyers are expected utility maximizing, constrained by their spending constraints, i.e., for all buyers $\buyer \in \buyers$, $\allocation[\buyer]^*$ is the optimal policy that, for all states $(\type, \budget, \supply) \in \states$, solves the Bellman equation $\budgetval[\buyer](\type, \budget, \supply) =$
$$
    \max_{\allocation[\buyer] \in \R^{\numgoods + 1}_+: \allocation[\buyer] \cdot \price^*(\type, \budget, \supply) \leq \budget[\buyer]} \left\{ \util[\buyer]\left(\allocation[\buyer], \type[\buyer]\right) 
    + \discount \mathop{\E}_{(\type^\prime, \budget^\prime, \supply^\prime) \sim \trans(\cdot \mid \type, \budget, \supply, (\allocation[\buyer], \allocation_{-\buyer}^*(\state))} \left[ \budgetval[\buyer](\type^\prime, \budget^\prime, \supply^\prime) \right] \right\},
$$ 
where $\allocation^*_{-\buyer}$, $\saving^*_{-\buyer}$ denote the competitive equilibrium allocation and saving policies excluding buyer $\buyer$,
and
2) the market clears in each state so that unallocated goods in each state are priced at 0, i.e., for all $\good \in \goods$ and $\state \in \states$,
$    
        \price[\good]^*(\type, \budget, \supply) > 0 \implies \sum_{\buyer \in \buyers} \allocation[\buyer][\good]^*(\type, \budget, \supply) = 1
$ and
$
        \price[\good]^*(\type, \budget, \supply) \geq 0 \implies \sum_{\buyer \in \buyers} \allocation[\buyer][\good]^*(\type, \budget, \supply) \leq 1
$.
\fi

Given a stochastic Fisher market with savings $\fishermkt[0]$
a \mydef{recursive competitive equilibrium (recCE)} \cite{mehra1977recursive} is a tuple $(\allocation^*, \saving^*, \price^*) \in \R_+^{\numbuyers \times \numgoods \times \states} \times \R_+^{\numbuyers \times \states} \times \R_+^{\numgoods \times \states}$, which consists of stationary \mydef{allocation}, \mydef{savings}, and \mydef{pricing policies} s.t.\ 1) the buyers are expected utility maximizers, constrained by their savings and spending constraints, i.e., for all buyers $\buyer \in \buyers$, $(\allocation[\buyer]^*, \saving[\buyer]^*)$ is an optimal policy that, for all states $\state \doteq (\type, \budget, \supply) \in \states$, solves the \mydef{consumption-savings problem}, defined by the following Bellman equations: for all $\state \in \states$, $\budgetval[\buyer](\state) =$
$$
    \max_{(\allocation[\buyer], \saving[\buyer]) \in \R^{\numgoods + 1}_+: \allocation[\buyer] \cdot \price^*(\state) + \saving[\buyer] \leq \budget[\buyer]} \left\{ \util[\buyer]\left(\allocation[\buyer], \type[\buyer]\right) 
    + \discount \mathop{\E}_{(\type^\prime, \budget^\prime, \supply^\prime) \sim \trans(\cdot \mid \state, (\allocation[\buyer], \allocation_{-\buyer}^*(\state)), (\saving[\buyer], \saving_{-\buyer}^*(\state)))} \left[ \budgetval[\buyer](\type^\prime, \budget^\prime + \saving[\buyer], \supply^\prime) \right] \right\},
$$ 
where $\allocation^*_{-\buyer}$, $\saving^*_{-\buyer}$ denote the 
allocation and saving policies excluding buyer $\buyer$; and
2) the market clears in each state so that unallocated goods in each state are priced at 0, i.e., for all $\good \in \goods$ and $\state \in \states$,
$    
        \price[\good]^*(\state) > 0 \implies \sum_{\buyer \in \buyers} \allocation[\buyer][\good]^*(\state) = \supply[\good]
$ and
$
        \price[\good]^*(\state) \geq 0 \implies \sum_{\buyer \in \buyers} \allocation[\buyer][\good]^*(\state) \leq \supply[\good]
$.
A recCE is Markov perfect, as it is a CE regardless of initial  state,%
\footnote{Just as any stochastic game can be (non-compactly) represented as a one-shot game, any stochastic Fisher market with savings can be represented as a one-shot Fisher market comprising the same buyers (with utility functions given by their discounted cumulative expected utility) 
\samy{}{and the same goods\deni{saying the same goods is confusing because in the new market you think of the goods at each time step as being seperate good, I like the previous version better.}, enhanced with time-stamps.}
It has been shown that competitive equilibria exist in such markets\cite{prescott1972note}.
In particular, a recCE is a CE in this market; the time stamps ensure that the market clears at every time step, as required.}
i.e., buyers are allocated expected discounted cumulative utility-maximizing goods regardless of initial state, and the aggregate demand for each good is equal to its aggregate supply \emph{at all states}.
\amy{why does the qualifier ``starting from any state'', or ``regardless of initial conditions'' apply only to the first half of this sentence and not the second?}\deni{Because the market clearance condition is more robust and holds in every time period. this is because the seller is myopic at recCE; only the buyers are long-term optimizers.}


The following theorem establishes that we can build a zero-sum stochastic Stackelberg game whose \recSE{} correspond to the set of recCE of any stochastic Fisher market with savings.
Since \recSE{} are guaranteed to exist, recCE are also guaranteed to exist.
As \recSE, and hence recCE, are independent of the initial market state, we 
denote stochastic Fisher markets with savings by $\fishermkt \doteq (\numbuyers, \numgoods, \util, \states, \trans, \discount)$.

\begin{restatable}{theorem}{thmfishermarketrecursiveeqm}
\label{thm:fisher_market_recursive_eqm}
A stochastic Fisher market with savings $\fishermkt$
in which $\util$ is a set of continuous and homogeneous utility functions and the transition function is continuous in $\saving[\buyer]$ has at least one recCE.
Further, the \recSE{}
that solves the following Bellman equations corresponds to a recCE of $\fishermkt$: 
\begin{align} \label{eq:fisher_bellman}
    \forall \state \in \states, \statevalue[][](\state) = \min_{\price \in \R_+^\numgoods} \max_{(\allocation, \saving) \in \R_+^{\numbuyers \times (\numgoods + 1)} : \allocation \price + \saving \leq \budget} \sum_{\good \in \goods} \supply[\good] \price[\good] 
    + \sum_{\buyer \in \buyers} \left(\budget[\buyer] - \saving[\buyer]\right) \log(\util[\buyer](\allocation[\buyer], \type[\buyer])) \notag\\ 
    + \discount \mathop{\E}_{(\type^\prime, \budget^\prime, \supply^\prime) \sim \trans(\cdot \mid \state, 
    \saving)} \left[\statevalue[][](\type^\prime, \budget^\prime + \saving, \supply^\prime) \right]
\end{align} 
\end{restatable}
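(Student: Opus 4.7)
The plan is to cast the right-hand side of \eqref{eq:fisher_bellman} as the Bellman operator $\bellopt$ of an auxiliary zero-sum stochastic Stackelberg game $\game$ in which the leader (outer player) chooses prices $\price$, and the follower (inner player) chooses allocations and savings $(\allocation, \saving)$ subject to each buyer's budget constraint. The plan is first to invoke \Cref{thm:existence_stackelberg} to extract a recSE of $\game$, and then to verify that the policies at this recSE satisfy the two defining conditions of a recCE: buyer optimality and market clearing at every state.

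To build the game, I take $\outerset$ to be a compact box in $\R_+^\numgoods$ that contains all prices bounded by the maximum total budget across states (larger prices are never best responses), and $\innerset$ a compact subset of $\R_+^{\numbuyers \times (\numgoods+1)}$ that contains all feasible allocations and savings (naturally bounded by the supplies and budgets). The constraint $\constr$ encodes each buyer's budget constraint $\allocation[\buyer] \cdot \price + \saving[\buyer] \leq \budget[\buyer]$, and $\reward$ is the per-step integrand in \eqref{eq:fisher_bellman}. Continuity and boundedness of $\reward$ and $\constr$ follow from continuity and homogeneity of $\util$ (the latter ruling out $\log \util[\buyer] = -\infty$ on the relevant action sets), continuity of $\trans$ in $\saving$, and compactness of $\outerset$ and $\innerset$; feasibility of the follower's constraint is immediate via $\allocation = \zeros$, $\saving = \zeros$. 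This verifies \Cref{assum:main}, so \Cref{thm:existence_stackelberg} yields a recSE $(\price^*, (\allocation^*, \saving^*))$ whose value function $\statevalue$ satisfies \eqref{eq:fisher_bellman}.

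It remains to show that this recSE constitutes a recCE. At any state $\state$, the follower's maximization decouples across buyers, since each buyer's allocation enters only their own utility term and budget constraint. Fixing $\saving[\buyer]$, the allocation subproblem for buyer $\buyer$ reduces (by monotonicity of $\log$ and positivity of $\budget[\buyer] - \saving[\buyer]$) to $\max_{\allocation[\buyer] \cdot \price^*(\state) \leq \budget[\buyer] - \saving[\buyer]} \util[\buyer](\allocation[\buyer], \type[\buyer])$, which is exactly the static utility-maximization of buyer $\buyer$ at prices $\price^*(\state)$ with residual budget $\budget[\buyer] - \saving[\buyer]$. Homogeneity of $\util[\buyer]$ then factors the objective additively into a residual-budget term and a price-dependent indirect-utility term, allowing the savings first-order condition of the game (obtained by applying \Cref{thm:subdiff_bellman} to \eqref{eq:fisher_bellman}) to be matched with the savings first-order condition of the consumption-savings problem in the definition of recCE. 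For market clearing, envelope differentiation of the inner optimum in $\price[\good]$ via \Cref{thm:subdiff_bellman} yields $\supply[\good] - \sum_\buyer \allocation[\buyer][\good]^*(\state)$, and the outer minimization's first-order optimality over $\price \geq \zeros$ then gives exactly the complementary-slackness market-clearing condition $\price[\good]^*(\state) > 0 \Rightarrow \sum_\buyer \allocation[\buyer][\good]^*(\state) = \supply[\good]$.

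The main obstacle is the reconciliation, along the savings dimension, of the game's follower problem---which maximizes $(\budget[\buyer] - \saving[\buyer]) \log \util[\buyer]$ plus an expectation of the game's value function $\statevalue$---with each buyer's consumption-savings problem, which maximizes $\util[\buyer]$ plus an expectation of the buyer's individual value function $\budgetval[\buyer]$. These involve two distinct value functions, and one must relate their envelopes. The key step is to use homogeneity of $\util[\buyer]$ to show that $\statevalue$ inherits a separable structure in each buyer's residual budget, so that the per-buyer marginal value of savings extracted from $\statevalue$ via the subdifferential Benveniste-Scheinkman theorem coincides (up to a monotone reparameterization that preserves the savings argmax) with $\discount \, \partial_{\budget[\buyer]} \budgetval[\buyer]$ evaluated at the next state. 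I expect that explicitly writing out both sets of KKT conditions using \Cref{thm:necessary_conditions} and verifying their equivalence will constitute the bulk of the technical work; once this is done, existence of a recCE follows from the guaranteed existence of a recSE in $\game$.
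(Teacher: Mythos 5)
Your overall architecture matches the paper's: define the auxiliary zero-sum stochastic Stackelberg game whose Bellman equations are \eqref{eq:fisher_bellman}, invoke \Cref{thm:existence_stackelberg} for existence of a \recSE{}, then write out the KKT/first-order conditions via \Cref{thm:necessary_conditions} (and the subdifferential envelope machinery) and match them against the two defining conditions of a recCE. The paper additionally routes the buyer-optimality half through a separate lemma (\Cref{lemma:savings_foc}) characterizing the consumption-savings problem, but that is a packaging difference, not a conceptual one.

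There is, however, a genuine gap at the market-clearing step, and it is precisely the step the theorem's hypotheses are engineered for. You claim that envelope differentiation of the inner optimum in $\price[\good]$ yields $\supply[\good] - \sum_{\buyer} \allocation[\buyer][\good]^*(\state)$. It does not: the price enters the follower's problem only through the budget \emph{constraints} $\allocation[\buyer]\cdot\price + \saving[\buyer] \le \budget[\buyer]$, so the generalized envelope theorem (\Cref{thm:envelope_sd}) gives a derivative of the form $\supply[\good] - \sum_{\buyer} \langmult[\buyer]^*(\state)\, \allocation[\buyer][\good]^*(\state)$, where $\langmult[\buyer]^*$ is the optimal multiplier on buyer $\buyer$'s budget constraint. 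Complementary slackness on $\price[\good]$ therefore gives only a \emph{multiplier-weighted} clearing condition. The paper closes this gap by proving $\langmult[\buyer]^*(\state) = 1$ for every buyer: multiply the allocation stationarity condition by $\allocation[\buyer][\good]^*$, sum over goods, apply Euler's theorem for (degree-one) homogeneous utilities to collapse $\sum_{\good}\allocation[\buyer][\good]^* \,\partial\util[\buyer]/\partial\allocation[\buyer][\good]$ to $\util[\buyer](\allocation[\buyer]^*)$, and use complementary slackness of the budget constraint to cancel $\budget[\buyer]-\saving[\buyer]^*$ against the $(\budget[\buyer]-\saving[\buyer])$ weight in the $\log\util[\buyer]$ term. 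This same identity $\langmult[\buyer]^*=1$ is also what aligns the follower's allocation/savings first-order conditions with each buyer's individual utility-maximization conditions, i.e., it is the bridge between the game's value function $\statevalue$ and the buyer's value function $\budgetval[\buyer]$ that you flag as your ``main obstacle.'' Your proposed route via a ``separable structure'' of $\statevalue$ in each buyer's residual budget is not established and is not needed; the paper instead computes $\partial\statevalue/\partial\budget[\buyer] = \log\util[\buyer](\allocation[\buyer]^*) + \langmult[\buyer]^*$ by the envelope theorem and substitutes $\langmult[\buyer]^*=1$. Without the Euler-theorem computation your argument proves neither market clearing nor buyer optimality, so this is the missing idea rather than a deferred technicality.
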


\amy{the thm states that every recSE is a recCE. are you saying that there may be other recCEs that we do not model/cannot discover?}\deni{Yes, that is correct}

\amy{\begin{remark}
This theorem states that every recSE is a recCE.
But every recCE is not a recSE. \amy{add example!}\deni{The second part of this statement is not necesarrily true, I do not have an example. Happy to think about it another time but this is too time consuming for right now.}
\end{remark}}
 
\begin{remark}
This result cannot be obtained by modifying the Lagrangian formulation,
i.e., the simultaneous-move game form, of the Eisenberg-Gale program, because the inner maximization problem is convex-non-concave, but recursive Nash equilibria 
are guaranteed to exist in zero-sum stochastic games only under the assumption of convex-concave payoffs \cite{jaskiewicz2018non}.
\end{remark}

\section{Experiments}
\label{sec:expts}

The zero-sum stochastic Stackelberg game associated with a stochastic Fisher market, can, in theory, be solved via value iteration (\Cref{alg:value_iter_fisher}), assuming one can perform a one-step backup operation of the corresponding Bellman equations (Equation~\ref{eq:fisher_bellman}), i.e., solve the min-max optimization problem given in Line 4 of \Cref{alg:value_iter_fisher}.
The $\sum_{\buyer \in \buyers} \left(\budget[\buyer] - \saving[\buyer]\right) \log(\util[\buyer](\allocation[\buyer], \type[\buyer]))$ term renders the objective function in this min-max optimization problem convex-\emph{non}-concave.
As a result, known (first-order) methods, e.g., nested gradient descent ascent (GDA) \cite{goktas2021minmax}, are not guaranteed to find a globally optimal solution;
and without a globally optimal solution to the inner optimization problem, our value iteration convergence guarantees (\Cref{thm:existence_stackelberg}) do not hold.
Still, gradient methods have been observed to escape local solutions in many non-convex optimization problems (e.g., \cite{jin2017escape, du2017gradient}), leading us to investigate the extent to which \Cref{alg:value_iter_fisher} can solve stochastic Fisher markets with savings using nested GDA \cite{goktas2021minmax} as a subroutine to solve the inner optimization problem.

\begin{algorithm}[H]
\caption{Value Iteration for Stochastic Fisher Market}
\label{alg:value_iter_fisher}
\begin{algorithmic}[1]
\State Initialize $\statevalue[0]$ arbitrarily, e.g. $\statevalue[0]=\zeros$
\For{$k = 1, \hdots, \numiters_{\statevalue}$}
        \State For all states $\state \in \states$, $\displaystyle
        \statevalue[k+1] (\type, \budget, \supply) = $ \\ $\min\limits_{\price \geq \zeros} \max\limits_{(\allocation, \saving) \geq \zeros: \allocation \price + \saving \leq \budget} \left\{
        \sum\limits_{\good \in \goods} \supply[\good] \price[\good] + \sum\limits_{\buyer \in \buyers} \left(\budget[\buyer] - \saving[\buyer]\right) \log(\util[\buyer](\allocation[\buyer], \type[\buyer])) 
         + \discount \mathop{\E} \left[\statevalue[k][](\type^\prime, \budget^\prime + \saving, \supply^\prime) \right] \right\}$
\EndFor
\end{algorithmic}
\end{algorithm}

We attempted to compute the recursive competitive equilibria of three different classes of stochastic Fisher markets with savings.%
\footnote{Our code can be found \href{\rawcoderepo}{here}, and details of our experimental setup can be found in \Cref{sec_app:experiments}.}
Specifically, we created markets with three classes of utility functions, each of which endowed the state-value function with different smoothness properties.
Let $\type[\buyer] \in \R^\numgoods$ be a vector of parameters, i.e., a \mydef{type}, that describes the utility function of buyer $\buyer \in \buyers$.
We considered the following (standard) utility function classes:
1.~\mydef{linear}: $\util[\buyer](\allocation[\buyer]) = \sum_{\good \in \goods} \type[\buyer][\good] \allocation[\buyer][\good]$; 2.~\mydef{Cobb-Douglas}:  $\util[\buyer](\allocation[\buyer]) = \prod_{\good \in \goods} \allocation[\buyer][\good]^{\type[\buyer][\good]}$; and 3.~\mydef{Leontief}:  $\util[\buyer](\allocation[\buyer]) = \min_{\good \in \goods} \left\{ \frac{\allocation[\buyer][\good]}{\type[\buyer][\good]}\right\}$. 

We ran two different experiments.
First, we modeled a small stochastic Fisher market with savings \emph{without\/} interest rates.
In this setting, buyers' budgets, which are initialized at the start of the game, persist across states, and are replenished by a constant amount with each state transition.
Thus, the buyers' budgets from one state to the next are deterministic.

Second, we modeled a larger stochastic Fisher market with savings and probabilistic interest rates.
In this model, although buyers' savings persist across states, they are nondeterministic, as they increase or decrease based on the random movements of an interest rate with each state transition. 
More specifically, we chose five different equiprobable interest rates (0.9, 1.0, 1.1, 1.2, and 1.5) to provide buyers with more incentive to save as compared to the model without interest rates.

Since budgets are a part of the state space in stochastic Fisher markets, the state space is continuous; so we attempted to estimate the value function using linear regression, with a uniform sample of state and min-max value pairs (e.g., \cite{boyan1994generalization}).
To compute the min-max value at each sampled state, 
we used nested GDA \cite{goktas2021minmax}, which, in this application, repeatedly takes a step of gradient descent on the prices followed by a loop of gradient ascent on the allocations and savings (\Cref{alg:nested_gda_on_qfunc}).
The requisite gradients were computing via auto-differentiation using JAX \cite{jax2018github}, for which we observed better numerical stability than analytically derived gradients, as can often be the case \cite{Griewank2012NumericalStability}.

In both experiments, to check whether the optimal value function was found, we measured the exploitability of the market, meaning the distance between the recCE computed and the actual recCE.
To do so, we checked two conditions:
1) whether each buyer's expected utility was maximized at the computed allocation and savings, at the prices outputted by the algorithm, and 2) whether the market always cleared.
In both settings, we extracted the greedy policy from the value function computed by value iteration, and unrolled it across time to obtain the greedy actions
$({\allocation}^{(\iter)}, {\saving}^{(\iter)},  {\price}^{(\iter)})$ at each state $\state[\iter]$.
We then computed the cumulative utility of these allocation and savings, i.e., for all $\buyer \in \buyers$, $\estutil[\buyer] \doteq \sum_{\iter=0}^{\numiters} \discount^\iter\util[\buyer](\allocation[\buyer]^{(\iter)})$.
We compared these values to the expected maximum utility $\util[i]^*$,
given the prices and the other buyers' allocations 
computed by our algorithm.
We report the normalized distance between these two values, $\estutil[\buyer]$ and $\util[i]^*$, which we call the \mydef{normalized distance to utility maximization (UM)}.
For example, in the case of two buyers, the normalized distance to UM $=\frac{||(\estutil[1],\estutil[2])-(\util[1]^\star,\util[2]^\star)||_2}{||(\util[1]^*,\util[2]^*)||_2}$.
Finally, we also measured excess demand, which we took as the \mydef{distance to market clearance (MC)}, i.e.,  $\frac{1}{\numiters}\sum_{\iter=1}^{ \numiters} ||\sum_{\buyer \in \buyers}\allocation[\buyer]^{(\iter)} - \supply[][\iter]||_2$.

In the experiment with smaller markets and without interest rates, \Cref{fig:avg_values_small} depicts the average value of the value function across a sample of states as it varies with time, and \Cref{table:exploitability_small} records the exploitability of the recCE found by \Cref{alg:value_iter_fisher}with nested GDA.
For all three class of utility functions, not only do the value functions converge, exploitability is also sufficiently minimized, as all the buyer utilities are maximized and the market always clears. 

\begin{figure}[t]
\begin{subfigure}[h]{0.4\linewidth}
\vspace{0cm}
\centering
\includegraphics[width=50mm]{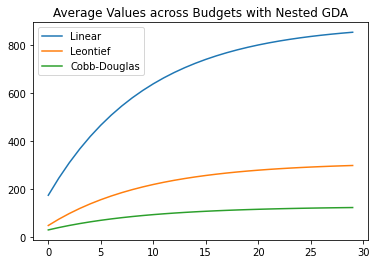}
\captionof{figure}{Average values across budgets.}
\label{fig:avg_values_small}
\end{subfigure}\hfill
\begin{subfigure}[h]{0.55\linewidth}
\vspace{0cm}
\centering
\begin{tabular}{|c|c|c|} 
\hline
\begin{tabular}[c]{@{}c@{}}Utility \\Class\end{tabular} & \begin{tabular}[c]{@{}c@{}}Distance to UM \end{tabular} & \begin{tabular}[c]{@{}c@{}}Distance to MC \end{tabular}  \\ 
\hline
Linear & $0.011$   & $0.010$                                                             \\ 
\hline
Leontief &$0.056$  & $0.010$                                                                \\ 
\hline
Cobb-Douglas& $0.006$   & $0.010$                                                                \\
\hline
\end{tabular}
    \caption{Exploitability of recCE found by \Cref{alg:value_iter_fisher} with nested GDA in stochastic Fisher markets with savings but without interest rates.}
    \label{table:exploitability_small}
\end{subfigure}
\caption{Stochastic Fisher markets with savings but without interest rates.}
\end{figure}


In the experiment with larger markets and probabilistic interest rates (\Cref{fig:avg_values_big}, \Cref{table:exploitability_big}), in linear and Cobb-Douglas markets, the value functions converge, and exploitability is sufficiently minimized.
In Leontief markets, however, although the value function converges and the markets almost clear, the buyers' utilities are not fully maximized; instead, the cumulative utilities obtained are less than half of their expected maximum utilities.
The difficulty in this case likely arises from the fact that the Leontief utility function is not differentiable, so the min-max optimization problem for Leontief markets is neither smooth nor convex-concave, making it difficult, if not impossible, for nested GDA to find even a stationary point,
since gradient ascent is not guaranteed to converge to a stationary point of a function that is non-convex-non-smooth \cite{jordan2022complexity}.


\begin{figure}[t]
\begin{subfigure}[h]{0.4\linewidth}
\vspace{0cm}
\centering
\includegraphics[width=50mm]{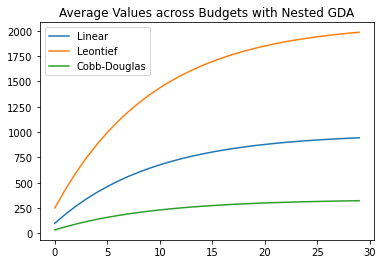}
\captionof{figure}{Average values across budgets.}
\label{fig:avg_values_big}
\end{subfigure}\hfill
\begin{subfigure}[h]{0.55\linewidth}
\vspace{0cm}
\centering
\begin{tabular}{|c|c|c|} 
\hline
\begin{tabular}[c]{@{}c@{}}Utility \\Class\end{tabular} & \begin{tabular}[c]{@{}c@{}}Distance to UM \end{tabular} & \begin{tabular}[c]{@{}c@{}}Distance to MC \end{tabular}  \\ 
\hline
Linear & $0.040$   & $0.009$                                                             \\ 
\hline
Leontief &$0.463$  & $0.009$                                                                \\ 
\hline
Cobb-Douglas& $0.017$   & $0.009$     \\
\hline
\end{tabular}
    \caption{Exploitability of recCE found by \Cref{alg:value_iter_fisher} with nested GDA in stochastic Fisher markets with savings and probabilistic interest rates.}
    \label{table:exploitability_big}
\end{subfigure}
\caption{Stochastic Fisher markets with savings and probabilistic interest rates.}
\end{figure}
\section{Conclusion}

In this paper, we proved the existence of recursive Stackelberg equilibria in zero-sum stochastic Stackelberg games, provided necessary and sufficient conditions for a policy profile to be a recursive Stackelberg equilibrium, and showed that a Stackelberg equilibrium can be computed in (weakly) polynomial time via value iteration.
We then showed how these results can be applied to solve for the recursive competitive equilibria of a stochastic Fisher market with savings, by constructing a zero-sum stochastic Stackelberg game whose recursive Stackelberg equilibria are recursive competitive equilibria of the market.
Finally, we applied this methodology to solve several markets experimentally.

Future work in this space could try using deep reinforcement learning methods to learn better (i.e., nonlinear) representations of the value functions.
It is also conceivable that deep reinforcement learning methods would be able to learn policies that scale up to larger markets than those considered here, as convergence of neural networks to globally optimal solutions has been observed empirically, even in highly non-convex problems \cite{li2018visualizing}, suggesting that they could potentially overcome the difficulties that first-order methods face in the presence of non-smoothness and non-concavity. 
Finally, our experimental results, in which our methods succeed at solving zero-sum \emph{convex-non-concave\/} stochastic Stackelberg games, suggest the need for and possibility of theoretical advances solving convex-non-concave min-max optimization problems, as little is understood beyond non-convex-concave domains.




\begin{ack}
This work was supported by NSF Grant CMMI-1761546.
\end{ack}
\printbibliography
\newpage
\appendix

\section*{Checklist}


\begin{enumerate}

\item For all authors...
\begin{enumerate}
  \item Do the main claims made in the abstract and introduction accurately reflect the paper's contributions and scope?
    \answerYes{}
  \item Did you describe the limitations of your work?
    \answerYes{Yes, details are discussed in conclusion, and experiments sections.}
  \item Did you discuss any potential negative societal impacts of your work?
    \answerNA{}
  \item Have you read the ethics review guidelines and ensured that your paper conforms to them?
    \answerYes{}
\end{enumerate}

\item If you are including theoretical results...
\begin{enumerate}
  \item Did you state the full set of assumptions of all theoretical results?
    \answerYes{Assumptions are summarized in \Cref{assum:main}}
        \item Did you include complete proofs of all theoretical results? 
    \answerYes{Yes, all proofs can be found in the appendix.}
\end{enumerate}

\item If you ran experiments...
\begin{enumerate}
  \item Did you include the code, data, and instructions needed to reproduce the main experimental results (either in the supplemental material or as a URL)?
    \answerYes{Yes, code repo can be found here: \coderepo.}
  \item Did you specify all the training details (e.g., data splits, hyperparameters, how they were chosen)?
    \answerYes{Details can be found in section 5 and Appendix E.}
        \item Did you report error bars (e.g., with respect to the random seed after running experiments multiple times)?
    \answerYes{For experiments run multiple times, i.e., zero-sum convex constraints, exploitability always reached zero, and there was hence no error to report in all randomly initialized examples.}
        \item Did you include the total amount of compute and the type of resources used (e.g., type of GPUs, internal cluster, or cloud provider)?
    \answerYes{Details can be found in Appendix E.}
\end{enumerate}

\item If you are using existing assets (e.g., code, data, models) or curating/releasing new assets...
\begin{enumerate}
  \item If your work uses existing assets, did you cite the creators?
    \answerYes{Details can be found in Appendix E.}
  \item Did you mention the license of the assets?
    \answerYes{Details can be found in Appendix E.}
  \item Did you include any new assets either in the supplemental material or as a URL?
    \answerNA{}
  \item Did you discuss whether and how consent was obtained from people whose data you're using/curating?
    \answerNA{}
  \item Did you discuss whether the data you are using/curating contains personally identifiable information or offensive content?
    \answerNA{}
\end{enumerate}

\item If you used crowdsourcing or conducted research with human subjects...
\begin{enumerate}
  \item Did you include the full text of instructions given to participants and screenshots, if applicable?
    \answerNA{}
  \item Did you describe any potential participant risks, with links to Institutional Review Board (IRB) approvals, if applicable?
    \answerNA{}
  \item Did you include the estimated hourly wage paid to participants and the total amount spent on participant compensation?
    \answerNA{}
\end{enumerate}

\end{enumerate}

\newpage
\section{Pseudo-Code for Algorithms}\label{sec_app:algorithms}

\begin{algorithm}[H]
\caption{Value Iteration (with Min-Max Oracle)}
\textbf{Inputs:} $\states, \outerset, \innerset, \reward, \constr, \trans, \discount, \numiters$\\
\textbf{Outputs:} $\statevalue[{\numiters}]$
\label{alg:value_iter}
\begin{algorithmic}[1]
\State Initialize $\statevalue[0]$ arbitrarily, e.g. $\statevalue[0]=0$
\For{$\iter = 1, \hdots, \numiters$}
    \For{$\state \in \states$}
        \State $\displaystyle
        \statevalue[\iter+1] (\state) = \min_{\outer \in \outerset} 
        \max_{\inner \in \innerset: \constr(\state, \outer, \inner) \geq \zeros} \mathop{\E}_{\staterv^\prime \sim \trans(\cdot \mid \state, \outer, \inner)} \left[ \reward(\state, \outer, \inner) +  \discount \statevalue[\iter](\staterv^\prime)  \right]$ \label{algeq:value_iter}
    \EndFor
\EndFor
\State \Return $\statevalue[{\numiters}]$ 
\end{algorithmic}
\end{algorithm}


\begin{algorithm}[H]
\caption{Nested GDA for stochastic Fisher Markets with saving}
\label{alg:nested_gda_on_qfunc}
\textbf{Inputs:}  ${\statevalue}, \type, \budget, \supply,   \learnrate[\price], \learnrate[\allocation], \numiters_{\price}, \numiters_{\allocation}, \price^{(0)}, \allocation^{(0)}, \saving^{(0)}$\\ 
\textbf{Output:} $(\price^{(\iter)}, \allocation^{(\iter)})_{\iter = 1}^\numiters$ 
\begin{algorithmic}[1]
\For{$\iter = 1, \hdots, \numiters_{\price}$}
    \For{$s=1, \hdots, \numiters_{\allocation}$}
    \State For all $\buyer \in \buyers$,  $\displaystyle \allocation[\buyer]^{(\iter)} =  
    \allocation[\buyer]^{(\iter)} + \learnrate[\allocation]\left(
    \frac{\budget[\buyer]-\saving[\buyer]^{(\iter)}}{\util[\buyer]\left(\allocation[\buyer]^{(\iter)}; \type[\buyer]\right)} \grad[{\allocation[\buyer]}] \util[\buyer]\left(\allocation[\buyer]^{(\iter)};\type[\buyer] \right) \right) $ 
    \State For all $\buyer \in \buyers$, $\displaystyle \saving[\buyer]^{(\iter)} =  
    \saving[\buyer]^{(\iter)}+ \learnrate[\allocation] \left(
    -\log (\util[\buyer]({\allocation[\buyer]}^{(\iter)}; \type[\buyer])) + \discount \frac{\partial \statevalue (\type, \budget, \supply)}{\partial \budget[\buyer]}
    \right)$
    \State $
    (\allocation[][][\iter], \saving[][\iter])  = \project[\{({\allocation},{\saving)} \in \R_+^{\numbuyers \times \numgoods} \times \R^\numbuyers_+: {\allocation}\cdot \price^{(\iter-1)} + {\saving} \leq {\budget}\}] \left((\allocation[][][\iter], \saving[][\iter]) \right)$
    \EndFor
    \State $\price^{(\iter)} = \project[\R_+^\numgoods]\left(
    \price^{(\iter-1)} - \learnrate[\price](\ones - \sum_{\buyer \in \buyers} \allocation[\buyer][][\iter])
    \right)$
\EndFor
\State \Return $(\price^{(\iter)}, \allocation^{(\iter)})_{\iter = 1}^{\numiters_{\price}}$ 
\end{algorithmic}
\end{algorithm}

\section{Omitted Results and Proofs \Cref{sec:properties}}\label{sec_app:properties}

We first note the following fundamental relationship between the state-value and action-value functions which is an analog of Bellman’s Theorem \cite{bellman1952theory} and which follows from their definitions:
\begin{restatable}{theorem}{thmbellman}\label{thm:bellman}
Given a stochastic min-max Stackelberg game $(\states, \outerset, \innerset, \initstates, \reward, \constr, \trans, \discount)$, for all $\statevalue \in \statevalfuncs$, $\actionvalue \in \actionvalfuncs$, $\policy[\outer] \in \outerset^\states$, and $\policy[\inner] \in \innerset^\states$, $\statevalue = \statevalue[][{\policy[\outer]}][{\policy[\inner]}]$ and $\actionvalue = \actionvalue[][{\policy[\outer]}][{\policy[\inner]}]$ iff:
\begin{align}
    &\statevalue[](\state) = \actionvalue[](\state, \policy[\outer](\state), \policy[\inner](\state))\label{eq:belman_q_to_v}\\
    &\actionvalue[](\state, \outer, \inner) = \mathop{\E}_{\staterv^\prime \sim \trans(\cdot \mid \state, \outer, \inner)} \left[ \reward(\state, \outer, \inner) +  \discount \statevalue[](\staterv^\prime)  \right]\label{eq:belman_v_to_q}
\end{align}
\end{restatable}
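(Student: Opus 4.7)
The plan is to prove both directions of the equivalence separately. The forward direction $(\Rightarrow)$ amounts to unwinding the definitions of $\statevalue[][{\policy[\outer]}][{\policy[\inner]}]$ and $\actionvalue[][{\policy[\outer]}][{\policy[\inner]}]$. For equation \eqref{eq:belman_q_to_v}, observe that once the trajectory is drawn under the fixed policy profile starting from $\staterv[0] = \state$, the initial actions are deterministic and equal $(\policy[\outer](\state), \policy[\inner](\state))$, so additionally conditioning on these actions does not alter the expected discounted return. For equation \eqref{eq:belman_v_to_q}, I would peel off the $\iter = 0$ term of the infinite series, which contributes $\reward(\state, \outer, \inner)$, and then use the Markov property together with the tower rule to rewrite the tail as $\discount$ times the expected state-value at the next state sampled from $\trans(\cdot \mid \state, \outer, \inner)$. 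The bound $\left\|\reward\right\|_\infty \leq \rewardbound$ from \Cref{assum:main} together with $\discount \in (0,1)$ guarantees absolute convergence of the infinite sums, so the required interchange of summation and expectation (via dominated convergence) is justified.

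For the reverse direction $(\Leftarrow)$, the idea is to substitute \eqref{eq:belman_v_to_q} into \eqref{eq:belman_q_to_v} to obtain the one-step policy-evaluation equation
\begin{align*}
    \statevalue[](\state) = \reward(\state, \policy[\outer](\state), \policy[\inner](\state)) + \discount \mathop{\E}_{\staterv^\prime \sim \trans(\cdot \mid \state, \policy[\outer](\state), \policy[\inner](\state))} \left[\statevalue[](\staterv^\prime)\right] .
\end{align*}
I would then define a policy-evaluation operator $T : \statevalfuncs \to \statevalfuncs$ by $(T \statevalue[])(\state) = \reward(\state, \policy[\outer](\state), \policy[\inner](\state)) + \discount \mathop{\E}_{\staterv^\prime \sim \trans(\cdot \mid \state, \policy[\outer](\state), \policy[\inner](\state))} \left[\statevalue[](\staterv^\prime)\right]$. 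An argument entirely analogous to (but simpler than) the proof of \Cref{thm:contraction_mapping}, omitting the min-max step and invoking only the triangle inequality in its place, shows that $T$ is a $\discount$-contraction on $\statevalfuncs$ in the sup-norm. By Banach's fixed-point theorem its fixed point is unique, and the forward direction already exhibits $\statevalue[][{\policy[\outer]}][{\policy[\inner]}]$ as one such fixed point. Hence any $\statevalue \in \statevalfuncs$ satisfying the two displayed equations must equal $\statevalue[][{\policy[\outer]}][{\policy[\inner]}]$, and then \eqref{eq:belman_v_to_q} forces $\actionvalue = \actionvalue[][{\policy[\outer]}][{\policy[\inner]}]$.

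I do not anticipate any serious obstacle: this is a textbook-style Bellman identity extended to the two-player setting, and the only subtlety is justifying the interchange of sum and expectation, which follows immediately from the boundedness of $\reward$ and the geometric factors $\discount^\iter$. The contraction step in the reverse direction is a specialization of the one in the proof of \Cref{thm:contraction_mapping}---with the deterministic policy-induced action replacing the min-max over actions---so the relevant calculation can be re-used almost verbatim.
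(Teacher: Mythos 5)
Your proof is correct, and it is in fact more complete than the one the paper gives. The paper's proof of \Cref{thm:bellman} is two sentences long: it observes that, by definition, the policy-induced value functions satisfy \eqref{eq:belman_q_to_v}--\eqref{eq:belman_v_to_q}, and then simply asserts that any pair $(\statevalue, \actionvalue)$ satisfying these equations must coincide with $(\statevalue[][{\policy[\outer]}][{\policy[\inner]}], \actionvalue[][{\policy[\outer]}][{\policy[\inner]}])$. That last step---the reverse direction---is exactly where the content lies, and the paper does not justify it: knowing that both $\statevalue$ and $\statevalue[][{\policy[\outer]}][{\policy[\inner]}]$ solve the same system does not by itself imply they are equal. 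Your argument supplies the missing ingredient: substituting \eqref{eq:belman_v_to_q} into \eqref{eq:belman_q_to_v} yields the policy-evaluation equation, the associated operator $T$ is a $\discount$-contraction on the complete metric space $(\statevalfuncs, \left\| \cdot \right\|_{\infty})$ (the same calculation as in the proof of \Cref{thm:contraction_mapping} with the min-max step removed, so \Cref{lemma:minmax_lipschitz} is not even needed), and Banach's theorem gives uniqueness of the fixed point, which the forward direction has already exhibited as $\statevalue[][{\policy[\outer]}][{\policy[\inner]}]$; equation \eqref{eq:belman_v_to_q} then forces $\actionvalue = \actionvalue[][{\policy[\outer]}][{\policy[\inner]}]$. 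Your forward direction (peeling off the $\iter = 0$ term, the Markov property, and dominated convergence to justify interchanging sum and expectation) makes explicit what the paper compresses into ``by definition.'' In short: same underlying facts, but your write-up is the one that actually closes the equivalence.
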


\begin{proof}[Proof of \Cref{thm:bellman}]
By the definition of the state value function we have $\statevalue[][{\policy[\outer]}][{\policy[\inner]}][\player] = \actionvalue[][][][\player](\state, \policy[\outer](\state), \policy[\inner](\state))$, hence by \Cref{eq:belman_q_to_v} we must have that $\statevalue[][][][\player] = \statevalue[][{\policy[\outer]}][{\policy[\inner]}][\player]$. Additionally, by \Cref{eq:belman_v_to_q} and the definition of the action-value functions this also implies that $\actionvalue[][][][\player](\state, \outer, \inner) = \actionvalue[][{\policy[\outer]}][{\policy[\inner]}][\player](\state, \outer, \inner)$
\end{proof}

\thmfpbelloptisstackelberg*
\begin{proof}[Proof of \Cref{thm:fp_bellopt_is_stackelberg}]

We prove one direction, the other direction follows symmetrically.
(Fixed Point $\implies$ recursive Stackelberg equilibrium)
Suppose that a value function $\statevalue[][{\policy[\outer]^*}][{\policy[\inner]^*}]$ which is induced by a policy profile $(\policy[\outer]^*, \policy[\inner]^*)$ is a fixed point of $\bellopt$, we then have for all states $\state \in \states$:
\begin{align}
    \statevalue[][{\policy[\outer]^*}][{\policy[\inner]^*}] &= \left(\bellopt\statevalue[][{\policy[\outer]^*}][{\policy[\inner]^*}]\right) (\state) \\
    &= \min_{\outer \in \outerset} \max_{\inner \in \innerset : \constr(\state, \outer, \inner)\geq \zeros} \mathop{\E}_{\staterv^\prime \sim \trans(\cdot \mid \state, \outer, \inner)} \left[ \reward(\state, \outer, \inner) +  \discount \statevalue[][{\policy[\outer]^*}][{\policy[\inner]^*}](\staterv^\prime)  \right]\\
    &= \min_{\outer \in \outerset} \max_{\inner \in \innerset : \constr(\state, \outer, \inner)\geq \zeros} \actionvalue[][{\policy[\outer]^*}][{\policy[\inner]^*}](\state, \outer, \inner)
\end{align}

\noindent
Hence, by \Cref{lemma:stackelberg_action_value}, $(\policy[\outer]^*, \policy[\inner]^*)$ is recursive Stackelberg equilibrium.
\end{proof}

\lemmaminmaxlipschitz*

\begin{proof}[Proof of \Cref{lemma:minmax_lipschitz}]
Let $(\outer^*, \inner^*)$ be a Stackelberg equilibrium of  $\min_{\outer \in \outerset} \max_{\inner \in \innerset: \constr(\outer, \inner) \geq \zeros} f(\outer, \inner)$, and $(\outer^\prime, \inner^\prime)$ be a Stackelberg equilibrium of $\min_{\outer \in \outerset} \max_{\inner \in \innerset : \constr(\outer, \inner)
\geq \zeros} h(\outer, \inner)$. Additionally, let $\bar{y} \in \argmax_{\inner \in \innerset: \constr(\outer^\prime, \inner) \geq 0} f(\outer^\prime, \inner)$, then by the the definition of a Stackelberg equilibrium, we have $f(\outer^*, \inner^*) = \min_{\inner \in \innerset} \max_{\outer \in \outerset: \constr(\outer, \inner) \geq \zeros} f(\outer,\inner) \leq \max_{\inner \in \innerset: \constr(\outer^\prime, \inner) \geq 0} f(\outer^\prime, \inner) = f(\outer^\prime, \bar{\inner})$, and $h(\outer^\prime, \inner^\prime) = \max_{\inner \in \innerset : \constr(\outer^\prime, \inner) \geq 0} h(\outer^\prime, \inner)\geq h(\outer^\prime, \bar{\inner})$.

\noindent 
Suppose that $\min_{\outer \in \outerset} \max_{\inner \in \innerset : \constr(\outer, \inner)\geq \zeros} f(\outer, \inner) \geq \min_{\outer \in \outerset} \max_{\inner \in \innerset : \constr(\outer, \inner) \geq \zeros}, h(\outer, \inner)$
this gives us:
\begin{align}
&\left|\min_{\outer \in \outerset} \max_{\inner \in \innerset : \constr(\outer, \inner)\geq \zeros} f(\outer, \inner) - \min_{\outer \in \outerset} \max_{\inner \in \innerset : \constr(\outer, \inner)
\geq \zeros} h(\outer, \inner)\right|  \\
&= \left| \obj(\outer^*, \inner^*) - h(\outer^\prime, \inner^\prime)\right|\\
&\leq  \left| \obj(\outer^\prime, \bar{\inner}) - h(\outer^\prime, \inner^\prime)\right|\\
&\leq  \left| \obj(\outer^\prime, \bar{\inner}) - h(\outer^\prime, \bar{\inner})\right|\\
&\leq \max_{(\outer, \inner) \in \outerset \times \innerset }\left| f(\outer, \inner) -  h(\outer, \inner)\right|
\end{align}
\noindent
The opposite case follows similarly by symmetry.
\end{proof}

\thmcontractionmapping*

\begin{proof}[Proof of \Cref{thm:contraction_mapping}]
We will show that $\bellopt$ is a contraction mapping, which then by Banach fixed point theorem establish the result.
Let $\statevalue[], {\statevalue[]}^\prime \in \statevalfuncs$ be any two state value functions and $\actionvalue, {\actionvalue}^\prime \in \actionvalfuncs$ be the respective associated action-value functions. We then have:
\begin{align}
    &\left\|\bellopt \statevalue[] - \bellopt {\statevalue[]}^\prime\right\|_{\infty} \\ 
    &\leq \max_{\state \in \states} \left|\min_{\outer \in \outerset} \max_{\inner \in \innerset : \constr(\state, \outer, \inner) \geq  \zeros} \actionvalue(\state, \outer, \inner) - \min_{\outer \in \outerset} \max_{\inner \in \innerset : \constr(\state, \outer, \inner) \geq  \zeros} {\actionvalue}^{\prime}(\state, \outer, \inner) \right|\\
    &\leq \max_{\state \in \states} \max_{(\outer, \inner) \in \outerset \times \innerset} \left| \actionvalue(\state, \outer, \inner) - {\actionvalue}^{\prime}(\state, \outer, \inner) \right| && \text{(\Cref{lemma:minmax_lipschitz})}\\
    &\leq \max_{\state \in \states} \max_{(\outer, \inner) \in \outerset \times \innerset} \left|\mathop{\E}_{\staterv^\prime \sim \trans(\cdot \mid \state, \outer, \inner)} \left[ \reward(\state, \outer, \inner) +  \discount \statevalue[](\staterv^\prime)  \right] - \mathop{\E}_{\staterv^\prime \sim \trans(\cdot \mid \state, \outer, \inner)} \left[ \reward(\state, \outer, \inner) +  \discount {\statevalue[]}^\prime(\staterv^\prime)  \right] \right|\\
    &\leq \max_{\state \in \states} \max_{(\outer, \inner) \in \outerset \times \innerset} \left|\mathop{\E}_{\staterv^\prime \sim \trans(\cdot \mid \state, \outer, \inner)} \left[\discount \statevalue[](\staterv^\prime)  - \discount {\statevalue[]}^\prime(\staterv^\prime)  \right] \right|\\
    &\leq \discount \max_{\state \in \states} \max_{(\outer, \inner) \in \outerset \times \innerset} \left|\mathop{\E}_{\staterv^\prime \sim \trans(\cdot \mid \state, \outer, \inner)} \left[ \statevalue[](\staterv^\prime)  -  {\statevalue[]}^\prime(\staterv^\prime)  \right] \right|\\
    &\leq \discount \max_{\state \in \states} \max_{(\outer, \inner) \in \outerset \times \innerset} \left| \statevalue[](\state)  -  {\statevalue[]}^\prime(\state)  \right|\\
    &= \discount \left\|\statevalue[] - {\statevalue[]}^\prime \right\|_\infty
\end{align}

Since $\discount \in (0,1)$, $\bellopt$ is a contraction mapping.

\end{proof}

\thmexistencestackelberg*

\begin{proof}[Proof of \Cref{thm:existence_stackelberg}] 
    By combining  \Cref{thm:contraction_mapping} and the Banach fixed point theorem \cite{banach1922operations}, we obtain that a fixed point of $\bellopt$ exists. Hence, by \Cref{thm:fp_bellopt_is_stackelberg}, a recursive Stackelberg equilibrium of $(\states, \outerset, \innerset, \initstates, \reward, \constr, \trans, \discount)$ exists and the value function induced by all recursive Stackelberg equilibria is the same, i.e., the optimal value function is unique. Additionally, by the second part of the Banach fixed point theorem, we must then also have $\lim_{\iter \to \infty} \statevalue[\iter] = \statevalue[][{\policy[\outer]^*}][{\policy[\inner]^*}]$.
\end{proof}

For any $\actionvalue \in \actionvalfuncs$, we define a \mydef{greedy policy profile}  with respect to $\actionvalue$ as a policy profile $(\policyq[\outer], \policyq[\inner])$ such that $ \policyq[\outer] \in \argmin_{\outer\in \outerset} \max_{\inner \in \innerset : \constr(\state, \outer, \inner) \geq \zeros} \actionvalue(\state, \outer,\inner)$ and $\policyq[\inner] \in   \argmax_{\inner \in \innerset : \constr(\state, \policyq[\outer](\outer), \inner) \geq \zeros} \actionvalue(\state, \policyq[\outer](\outer),\inner)$. The following lemma provides a progress bound for each iteration of value iteration which is expressed in terms of the value function associated with the greedy policy profile.

\thmvalueiter*
\begin{proof}[Proof of \Cref{thm:value_iter}]
First note that by \Cref{assum:main}, we have that $\left\|\statevalue[][{\policy[\outer]^*}][{\policy[\inner]^*}]\right\|_\infty \leq \frac{\rewardbound}{1-\discount}$.
Applying the operator $\bellopt$ repeatedly and using the fact that $\statevalue[][{\policy[\outer]^*}][{\policy[\inner]^*}]=\bellopt{\statevalue[][{\policy[\outer]^*}][{\policy[\inner]^*}]}$ from \Cref{thm:fp_bellopt_is_stackelberg}, we obtain
\begin{align}
    &\|\statevalue[k]-\statevalue[][{\policy[\outer]^*}][{\policy[\inner]^*}]\|_\infty\\
    &=\|(\bellopt)^{k}\statevalue[0]-(\bellopt)^{k}\statevalue[][{\policy[\outer]^*}][{\policy[\inner]^*}]\|_\infty\\
    &\leq \discount^{k}\|\statevalue[0]-\statevalue[][{\policy[\outer]^*}][{\policy[\inner]^*}]\|_\infty\\
    &=\discount^k \|\statevalue[][{\policy[\outer]^*}][{\policy[\inner]^*}]\|_\infty 
   \label{eq:value_0}\\
    &\leq \discount^k \frac{\rewardbound}{1-\discount}
\end{align}
where \Cref{eq:value_0} was obtained as $\statevalue[0]=0$
Since $1-x\leq e^{-x}$ for any $x \in \R$, we have
\begin{align*}
    \discount^k = (1-(1-\discount))^k
    \leq (e^{-(1-\discount)})^k 
    \leq e^{-(1-\discount)k}
\end{align*}
Thus, for any $\state\in \states$
\begin{align*}
    \statevalue[k](\state)  - \statevalue[][{\policy[\outer]^*}][{\policy[\inner]^*}](\state)
    &\leq \|\statevalue[k]-\statevalue[][{\policy[\outer]^*}][{\policy[\inner]^*}]\|_\infty\\
    &= \discount^k \frac{\rewardbound}{1-\discount}\\
    &\leq e^{-(1-\discount)k}\frac{\rewardbound}{1-\discount}
\end{align*}

Thus it suffices to solve for $k$ such that
\begin{align*}
    e^{-(1-\discount)k}\frac{\rewardbound}{1-\discount} \leq \varepsilon \enspace .
\end{align*}
which concludes the proof.
\end{proof}

\section{Omitted Results and Proofs \Cref{sec:benveniste}}\label{sec_app:benveniste}

Our characterization of the subdifferential of the value function associated with a Stackelberg equilibrium w.r.t.\ its parameters relies on a slightly generalized version of the subdifferential envelope theorem (\Cref{thm:envelope_sd}, \Cref{sec_app:benveniste}) of \citeauthor{goktas2021minmax} \cite{goktas2021minmax}, which characterizes the set of subdifferentials of parametrized constrained optimization problems, i.e., the set of subgradients w.r.t.\ $\outer$ of $\obj^*(\outer) = \max_{\inner \in \innerset: h(\outer, \inner) \geq 0} \obj(\outer, \inner)$.
In particular, we note that \citeauthor{goktas2021minmax}'s proof goes through even without assuming the concavity of $\obj(\outer, \inner), h_{1}(\outer, \inner), \hdots, h_{\numconstrs}(\outer, \inner)$ in $\inner$, for all $\outer \in \outerset$. 

\begin{theorem}[Subdifferential Envelope Theorem]
\label{thm:envelope_sd}
Consider the function $\obj^*(\outer) = \max_{\inner \in \innerset: \h(\outer, \inner) \geq 0} \obj(\outer, \inner) $ where $\obj: \outerset \times \innerset \to \R$, and $\h:  \outerset \times \innerset \to \R^\numconstrs$.
Let $\innerset^*(\outer) = \argmax_{\inner \in \innerset:  \h(\outer, \inner) \geq \zeros} \obj(\outer, \inner)$. Suppose that 
1.~$\obj(\outer, \inner), h_1(\outer, \inner), \ldots, h_\numconstrs(\outer, \inner)$ are continuous in $(\outer, \inner)$ and $\obj^*$ convex in $\outer$;
2.~$\grad[\outer] f, \grad[\outer] h_1, \ldots, \grad[\outer] h_\numconstrs$ are continuous in $(\outer, \inner)$; 
3.~ $\innerset$ is non-empty and compact, and 4.~(Slater's condition)
$\forall \outer \in \outerset, \exists \widehat{\inner} \in \innerset$ s.t.\ $g_{\numconstr}(\outer, \widehat{\inner}) > 0$, for all $\numconstr = 1, \ldots, \numconstrs$.
Then, $\obj^*$ is subdifferentiable and at any point $\widehat{\outer} \in \outerset$, $\subdiff[\outer] \obj^*(\widehat{\outer}) =$
\begin{align}
\label{eq:envelope_deriv_sd}
         \mathrm{conv} \left( \bigcup_{\inner^*(\widehat{\outer}) \in \innerset^*(\widehat{\outer})}      \bigcup_{\langmult[\numconstr]^*(\widehat{\outer}, \inner^*(\widehat{\outer})) \in \langmults[\numconstr]^*(\widehat{\outer}, \inner^*(\widehat{\outer}))} \left\{ \grad[\outer] \obj\left( \widehat{\outer}, \inner^{*}(\widehat{\outer})\right) + \sum_{\numconstr = 1}^{\numconstrs} \langmult[\numconstr]^*(\widehat{\outer}, \inner^*(\widehat{\outer})) \grad[\outer] \constr[k]\left(\widehat{\outer}, \inner^{*}(\widehat{\outer})\right) \right\}
    \right) \enspace ,
\end{align}
\noindent
where $\mathrm{conv}$ is the convex hull operator and $\langmult^*(\widehat{\outer}, \inner^*(\widehat{\outer})) = \left(\langmult[1]^*(\widehat{\outer}, \inner^*(\widehat{\outer})), \ldots, \langmult[\numconstrs]^*(\widehat{\outer}, \inner^*(\widehat{\outer})) \right)^T \in \langmults^*(\widehat{\outer}, \inner^*(\widehat{\outer}))$ are the optimal KKT multipliers associated with $\inner^{*}(\widehat{\outer}) \in \innerset^*(\widehat{\outer})$.
\end{theorem}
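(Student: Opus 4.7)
My plan is to adapt \citeauthor{goktas2021minmax}'s argument, dropping their concavity-in-$\inner$ assumption wherever it is not structurally needed. First, Berge's maximum theorem applies under the continuity of $\obj$ and $\h$, compactness of $\innerset$, and Slater's condition: strict feasibility combined with continuity of $\h$ makes the feasible correspondence $\outer \mapsto \{\inner \in \innerset : \h(\outer, \inner) \geq \zeros\}$ lower hemicontinuous near $\widehat{\outer}$ and compact-valued, so $\obj^*$ is continuous on $\outerset$ and the argmax correspondence $\innerset^*(\cdot)$ is upper hemicontinuous with nonempty compact values. The assumed convexity plus continuity of $\obj^*$ then yield subdifferentiability at any interior $\widehat{\outer}$ via Moreau--Rockafellar. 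Next, at each $\inner^* \in \innerset^*(\widehat{\outer})$, Slater's condition acts as a Mangasarian--Fromovitz constraint qualification, so KKT is a necessary first-order condition, yielding a nonempty multiplier set $\langmults^*(\widehat{\outer}, \inner^*)$ with stationarity in $\inner$ and complementary slackness $\langmult_k^* h_k(\widehat{\outer}, \inner^*) = 0$. Dropping concavity in $\inner$ only costs sufficiency of KKT, not necessity, which is all we need.

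For the inclusion $\supseteq$, I would fix $\inner^* \in \innerset^*(\widehat{\outer})$ and $\langmult^* \in \langmults^*(\widehat{\outer}, \inner^*)$ and set $g \doteq \grad[\outer] \obj(\widehat{\outer}, \inner^*) + \sum_{k=1}^{\numconstrs} \langmult^*_k \grad[\outer] h_k(\widehat{\outer}, \inner^*)$. I need $\obj^*(\outer) \geq \obj^*(\widehat{\outer}) + \langle g, \outer - \widehat{\outer}\rangle$ for every $\outer \in \outerset$. The strategy is a directional-derivative computation: for a direction $d$ and small $t > 0$, let $\outer_t = \widehat{\outer} + td$ and pick $\inner_t \in \innerset^*(\outer_t)$; upper hemicontinuity (after passing to a subsequence) gives $\inner_t \to \inner^\dagger \in \innerset^*(\widehat{\outer})$. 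A first-order expansion of $\obj$ and of each $h_k$ around $(\widehat{\outer}, \inner^\dagger)$, combined with KKT stationarity in $\inner$ (which annihilates the $\inner_t - \inner^\dagger$ contributions, modulo complementary slackness on the multipliers), delivers $\obj^*(\outer_t) - \obj^*(\widehat{\outer}) \geq t \langle g, d\rangle + o(t)$. Convexity of $\obj^*$ then upgrades this one-sided directional inequality to the global subgradient inequality, and taking the convex hull over $(\inner^*, \langmult^*)$ closes the inclusion.

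For the reverse inclusion $\subseteq$, I would argue via support functions. Since $\obj^*$ is convex and continuous, $\subdiff[\outer] \obj^*(\widehat{\outer})$ is closed convex, with support function equal to the one-sided directional derivative $(\obj^*)'(\widehat{\outer}; \cdot)$. The RHS of the claimed formula is also closed convex, being the convex hull of a compact set by continuity of the $\outer$-gradients together with compactness of the multiplier correspondence. By the already-proven $\supseteq$, the support function of the RHS is pointwise dominated by $(\obj^*)'(\widehat{\outer}; \cdot)$; for the reverse bound, I would compute the directional derivative as a limit of difference quotients along $\outer_t = \widehat{\outer} + td$, pass to a convergent subsequence of maximizer--multiplier pairs via upper hemicontinuity, and identify the limit with $\langle \grad[\outer] \obj(\widehat{\outer},\inner^\dagger) + \sum_k \langmult^\dagger_k \grad[\outer] h_k(\widehat{\outer},\inner^\dagger), d\rangle$ for some $(\inner^\dagger, \langmult^\dagger)$ in the feasible pair set. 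Equality of support functions of two closed convex sets then gives equality of the sets.

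The hard part is the Taylor expansion / KKT cancellation in paragraph two. Without concavity in $\inner$, the feasible set $\{\inner : \h(\outer_t, \inner) \geq \zeros\}$ may be non-convex and $\inner_t$ may approach $\inner^\dagger$ along a boundary tangent direction; any envelope argument routed through strong duality of the inner problem is unavailable. The saving grace is that only first-order Lagrangian stationarity in $\inner$ and complementary slackness are actually used, and both hold purely by virtue of KKT \emph{necessity} under Slater's condition, independent of any concavity. A secondary subtlety is the non-uniqueness of KKT multipliers without concavity, which is precisely why the theorem statement takes a convex hull over the whole correspondence $\langmults^*(\widehat{\outer}, \inner^*)$ rather than a single multiplier vector.
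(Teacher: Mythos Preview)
The paper does not actually prove \Cref{thm:envelope_sd}. It states the theorem and then writes only: ``we note that \citeauthor{goktas2021minmax}'s proof goes through even without assuming the concavity of $\obj(\outer,\inner), h_1(\outer,\inner),\ldots,h_\numconstrs(\outer,\inner)$ in $\inner$.'' That is the entirety of the paper's argument. Your plan is precisely this observation, but worked out in detail: follow the cited envelope proof, and check that concavity in $\inner$ enters only through KKT \emph{sufficiency} or strong duality of the inner problem, neither of which is needed because Slater gives KKT \emph{necessity} and boundedness of multipliers. So at the level of strategy you are aligned with the paper, and in fact you are supplying the missing content rather than taking a different route.

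That said, your $\supseteq$ paragraph has a mismatch worth flagging. You fix $(\inner^*,\langmult^*)$ and define $g$ at $\inner^*$, but your directional-derivative computation produces a limit point $\inner^\dagger$ that need not equal $\inner^*$; the Taylor/KKT cancellation you invoke is at $\inner^\dagger$, so the inequality you derive is $\obj^*(\outer_t)-\obj^*(\widehat{\outer})\ge t\langle g^\dagger,d\rangle+o(t)$ for $g^\dagger$ built from $(\inner^\dagger,\langmult^\dagger)$, not for the $g$ you fixed. As written this is really the $\subseteq$ argument (matching the support function of $\subdiff\obj^*(\widehat{\outer})$ against the max of $\langle g,\cdot\rangle$ over the Lagrangian-gradient set), not $\supseteq$. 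For $\supseteq$ you should instead work with the fixed $\inner^*$ throughout: set $\phi(\outer)=\obj(\outer,\inner^*)+\sum_k\langmult_k^* h_k(\outer,\inner^*)$, note $\phi(\widehat{\outer})=\obj^*(\widehat{\outer})$ by complementary slackness and $\grad\phi(\widehat{\outer})=g$, and then argue that $\phi$ minorizes $\obj^*$ locally (here is exactly where, without concavity in $\inner$, you cannot appeal to strong duality and must instead lean on convexity of $\obj^*$ in $\outer$ to pass from a one-sided directional bound to the global subgradient inequality). Be careful that this step actually closes without the concavity assumption; it is the one place your ``concavity only costs sufficiency'' mantra has to do real work.
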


\begin{restatable}[Subdifferential Benveniste-Scheinkman Theorem]{theorem}{thmsubdiffbellman}\label{thm:subdiff_bellman}
Consider the Bellman equation associated with a recursive stochastic optimization problem where $\reward: \states \times \outerset \times \innerset \to \R$, with state space $\states$ and parameter set $\outerset$, and $\discount \in (0,1)$:
\begin{align}
    \statevalue(\state, \outer) = \max_{\inner \in  \innerset: \constr(\state, \outer, \inner) \geq \zeros} \left\{\reward(\state, \outer, \inner) + \discount \mathop{\E}_{\staterv^\prime \sim \trans(\cdot \mid \state, \outer, \inner)} \left[\statevalue(\staterv^\prime, \outer) \right] \right\}
\end{align}
Suppose that \Cref{assum:main} holds, and that
1.~ for all $\state \in \states, \inner \in \innerset$, $\reward(\state, \outer, \inner)$, $\constr[1](\state, \outer, \inner), \hdots, \constr[\numconstrs](\state, \outer, \inner)$ are 
concave in $\outer$, 
2.~ $\grad[\outer] \reward(\state, \outer, \inner), \grad[\outer] \constr[1](\state, \outer, \inner), \ldots, \grad[\outer] \constr[\numconstrs](\state, \outer, \inner), \grad[\outer] \trans(\state^\prime \mid \state, \outer, \inner)$ are continuous in $(\state, \state^\prime, \outer, \inner)$,
3.~ $\statevalue(\state, \outer)$ is convex in $\outer$
4.~ Slater's condition holds for the optimization problem, i.e., $\forall \outer \in \outerset, \state \in \states, \exists \widehat{\inner} \in \innerset$ s.t.\ $g_{\numconstr}(\state, \outer, \widehat{\inner}) > 0$, for all $\numconstr = 1, \ldots, \numconstrs$. \\
Let $\innerset^*(\state, \outer) = \max_{\inner \in  \innerset: \constr(\state, \outer, \inner) \geq \zeros} \left\{\reward(\state, \outer, \inner) + \discount \mathop{\E}_{\state^\prime \sim \trans(\cdot \mid \state, \outer, \inner)} \left[\statevalue(\state^\prime, \outer) \right] \right\}$, then $\statevalue$ is subdifferentiable and $\subdiff[\outer]\statevalue(\state, \hat{\outer}) =$
\begin{align}
    \mathrm{conv} \left( \bigcup_{\inner^*(\state, \widehat{\outer}) \in \innerset^*(\state, \widehat{\outer})}      \bigcup_{\langmult[\numconstr]^*(\state, \widehat{\outer}, \inner^*(\state, \widehat{\outer})) \in \langmults[\numconstr]^*(\state, \widehat{\outer}, \inner^*(\widehat{\outer}))} \left\{ \grad[\outer] \reward\left(\state,  \widehat{\outer}, \inner^{*}(\state, \widehat{\outer})\right) +  \discount \grad[\outer] \mathop{\E}_{\staterv^\prime \sim \trans(\cdot \mid \state, \hat{\outer}, \inner^{*}(\state, \widehat{\outer}))} \left[  \statevalue(\staterv^\prime, \hat{\outer})\right] \right. \right. \notag \\ 
     \left. \left. 
    + \sum_{\numconstr = 1}^{\numconstrs} \langmult[\numconstr]^*(\state, \widehat{\outer}, \inner^*(\state, \widehat{\outer})) \grad[\outer] \constr[k]\left(\state, \widehat{\outer}, \inner^{*}(\state, \widehat{\outer})\right) \right\}
    \right) \enspace .
\end{align}
Suppose additionally, that for all $\state, \state^\prime \in \states, \outer \in \outerset$, $\inner^*(\state, \outer) \in \innerset^*(\state, \outer)$ $\grad[\outer] \trans(\state^\prime \mid \state, \outer, \inner^*(\state, \outer)) > 0$, then  $\subdiff[\outer]\statevalue(\state, \hat{\outer}) =$
\begin{align}
    &\mathrm{conv} \left( \bigcup_{\inner^*(\state, \widehat{\outer}) \in \innerset^*(\state, \widehat{\outer})}      \bigcup_{\langmult[\numconstr]^*(\state, \widehat{\outer}, \inner^*(\state, \widehat{\outer})) \in \langmults[\numconstr]^*(\state, \widehat{\outer}, \inner^*(\widehat{\outer}))} \left\{ \grad[\outer] \reward\left(\state,  \widehat{\outer}, \inner^{*}(\state, \widehat{\outer})\right) +  \discount \mathop{\E}_{\staterv^\prime \sim \trans(\cdot \mid \state, \hat{\outer}, \inner^{*}(\state, \widehat{\outer}))} \left[ \grad[\outer] \statevalue(\staterv^\prime, \hat{\outer})\right] \right. \right. \notag \\ 
    & \left. \left. + \discount  \mathop{\E}_{\staterv^\prime \sim \trans(\cdot \mid \state, \hat{\outer}, \inner^{*}(\state, \widehat{\outer}))} \left[\statevalue(\staterv^\prime, \hat{\outer}) \grad[\outer] \log\left(\trans(\staterv^\prime \mid \state, \hat{\outer}, \inner^{*}(\state, \widehat{\outer})\right)\right] 
    + \sum_{\numconstr = 1}^{\numconstrs} \langmult[\numconstr]^*(\state, \widehat{\outer}, \inner^*(\state, \widehat{\outer})) \grad[\outer] \constr[k]\left(\state, \widehat{\outer}, \inner^{*}(\state, \widehat{\outer})\right) \right\}
    \right) \enspace .
\end{align}
where $\mathrm{conv}$ is the convex hull operator and $\langmult^*(\state, \widehat{\outer}, \inner^*(\state, ,\widehat{\outer})) = \left(\langmult[1]^*(\state, \widehat{\outer}, \inner^*(\state, \widehat{\outer})), \ldots, \langmult[\numconstrs]^*(\state,\widehat{\outer}, \inner^*(\state, \widehat{\outer})) \right)^T \in \langmults^*(\state,\widehat{\outer}, \inner^*(\state,\widehat{\outer}))$ are the optimal KKT multipliers associated with $\inner^{*}(\state,\widehat{\outer}) \in \innerset^*(\state,\widehat{\outer})$.
\end{restatable}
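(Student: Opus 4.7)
\textbf{Proof Proposal for \Cref{thm:subdiff_bellman}.}
The plan is to view the Bellman equation as a parametrized constrained maximum in the parameter $\outer$, with inner variable $\inner$, and then invoke the Subdifferential Envelope Theorem (\Cref{thm:envelope_sd}) directly. Concretely, define
\begin{align*}
F(\state, \outer, \inner) \doteq \reward(\state, \outer, \inner) + \discount \mathop{\E}_{\staterv^\prime \sim \trans(\cdot \mid \state, \outer, \inner)}\bigl[\statevalue(\staterv^\prime, \outer)\bigr],
\end{align*}
so that $\statevalue(\state,\outer) = \max_{\inner \in \innerset:\constr(\state,\outer,\inner)\geq\zeros} F(\state,\outer,\inner)$. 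First I would check that the hypotheses of \Cref{thm:envelope_sd} hold for this parametrized program: continuity of $F$ and $\constr$ in $(\outer,\inner)$ follows from the assumed continuity of $\reward$, $\trans$, $\constr$, and the continuity of $\statevalue$ (inherited as the fixed point of a contraction on a space of bounded continuous functions, together with the dominated convergence theorem using $|\statevalue| \leq \rewardbound/(1-\discount)$); compactness of $\innerset$ and Slater's condition are part of \Cref{assum:main} and the hypotheses; convexity of $\statevalue(\state,\cdot)$ is hypothesis~3.

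Next I would show that $F(\state,\cdot,\inner)$ is differentiable in $\outer$ with a continuous gradient, so that the envelope formula produces an honest gradient of the Lagrangian integrand. For the first term this is assumption~2. For the expectation term the plan is to apply Leibniz's rule: write
\begin{align*}
\mathop{\E}_{\staterv^\prime \sim \trans(\cdot \mid \state, \outer, \inner)}\bigl[\statevalue(\staterv^\prime, \outer)\bigr] = \int_\states \statevalue(\state^\prime, \outer)\, \trans(\state^\prime \mid \state, \outer, \inner)\, d\state^\prime,
\end{align*}
and differentiate under the integral sign, using the bound $\|\statevalue\|_\infty \leq \rewardbound/(1-\discount)$ together with the assumed continuity of $\grad[\outer]\trans$ to dominate the integrand and justify the swap. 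This yields the first claimed subdifferential formula once \Cref{thm:envelope_sd} is applied and the resulting Lagrangian-gradient expression is simplified, with the optimal KKT multipliers $\langmult^*(\state,\widehat{\outer},\inner^*(\state,\widehat{\outer}))$ arising from Slater's condition applied state-by-state.

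For the second (log-derivative) formula, the plan is to split the derivative of the expectation into a ``pathwise'' piece and a ``score function'' piece. Under the additional assumption $\trans(\state^\prime \mid \state,\outer,\inner^*) > 0$, the identity $\grad[\outer] \trans = \trans\cdot \grad[\outer] \log \trans$ is valid, so
\begin{align*}
\grad[\outer]\! \int_\states\! \statevalue(\state^\prime,\outer)\trans(\state^\prime\mid\state,\outer,\inner)\, d\state^\prime
= \mathop{\E}_{\trans}\!\bigl[\grad[\outer]\statevalue(\staterv^\prime,\outer)\bigr] + \mathop{\E}_{\trans}\!\bigl[\statevalue(\staterv^\prime,\outer)\grad[\outer]\log\trans(\staterv^\prime\mid\state,\outer,\inner)\bigr],
\end{align*}
which is exactly the REINFORCE decomposition and gives the second claimed expression after substitution into the envelope formula. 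Since $\statevalue(\state^\prime,\cdot)$ is convex, its subdifferential is nonempty and a subgradient can be used in place of the gradient inside the first expectation, yielding the full subdifferential set via the convex-hull operation from \Cref{thm:envelope_sd}.

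The main obstacle will be handling the implicit self-reference: the term $\grad[\outer]\statevalue(\staterv^\prime,\outer)$ (or its subgradient counterpart) appears on both sides of the characterization, and one must carefully argue that the Leibniz interchange and the envelope theorem together do not require stronger regularity than what is assumed. The key point to verify rigorously is that the set-valued map $\outer \mapsto \subdiff[\outer]\statevalue(\state,\outer)$ is measurable and uniformly bounded so that the expectation of a subgradient selection is well-defined; boundedness follows from convexity plus the sup-norm bound on $\statevalue$, and measurability follows from a standard selection theorem applied to the upper-semicontinuous subdifferential correspondence.
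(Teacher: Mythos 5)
Your proposal follows essentially the same route as the paper's own proof: apply the Subdifferential Envelope Theorem (\Cref{thm:envelope_sd}) to the parametrized program defined by the Bellman right-hand side to obtain the first formula, then use the Leibniz integral rule together with the product rule and the identity $\grad[\outer]\trans = \trans\cdot\grad[\outer]\log\trans$ (valid under the added positivity assumption) to decompose the gradient of the expectation into the pathwise and score-function terms for the second formula. The additional care you propose around continuity of $\statevalue$, domination for the interchange, and measurability of subgradient selections goes beyond what the paper writes out but is consistent with, not a departure from, its argument.
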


\begin{proof}[Proof of \Cref{thm:subdiff_bellman}]
From \Cref{thm:envelope_sd}, we obtain the first part of the theorem:
\begin{align}
    &\subdiff[\outer]\statevalue(\state, \hat{\outer}) \\
    &=
    \mathrm{conv} \left( \bigcup_{\inner^*(\state, \widehat{\outer}) \in \innerset^*(\state, \widehat{\outer})}      \bigcup_{\langmult[\numconstr]^*(\state, \widehat{\outer}, \inner^*(\state, \widehat{\outer})) \in \langmults[\numconstr]^*(\state, \widehat{\outer}, \inner^*(\state, \widehat{\outer}))} \left\{ \grad[\outer] \reward\left(\state,  \widehat{\outer}, \inner^{*}(\state, \widehat{\outer})\right) 
    + \discount  \grad[\outer]\mathop{\E}_{\state^\prime \sim \trans(\cdot \mid \state, \outer, \inner^{*}(\state, \widehat{\outer}))} \left[\statevalue(\staterv^\prime, \outer) \right] \right. \right. \notag \\
    & \left. \left.
    + \sum_{\numconstr = 1}^{\numconstrs} \langmult[\numconstr]^*(\state, \widehat{\outer}, \inner^*(\state, \widehat{\outer})) \grad[\outer] \constr[k]\left(\state, \widehat{\outer}, \inner^{*}(\state, \widehat{\outer})\right) \right\}
    \right) \enspace .
\end{align}

By the Leibniz integral rule \cite{Flanders1973LeibnizRule}, the gradient of the expectation can instead be expressed as an expectation of the gradient under continuity of the function whose expectation is taken, in this case $\statevalue$. In particular, if for all $\state, \state^\prime \in \states, \outer \in \outerset$, $\inner^*(\state, \outer) \in \innerset^*(\state, \outer)$ $\grad[\outer] \trans(\state^\prime \mid \state, \outer, \inner^*(\state, \outer)) > 0$ we have: 
 
 \begin{align}
    &\grad[\outer]\mathop{\E}_{\staterv^\prime \sim \trans(\cdot \mid \state, \outer, \inner)} \left[\statevalue(\staterv^\prime, \outer) \right] \\
    & = \grad[\outer] \int_{\z \in \states} \trans(\z \mid \state, \outer, \inner) \statevalue(\z, \outer) d\z\\
    &=  \int_{\z \in \states} \grad[\outer] [\trans(\z \mid \state, \outer, \inner) \statevalue(\z, \outer)] d\z && \text{(Leibniz Integral Rule)}\\
    &= \int_{\z \in \states} \left[ \trans(\z \mid \state, \outer, \inner) \grad[\outer]\statevalue(\z, \outer)  +  \statevalue(\z, \outer) \grad[\outer] \trans(\z \mid \state, \outer, \inner) \right] d\z && \text{(Product Rule)}\\
    &= \int_{\z \in \states} \left[ \trans(\z \mid \state, \outer, \inner) \grad[\outer]\statevalue(\z, \outer)   +  \statevalue(\z, \outer) \trans(\z \mid \state, \outer, \inner) \frac{\grad[\outer] \trans(\z \mid \state, \outer, \inner)}{\trans(\z \mid \state, \outer, \inner)} \right] d\z && \text{($\trans(\z \mid \state, \outer, \inner) > 0$)}\\
    &= \int_{\z \in \states} \left[ \trans(\z \mid \state, \outer, \inner) \grad[\outer]\statevalue(\z, \outer)   +  \statevalue(\z, \outer) \trans(\z \mid \state, \outer, \inner) \grad[\outer] \log{\trans(\z \mid \state, \outer, \inner)} \right] d\z\\
    &= \int_{\z \in \states} \left[ \trans(\z \mid \state, \outer, \inner) \grad[\outer]\statevalue(\z, \outer) \right] d\z\  + \int_{\z \in \states} \left[ \statevalue(\z, \outer) \trans(\z \mid \state, \outer, \inner) \grad[\outer] \log{\trans(\z \mid \state, \outer, \inner)} \right] d\z\\
    &= \mathop{\E}_{\staterv^\prime \sim \trans(\cdot \mid \state, \outer, \inner)}\left[ \grad[\outer] \statevalue(\staterv^\prime, \outer) \right] + \mathop{\E}_{\staterv^\prime \sim \trans(\cdot \mid \state, \outer, \inner)}\left[  \statevalue(\staterv^\prime, \outer) \grad[\outer] \log{\trans(\staterv^\prime \mid \state, \outer, \inner)} \right]
 \end{align} 
 
This gives us $\subdiff[\outer]\statevalue(\state, \hat{\outer}) =$
\begin{align}
    &\mathrm{conv} \left( \bigcup_{\inner^*(\state, \widehat{\outer}) \in \innerset^*(\state, \widehat{\outer})}      \bigcup_{\langmult[\numconstr]^*(\state, \widehat{\outer}, \inner^*(\state, \widehat{\outer})) \in \langmults[\numconstr]^*(\state, \widehat{\outer}, \inner^*(\widehat{\outer}))} \left\{ \grad[\outer] \reward\left(\state,  \widehat{\outer}, \inner^{*}(\state, \widehat{\outer})\right) +  \discount \mathop{\E}_{\staterv^\prime \sim \trans(\cdot \mid \state, \hat{\outer}, \inner^{*}(\state, \widehat{\outer}))} \left[ \grad[\outer] \statevalue(\staterv^\prime, \hat{\outer})\right] \right. \right. \notag \\ 
    & \left. \left. + \discount  \mathop{\E}_{\staterv^\prime \sim \trans(\cdot \mid \state, \hat{\outer}, \inner^{*}(\state, \widehat{\outer}))} \left[\statevalue(\staterv^\prime, \hat{\outer}) \grad[\outer] \log\left(\trans(\staterv^\prime \mid \state, \hat{\outer}, \inner^{*}(\state, \widehat{\outer})\right)\right] 
    + \sum_{\numconstr = 1}^{\numconstrs} \langmult[\numconstr]^*(\state, \widehat{\outer}, \inner^*(\state, \widehat{\outer})) \grad[\outer] \constr[k]\left(\state, \widehat{\outer}, \inner^{*}(\state, \widehat{\outer})\right) \right\}
    \right) \enspace .
\end{align}
\end{proof}

Note that in the special case that the probability transition function is representing a deterministic recursive parametrized optimization problem, $
    \statevalue(\state, \outer) = \max_{\inner \in  \innerset: \constr(\state, \outer, \inner) \geq \zeros} \left\{\reward(\state, \outer, \inner) + \discount  \left[\statevalue(\tau(\state, \outer, \inner) , \outer) \right] \right\}
$
i.e., $\trans(\state^\prime \mid \state, \outer, \inner) \in \{0, 1\}$ for all $\state, \state^\prime \in \states, \outer \in \outerset, \inner \in \innerset$,
and $\tau: \states \times \outerset \times \innerset \to \states$ is such that $\tau(\state, \outer, \inner) = \state^\prime$ iff $\trans(\state^\prime \mid \state, \outer, \inner) = 1$, the CSD convexity assumption reduces to the linearity of the deterministic state transition function $\tau$ (Proposition 1 of \cite{atakan2003valfunc}). In this case, the subdifferential of the Bellman equation reduces to 
\begin{align}
&\subdiff[\outer]\statevalue(\state, \hat{\outer}) \\
&=\mathrm{conv} \left( \bigcup_{\inner^*(\state, \widehat{\outer}) \in \innerset^*(\state, \widehat{\outer})}      \bigcup_{\langmult[\numconstr]^*(\state, \widehat{\outer}, \inner^*(\state, \widehat{\outer})) \in \langmults[\numconstr]^*(\state, \widehat{\outer}, \inner^*(\state, \widehat{\outer}))} \left\{ \grad[\outer] \reward\left(\state,  \widehat{\outer}, \inner^{*}(\state, \widehat{\outer})\right) + \discount  \grad[\outer]\tau(\state, \widehat{\outer}, \inner)\grad[\state]\statevalue(\tau(\state, \widehat{\outer},  \inner), \widehat{\outer})  \right. \right.\\
    &\left. \left. + \grad[\outer] \statevalue(\tau(\state, \hat{\outer}), \widehat{\outer}) + \sum_{\numconstr = 1}^{\numconstrs} \langmult[\numconstr]^*(\state, \widehat{\outer}, \inner^*(\state, \widehat{\outer})) \grad[\outer] \constr[k]\left(\state, \widehat{\outer}, \inner^{*}(\state, \widehat{\outer})\right) \right\}
    \right)    
\end{align}

\thmnecessaryconditions*

\begin{proof}[Proof of \Cref{thm:necessary_conditions}]
By \Cref{thm:fp_bellopt_is_stackelberg} and \Cref{thm:contraction_mapping} we know that $(\policy[\outer]^*, \policy[\inner]^*)$ is a recursive Stackelberg equilibrium iff 
\begin{align}\label{eq:fp_optimality}
    \statevalue[][{\policy[\inner]}][{\policy[\inner]^*}](\state) = \left( \bellopt \statevalue[][{\policy[\inner]}][{\policy[\inner]^*}]\right)(\state) \enspace .
\end{align} Note that for any policy profile $(\policy[\outer]^*, \policy[\inner]^*)$ that satisfies $\statevalue[][{\policy[\inner]}][{\policy[\inner]^*}](\state) = \left( \bellopt \statevalue[][{\policy[\inner]}][{\policy[\inner]^*}]\right)(\state)$ by \Cref{lemma:stackelberg_action_value} we have that $(\policy[\outer]^*(\state), \policy[\inner]^*(\state))$ is a Stackelberg equilibrium of 
$$\min_{\outer \in \outerset} \max_{\inner \in \innerset : \constr(\state, \outer, \inner)\geq \zeros} \left\{\reward(\state, \outer, \inner) +  \discount \mathop{\E}_{\staterv^\prime \sim \trans(\cdot \mid \state, \outer, \inner)} \left[ \statevalue[](\staterv^\prime)  \right] \right\}$$
for all $\state \in \states$.

Fix a state $\state \in \states$, under the assumptions of the theorem, the conditions of \Cref{thm:subdiff_bellman} are satisfied and $\marginal(\state, \outer) = \max_{\inner \in \innerset : \constr(\state, \outer, \inner)\geq \zeros} \left\{\reward(\state, \outer, \inner) +  \discount \mathop{\E}_{\staterv^\prime \sim \trans(\cdot \mid \state, \outer, \inner)} \left[ \statevalue[](\staterv^\prime)  \right] \right\}$ is subdifferentiable in $\outer$. Since $\marginal(\state, \outer)$ is convex in $\outer$, and Slater's condition are satisfied by the assumptions of the theorem, the necesssary \emph{and sufficient} conditions for $\policy[\outer]^*(\state)$ to be an optimal solution to $\min_{\outer \in \outerset} \marginal(\state, \outer)$ are given by the KKT conditions \cite{kuhn1951kkt} for $\min_{\outer \in \outerset} \marginal(\state, \outer)$. Note that we can state the first order KKT conditions explicitly thanks to the subdifferential Benveniste-Scheinkman theorem (\Cref{thm:subdiff_bellman}). That is, $\policy[\outer]^*(\state)$ is an optimal solution to $\min_{\outer \in \outerset} \marginal(\state, \outer)$ if there exists $\bm{\mu}^*(\state) \in \R_+^\outernumconstrs$ such that:

\begin{align}
    &\grad[\outer] \lang[\state, {\policy[\outer]^*(\state)}](\inner^*(\state, \policy[\outer]^*(\state)), \langmult^*(\state, \policy[\outer]^*(\state), \inner^*(\state,  \policy[\outer]^*(\state))) + \sum_{\numconstr = 1}^\outernumconstrs \mu^*_\numconstr(\state) \grad[\outer]\outerconstr[\numconstr](\policy[\outer]^*(\state)) = 0\label{eq:optimality_outer_1}\\
    &\mu^*_\numconstr(\state)\outerconstr[\numconstr](\policy[\outer]^*(\state)) = 0 && \forall \numconstr \in [\outernumconstrs]\\
    &\outerconstr[\numconstr](\policy[\outer]^*(\state)) \leq 0 && \forall \numconstr \in [\outernumconstrs]\label{eq:optimality_outer_3}
\end{align}
where $\inner^*(\state, \policy[\outer]^*(\state)) \in \argmax_{\inner \in \innerset : \constr(\state, \policy[\outer]^*(\state), \inner)\geq \zeros} \left\{\reward(\state, \policy[\outer]^*(\state), \inner) +  \discount \mathop{\E}_{\staterv^\prime \sim \trans(\cdot \mid \state, \policy[\outer]^*(\state), \inner)} \left[ \statevalue[](\staterv^\prime)  \right] \right\}$ and \\ $\langmult^*(\state, \policy[\outer]^*(\state), \inner^*(\state,  \policy[\outer]^*(\state))) = \left(\langmult[1]^*(\state, \policy[\outer]^*(\state), \inner^*(\state, \policy[\outer]^*(\state))), \ldots, \langmult[\numconstrs]^*(\state, \policy[\outer]^*(\state), \inner^*(\state, \policy[\outer]^*(\state))) \right)^T \in \langmults^*(\state, \policy[\outer]^*(\state), \inner^*(\state, \policy[\outer]^*(\state)))$ are the optimal KKT multipliers associated with $\inner^{*}(\state, \policy[\outer]^*(\state)) \in \innerset^*(\state, \policy[\outer]^*(\state))$ which are guaranteed to exist since Slater's condition is satisfied for $\max_{\inner \in \innerset : \constr(\state, \outer, \inner)\geq \zeros} \left\{\reward(\state, \outer, \inner) +  \discount \mathop{\E}_{\staterv^\prime \sim \trans(\cdot \mid \state, \outer, \inner)} \left[ \statevalue[](\staterv^\prime)  \right] \right\}$.

Similarly, fix a state $\state \in \states$ and an action for the outer player $\outer \in \outerset$, since Slater's condition is satisfied for $\max_{\inner \in \innerset : \constr(\state, \outer, \inner)\geq \zeros} \left\{\reward(\state, \outer, \inner) +  \discount \mathop{\E}_{\staterv^\prime \sim \trans(\cdot \mid \state, \outer, \inner)} \left[ \statevalue[](\staterv^\prime)  \right] \right\}$, the necessary conditions for $\policy[\inner]^*(\state)$ to be a Stackelberg equilibrium strategy for the inner player at state $\state$ are given by the KKT conditions for $\max_{\inner \in \innerset : \constr(\state, \outer, \inner)\geq \zeros} \left\{\reward(\state, \outer, \inner) +  \discount \mathop{\E}_{\staterv^\prime \sim \trans(\cdot \mid \state, \outer, \inner)} \left[ \statevalue[](\staterv^\prime)  \right] \right\}$. That is, there exists $\langmult^*(\state, \outer) \in \R_+^\numconstrs$ and $\bm{\nu}^*(\state, \outer) \in \R_+^\innernumconstrs$ such that:
\begin{align}
    &\grad[\inner] \lang[\state, {\outer}](\policy[\inner]^*(\state), \langmult^*(\state, \outer)) + \sum_{\numconstr = 1}^\innernumconstrs \nu^*_\numconstr(\state) \grad[\outer]\innerconstr[\numconstr](\policy[\inner]^*(\state)) = 0 \label{eq:optimality_inner_1}\\
    &\constr[\numconstr](\state, \outer, \policy[\inner]^*(\state)) \geq 0 && \forall \numconstr \in [\numconstrs]\\
    &\langmult[\numconstr]^*(\state, \outer) \constr[\numconstr](\state, \outer, \policy[\inner]^*(\state)) = 0 && \forall \numconstr \in [\numconstrs]\\
    &\nu^*_\numconstr(\state, \outer) \grad[\outer]\innerconstr[\numconstr](\policy[\inner]^*(\state)) = 0 && \forall \numconstr \in [\innernumconstrs]\\
    &\innerconstr[\numconstr](\outer) \geq 0 && \forall \numconstr \in [\innernumconstrs]  \label{eq:optimality_inner_5}
\end{align}

Combining the necessary and \emph{sufficient} conditions in \Crefrange{eq:optimality_outer_1}{eq:optimality_outer_3} with the necessary conditions in \Crefrange{eq:optimality_inner_1}{eq:optimality_inner_5}, we obtain the necessary conditions for $(\policy[\outer]^*, \policy[\inner]^*)$ to be a recursive Stackelberg equilibrium.

\end{proof}

If additionally, the objective of the inner player at each state $\state \in \states$, $\reward(\state, \outer, \inner) + \discount \mathop{\E}_{\staterv^\prime \sim \trans(\cdot \mid \state, \outer, \inner)} \left[\statevalue(\staterv^\prime, \outer) \right]$ is concave in $\inner$, then the above conditions become necessary and sufficient. The proof follows exactly the same, albeit the optimality conditions on the inner player's policy become necessary and sufficient.

\begin{theorem}[Recursive Stackelberg Equilibrium Necessary and Sufficient Optimality Conditions]\label{thm:necessary_sufficient_conditions}
Consider a zero-sum stochastic Stackelberg game $\game$, where $\outerset = \{\outer \in \R^\outerdim \mid \outerconstr[1](\outer)\leq 0, \hdots, \outerconstr[\outernumconstrs](\outer) \leq 0 \}$ and $\innerset = \{\inner \in \R^\innerdim \mid 
\innerconstr[1](\inner)\geq 0, \hdots, \innerconstr[\innernumconstrs](\inner)  \geq 0 \}$ are convex.
Let $\lang[\state, \outer](\inner, \langmult) =  \reward(\state, \outer, \inner) + \discount \mathop{\E}_{\staterv^\prime \sim \trans(\cdot \mid \state, \outer, \inner)} \left[\statevalue(\staterv^\prime, \outer) \right] +  \sum_{\numconstr = 1}^\numconstrs \langmult[\numconstr] \constr[\numconstr](\state, \outer, \inner)$ where $\bellopt \statevalue = \statevalue$.
Suppose that  
\Cref{assum:main} holds, and that 
1.~for all $\state \in \states, \inner \in \innerset$, $\max_{\inner \in \innerset: \constr(\state, \outer, \inner) \geq \zeros} \left\{ \reward(\state, \outer, \inner) + \discount \mathop{\E}_{\staterv^\prime \sim \trans(\cdot \mid \state, \outer, \inner)} \left[\statevalue(\staterv^\prime, \outer) \right] \right\}$ is
convex in $\outer$ and $\reward[1](\state, \outer, \inner), \constr[1](\state, \outer, \inner), \hdots, \constr[\numconstrs](\state, \outer, \inner)$ are concave in $\inner$ for all $\outer \in \outerset$ and $\state \in \states$,  
2.~$\grad[\outer] \reward(\state, \outer, \inner), \grad[\outer] \constr[1](\state, \outer, \inner), \ldots, \grad[\outer] \constr[\numconstrs](\state, \outer, \inner)$, $\grad[\inner] \reward(\state, \outer, \inner), \grad[\inner] \constr[1](\state, \outer, \inner), \ldots, \grad[\inner] \constr[\numconstrs](\state, \outer, \inner)$ exist, for all $\state \in \states, \outer \in \outerset, \inner \in \innerset$,
4.~$\trans(\state^\prime \mid \state, \outer, \inner)$ is continuous
and differentiable in $(\outer, \inner)$ and CSD concave in $\inner$, and
5.~Slater's condition holds, i.e., $\forall \state \in \states, \outer \in \outerset, \exists \widehat{\inner} \in \innerset$ s.t.\ $\constr[\numconstr](\state, \outer, \widehat{\inner}) > 0$, for all $\numconstr = 1, \ldots, \numconstrs$ and $\innerconstr[j](\widehat{\inner})  > 0$, for all $j = 1, \hdots, \innernumconstrs$, and $\exists \outer \in \R^\outerdim$ s.t. $\outerconstr[\numconstr](\outer) < 0$ for all $\numconstr =1 \hdots, \outernumconstrs$.
Then, there exists $\bm{\mu}^*: \states \to \R_+^\outernumconstrs$, $\langmult^* : \states \times \outerset \to \R_+^\numconstrs$, and $\bm{\nu}^* : \states \times \outerset \to \R_+^\innernumconstrs$ s.t.\ a policy profile $(\policy[\outer]^*, \policy[\inner]^*) \in \outerset^\states \times \innerset^\states$ is a \recSE{} of $\game$ only if it satisfies the following conditions, for all $\state \in \states$:
\begin{align}
    &\grad[\outer] \lang[\state, {\policy[\outer]^*(\state)}](\policy[\inner]^*(\state), \langmult^*(\state, \policy[\outer]^*(\state))) + \sum_{\numconstr = 1}^\outernumconstrs \mu^*_\numconstr(\state) \grad[\outer]\outerconstr[\numconstr](\policy[\outer]^*(\state)) = 0\\
    &\grad[\inner] \lang[\state, {\policy[\outer]^*( \state)}](\policy[\inner]^*(\state), \langmult^*(\state, \policy[\outer]^*(\state))) + \sum_{\numconstr = 1}^\innernumconstrs \nu^*_\numconstr(\state, \policy[\outer]^*(\state)) \grad[\outer]\innerconstr[\numconstr](\policy[\inner]^*(\state)) = 0
\end{align}
\begin{align}
    &\mu^*_\numconstr(\state)\outerconstr[\numconstr](\policy[\outer]^*(\state)) = 0 &\outerconstr[\numconstr](\policy[\outer]^*(\state)) \leq 0 && \forall \numconstr \in [\outernumconstrs]\\
    &\constr[\numconstr](\state, \policy[\outer]^*(\state), \policy[\inner]^*(\state)) \geq 0 &\langmult[\numconstr]^*(\state, \policy[\outer]^*(\state)) \constr[\numconstr](\state, \policy[\outer]^*(\state), \policy[\inner]^*(\state)) = 0  && \forall \numconstr \in [\numconstrs]\\
    &\nu^*_\numconstr(\state, \policy[\outer]^*(\state)) \grad[\outer]\innerconstr[\numconstr](\policy[\inner]^*(\state)) = 0  &\innerconstr[\numconstr](\policy[\outer]^*(\state)) \geq 0 && \forall \numconstr \in [\innernumconstrs]
\end{align}
\end{theorem}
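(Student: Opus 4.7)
The plan is to mirror the proof of \Cref{thm:necessary_conditions}, modifying the argument at exactly one place: upgrading the inner player's KKT conditions from necessary to necessary-and-sufficient, which in turn renders the combined system necessary and sufficient for \recSE{}. I would begin, as in \Cref{thm:necessary_conditions}, by invoking \Cref{thm:fp_bellopt_is_stackelberg} together with \Cref{lemma:stackelberg_action_value} to reduce the claim to the statement that, at every state $\state \in \states$, the pair $(\policy[\outer]^*(\state), \policy[\inner]^*(\state))$ is a Stackelberg equilibrium of the per-state problem $\min_{\outer \in \outerset} \max_{\inner \in \innerset : \constr(\state, \outer, \inner) \geq \zeros}\{\reward(\state, \outer, \inner) + \discount \mathop{\E}_{\staterv^\prime \sim \trans(\cdot \mid \state,\outer,\inner)}[\statevalue(\staterv^\prime)]\}$, where $\statevalue$ is the unique fixed point of $\bellopt$ guaranteed by \Cref{thm:existence_stackelberg}.

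Next, I would verify that the inner player's per-state problem is now a genuine concave maximization over a convex set. The objective is concave in $\inner$: the reward term is concave in $\inner$ by assumption, while the continuation term $\mathop{\E}_{\trans(\cdot \mid \state,\outer,\inner)}[\statevalue(\staterv^\prime)]$ is concave in $\inner$ because $\statevalue$ is convex in its state argument and $\trans$ is CSD concave in $\inner$; by definition, CSD concavity of $\trans$ in $\inner$ maps any convex, bounded, continuous state-value function to an integral that is concave in $\inner$. Each constraint $\constr[\numconstr](\state,\outer,\cdot)$ is concave and $\innerset$ is convex, so by Slater's condition the KKT conditions for the inner program are necessary and sufficient, delivering exactly the inner-player block referenced in the statement.

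For the outer player the argument is unchanged from \Cref{thm:necessary_conditions}: the marginal function $\marginal(\state,\outer) = \max_{\inner \in \innerset : \constr(\state,\outer,\inner) \geq \zeros}\{\reward(\state,\outer,\inner) + \discount \mathop{\E}[\statevalue(\staterv^\prime)]\}$ is convex in $\outer$ by hypothesis, $\outerset$ is convex, and Slater's condition applies, so the KKT conditions for $\min_{\outer \in \outerset}\marginal(\state,\outer)$ are necessary and sufficient. The subdifferential Benveniste--Scheinkman theorem (\Cref{thm:subdiff_bellman}) provides the explicit form for $\subdiff[\outer]\marginal(\state,\outer)$ that appears in the first stationarity equation. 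Combining the two stationarity equations and their respective complementarity and primal-feasibility conditions produces the stated system, and because each block is now necessary and sufficient, so is the conjunction.

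The main obstacle is justifying that the fixed-point state-value function $\statevalue$ is convex in its state argument, since this is precisely what allows CSD concavity of $\trans$ in $\inner$ to produce concavity of the continuation value. I would handle this by induction along value iteration: starting from a convex initializer $\statevalue[0]$ (for example a constant), one checks that each application of $\bellopt$ preserves convexity in $\state$, using that the generalized min-max operator, applied to a payoff that is jointly convex in $(\state,\outer)$ and whose continuation depends on $\state$ through an expectation against a CSD-convex kernel, yields a convex function of $\state$; convexity is then inherited by the sup-norm limit $\statevalue$ from \Cref{thm:existence_stackelberg} as a pointwise limit of convex functions. With this auxiliary convexity secured, the two KKT blocks combine exactly as in the proof of \Cref{thm:necessary_conditions}, and the upgrade to sufficiency is immediate.
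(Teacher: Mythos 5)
Your proposal follows essentially the same route as the paper: the paper's own argument for \Cref{thm:necessary_sufficient_conditions} is simply the proof of \Cref{thm:necessary_conditions} rerun verbatim, with the single observation that, once the inner player's per-state objective and constraints are concave in $\inner$ over the convex set $\innerset$ and Slater's condition holds, the inner KKT block is necessary \emph{and} sufficient, so the conjunction of the two blocks characterizes \recSE{}. Your outer-player block, the reduction via \Cref{thm:fp_bellopt_is_stackelberg}, and the use of \Cref{thm:subdiff_bellman} all match. The one place you go beyond the paper is the auxiliary claim that the fixed point $\statevalue$ of $\bellopt$ is convex in its state argument, which you correctly identify as the missing link between CSD concavity of $\trans$ in $\inner$ and concavity of the continuation value. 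Be aware, however, that your proposed induction along value iteration does not follow from the stated hypotheses: preservation of convexity in $\state$ under $\bellopt$ would additionally require the per-state reward, the constraint correspondence, and the kernel to behave convexly in $\state$ (e.g.\ CSD convexity of $\trans$ in the state), none of which is assumed. The paper sidesteps this by (in its appendix remark) directly assuming the inner objective, continuation term included, is concave in $\inner$; if you want to rely only on the theorem's stated CSD-concavity hypothesis, you would need to either add the convexity of $\statevalue$ in $\state$ as an assumption or supply the extra structural conditions that make your induction go through.
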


\section{Omitted Results and Proofs \Cref{sec:fisher}}\label{sec_app:fisher}

Before, we introduce the stochastic Stackelberg game whose recursive Stackelberg equilibria correspond to recursive competitive equilibria of an associated stochastic Fisher market, we introduce the following technical lemma, which provides the necessary and sufficient conditions for an allocation and saving system of a buyer to be expected utility maximizing.

\begin{restatable}{lemma}{lemmasavingsfoc}\label{lemma:savings_foc}
Consider a stochastic Fisher market $\fishermkt$ such that the transition probability function $\trans$ is continuous in $\saving[\buyer]$. For any price system $\price \in \R_+^{\states \times \numgoods}$, a tuple $(\allocation[\buyer]^*, \saving[\buyer]^*) \in \R_+^{\states \times \numgoods} \times \R_+^{ \numgoods}$ consisting of an allocation system and saving system for a buyer $\buyer \in \buyers$, given by a continuous, and homogeneous utility function $\util[\buyer]: \R^\numgoods_+ \times \typespace \to \R$ representing a locally non-satiated preference, is expected utility maximizing constrained by the saving and spending constrains, i.e., $(\allocation[\buyer]^*, \saving[\buyer]^*)$ is the optimal policy solving the following recursive Bellman equation $\budgetval[\buyer](\state) =
    \max_{(\allocation[\buyer], \saving[\buyer]) \in \R^{\numgoods + 1}_+: \allocation[\buyer] \cdot \price^*(\state) + \saving[\buyer] \leq \budget[\buyer]} \left\{ \util[\buyer]\left(\allocation[\buyer], \type[\buyer]\right) 
    + \discount \mathop{\E}_{(\type^\prime, \budget^\prime, \supply^\prime) \sim \trans(\cdot \mid \state, 
    (\saving[\buyer], \saving_{-\buyer}^*(\state)))} \left[ \budgetval[\buyer](\type^\prime, \budget^\prime + \saving[\buyer], \supply^\prime) \right] \right\}
$, only if we have for all states $\state \in \states$,
    $\allocation[\buyer]^*(\type, \budget, \supply) \cdot \price(\type, \budget, \supply) + \saving[\buyer]^*(\type, \budget, \supply) \leq \budget[\buyer]$,
and,  
\begin{align}
    \allocation[\buyer][\good]^*(\state) > 0 &\implies   \frac{\frac{\partial \util[\buyer]}{\partial \allocation[\buyer][\good] }(\allocation[\buyer]^*(\state); \type[\buyer]) 
    }{\price[\good](\state)} =
    \frac{\util[\buyer](\allocation[\buyer]^*(\state); \type[\buyer]) 
    }{\budget[\buyer] - \saving[\buyer]^*(\state)} && \forall \good \in \goods\\
    \saving[\buyer]^*(\state) > 0 &\implies \frac{\partial \budgetval[\buyer]^*}{\partial \budget[\buyer]}(\state)  =\gamma \frac{\partial}{\partial \saving[\buyer]}\mathop{\E}_{(\type^\prime, \budget^\prime, \supply^\prime) } \left[ \budgetval[\buyer]^*(\type^\prime, \budget^\prime + \saving[\buyer]^*(\state), \supply^\prime) \right] 
\end{align}
If additionally, $\util[\buyer]$ is concave and $\trans$ is CSD concave in $(\allocation[\buyer], \saving[\buyer])$, then the above condition becomes also sufficient.
\end{restatable}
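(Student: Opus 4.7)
The plan is to treat each buyer's problem, with the other buyers' policies $\saving_{-\buyer}^*$ and prices $\price$ held fixed, as a parametric constrained recursive optimization, then apply the KKT necessary conditions to the per-state maximization together with the subdifferential Benveniste--Scheinkman theorem (\Cref{thm:subdiff_bellman}) to characterize the multiplier on the budget constraint and to identify $\partial \budgetval[\buyer]/\partial \budget[\buyer]$. Since $\budget[\buyer] > 0$, Slater's condition is satisfied for the budget constraint $\allocation[\buyer] \cdot \price(\state) + \saving[\buyer] \leq \budget[\buyer]$, so KKT multipliers exist at any optimum; existence of an optimum follows from Weierstrass, using continuity of $\util[\buyer]$, continuity of $\trans$ in $\saving[\buyer]$, and boundedness of $\budgetval[\buyer]$ (inherited from \Cref{assum:main}).

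Next, I form the Lagrangian with multiplier $\lambda \geq 0$ on the budget and $\bm{\mu} \geq \zeros$, $\nu \geq 0$ on the nonnegativity constraints. Stationarity in $\allocation[\buyer][\good]$ yields $\partial \util[\buyer]/\partial \allocation[\buyer][\good] - \lambda \price[\good] + \mu_{\good} = 0$, and stationarity in $\saving[\buyer]$ yields $\discount\,(\partial / \partial \saving[\buyer]) \mathop{\E}[\budgetval[\buyer]] - \lambda + \nu = 0$. Local non-satiation forces $\lambda > 0$ and the budget to bind, giving $\allocation[\buyer]^* \cdot \price + \saving[\buyer]^* = \budget[\buyer]$, the first displayed condition. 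To identify $\lambda$, I multiply the $\allocation[\buyer][\good]$-stationarity condition by $\allocation[\buyer][\good]^*$, sum over $\good$, use complementary slackness $\mu_{\good} \allocation[\buyer][\good]^* = 0$, and invoke Euler's theorem for the $1$-homogeneous $\util[\buyer]$:
\[
    \util[\buyer](\allocation[\buyer]^*, \type[\buyer]) \;=\; \sum_{\good \in \goods} \allocation[\buyer][\good]^* \frac{\partial \util[\buyer]}{\partial \allocation[\buyer][\good]}(\allocation[\buyer]^*, \type[\buyer]) \;=\; \lambda \sum_{\good \in \goods} \allocation[\buyer][\good]^* \price[\good] \;=\; \lambda\,(\budget[\buyer] - \saving[\buyer]^*),
\]
so $\lambda = \util[\buyer](\allocation[\buyer]^*, \type[\buyer])/(\budget[\buyer] - \saving[\buyer]^*)$. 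Substituting back into the $\allocation[\buyer][\good]$-FOC at any active coordinate ($\allocation[\buyer][\good]^* > 0$, hence $\mu_{\good} = 0$) recovers the first stated equality.

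For the savings condition, I apply the subdifferential Benveniste--Scheinkman theorem (\Cref{thm:subdiff_bellman}) with $\budget[\buyer]$ playing the role of the parameter $\outer$: since $\budget[\buyer]$ enters only through the budget constraint with gradient $1$, the theorem identifies (an element of) $\partial \budgetval[\buyer]^*/\partial \budget[\buyer]$ with $\lambda$. Combining with the $\saving[\buyer]$-stationarity when $\saving[\buyer]^* > 0$ (so $\nu = 0$) yields
\[
    \frac{\partial \budgetval[\buyer]^*}{\partial \budget[\buyer]}(\state) \;=\; \lambda \;=\; \discount\,\frac{\partial}{\partial \saving[\buyer]} \mathop{\E}_{(\type^\prime, \budget^\prime, \supply^\prime)}\!\left[\budgetval[\buyer]^*(\type^\prime, \budget^\prime + \saving[\buyer]^*(\state), \supply^\prime)\right],
\]
the second stated condition.

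The hardest step is sufficiency. Under concavity of $\util[\buyer]$ and CSD-concavity of $\trans$ in $(\allocation[\buyer], \saving[\buyer])$, I will argue that $\budgetval[\buyer]^*$ is concave in $\budget[\buyer]$, that the per-state Bellman operand is then concave in the choice variables, and hence that the KKT conditions become sufficient for a global maximum on the convex feasible set. Concavity of $\budgetval[\buyer]^*$ is established by induction along the value-iteration process of \Cref{thm:existence_stackelberg}: a concave-in-$\budget[\buyer]$ initialization is preserved by each Bellman backup because (i) $\util[\buyer]$ is concave in $\allocation[\buyer]$, (ii) CSD-concavity of $\trans$ applied to the (convex) function $-\budgetval[\buyer]^{(k)}$ propagates concavity in $(\allocation[\buyer], \saving[\buyer])$ through the continuation expectation, and (iii) partial maximization over an affine slice of the budget constraint preserves concavity in the remaining parameter $\budget[\buyer]$. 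Passing to the contraction limit yields the desired concavity, and then standard convex-analytic sufficiency of KKT closes the equivalence.
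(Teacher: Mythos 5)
Your proposal follows essentially the same route as the paper's proof: KKT stationarity and complementary slackness on the per-state Bellman maximization (with Slater holding since $\budget[\buyer] > 0$), Euler's theorem for the homogeneous utility to identify the budget multiplier as $\util[\buyer](\allocation[\buyer]^*)/(\budget[\buyer] - \saving[\buyer]^*)$, an envelope argument to equate $\partial \budgetval[\buyer]^*/\partial \budget[\buyer]$ with that multiplier, and concavity of the Bellman operand under CSD-concavity of $\trans$ for sufficiency. The only cosmetic differences are that you invoke the subdifferential Benveniste--Scheinkman theorem where the paper cites the classical envelope theorem, and you spell out the concavity-propagation induction that the paper asserts in one line; the substance is identical.
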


\begin{proof}[Proof of \Cref{lemma:savings_foc}]
Fix a buyer $\buyer \in \buyers$. Throughout we use $\budget + \saving[\buyer]$ as shortcut for $\budget + (\saving[\buyer], \zeros[-\buyer])$. Suppose that $\budgetval[\buyer]^*$ solves the following recursive Bellman equation:
\begin{align}
    \budgetval[\buyer](\state) =
    \max_{(\allocation[\buyer], \saving[\buyer]) \in \R^{\numgoods + 1}_+: \allocation[\buyer] \cdot \price^*(\state) + \saving[\buyer] \leq \budget[\buyer]} \left\{ \util[\buyer]\left(\allocation[\buyer], \type[\buyer]\right) 
    + \discount \mathop{\E}_{(\type^\prime, \budget^\prime, \supply^\prime) \sim \trans(\cdot \mid \state, 
    (\saving[\buyer], \saving_{-\buyer}^*(\state)))} \left[ \budgetval[\buyer](\type^\prime, \budget^\prime + \saving[\buyer], \supply^\prime) \right] \right\}
\end{align}

Define the Lagrangian associated with the consumption-saving problem: 
\begin{align}
    \max_{(\allocation[\buyer], \saving[\buyer]) \in \R^{\numgoods + 1}_+: \allocation[\buyer] \cdot \price^*(\state) + \saving[\buyer] \leq \budget[\buyer]} \left\{ \util[\buyer]\left(\allocation[\buyer], \type[\buyer]\right) 
    + \discount \mathop{\E}_{(\type^\prime, \budget^\prime, \supply^\prime) \sim \trans(\cdot \mid \state, 
    (\saving[\buyer], \saving_{-\buyer}^*(\state)))} \left[ \budgetval[\buyer]^*(\type^\prime, \budget^\prime + \saving[\buyer], \supply^\prime) \right] \right\}
\end{align}
as follows:
\begin{align}
    \lang(\allocation[\buyer], \saving[\buyer], \lambda, \bm \mu; \price) &=\util[\buyer]\left(\allocation[\buyer]; \type[\buyer] \right) 
    + \discount \mathop{\E}_{(\type^\prime, \budget^\prime, \supply^\prime) \sim \trans(\cdot \mid \type, \budget, \supply, 
    (\saving[\buyer], \saving_{-\buyer}^*(\state)))} \left[ \budgetval[\buyer]^*(\type^\prime, \budget^\prime + \saving[\buyer], \supply^\prime) \right]\notag\\
    &+ \lambda (\budget[\buyer] - \allocation[\buyer] \price) + \sum_{\good \in \goods} \mu_\good \allocation[\buyer][\good] + \mu_{\numgoods + 1} \saving[\buyer]
\end{align}
    $$
    \budgetval[\buyer]^*(\type, \budget, \supply) = 
        \max_{(\allocation[\buyer], \saving[\buyer]) \in \R^{\numgoods + 1}_+: \allocation[\buyer] \cdot \price^*(\budget) + \saving[\buyer] \leq \budget[\buyer]}  \util[\buyer](\allocation[\buyer], \type[\buyer]) + \discount  \mathop{\E}_{(\type^\prime, \budget^\prime, \supply^\prime) } \left[ \budgetval[\buyer]^*(\type^\prime, \budget^\prime + \saving[\buyer], \supply^\prime) \right] \enspace .
    $$
    Assume that for any state $\state \in \states$, we have $\budget[\buyer] > 0$. We can ignore states such that $\budget[\buyer] > 0$ since at those states the buyer cannot be allocated goods, and can also not put aside savings.
    Then, Slater's condition holds and the necessary first order optimality conditions for an allocation $\allocation[\buyer]^*(\state) \in \R_+^{\numgoods}$, saving $\saving[\buyer]^*(\state) \in \R_+$ and associated Lagrangian multipliers $\lambda^*(\state) \in \R_+$, $\bm \mu^*(\state) \in \R^{\numgoods + 1}$ to be optimal for any prices $\price(\state) \in \R^\numgoods_+$ and state $\state \in \states$ are given by the following pair of KKT conditions \cite{kuhn1951kkt} for all $\good \in \goods$:
\begin{align}
    \frac{\partial \util[\buyer]}{\partial \allocation[\buyer][\good]}(\allocation[\buyer]^*(\state); \type[\buyer]) 
    - \lambda^*(\state) \price[\good](\state) + \mu^*_{j}(\state) \doteq 0\label{eq:foc_single_buyer_1}\\ 
     \gamma \frac{\partial}{\partial \saving[\buyer]}\mathop{\E}_{(\type^\prime, \budget^\prime, \supply^\prime) } \left[ \budgetval[\buyer]^*(\type^\prime, \budget^\prime + \saving[\buyer]^*, \supply^\prime) \right]  - \langmult[\buyer]^*(\state) + \mu_{\numgoods+1}^*(\state) \doteq 0
\end{align}
Additionally, by the KKT complimentarity conditions, we have for all $\good \in \good$, $\mu_\good^* \allocation[\buyer][\good]^* = 0$ and $\mu_{\numgoods + 1}^* \saving[\buyer]^* = 0$, which gives us: 
\begin{align}
    \frac{\partial \util[\buyer]}{\partial \allocation[\buyer][\good]}(\allocation[\buyer]^*(\state); \type[\buyer]) 
    - \lambda^*(\state) \price[\good](\state)  = 0 && \forall \good \in \goods\\
    \saving[\buyer]^*(\state) > 0 \implies \gamma \frac{\partial}{\partial \saving[\buyer]}\mathop{\E}_{(\type^\prime, \budget^\prime, \supply^\prime) } \left[ \budgetval[\buyer]^*(\type^\prime, \budget^\prime + \saving[\buyer]^*, \supply^\prime) \right] -\lambda^*(\state) = 0 && \forall \good \in \goods 
\end{align}
Re-organizing expressions, yields:
\begin{align}
    \allocation[\buyer][\good]^*(\state) > 0 \implies  \lambda^*  = \frac{\frac{\partial \util[\buyer]}{\partial \allocation[\buyer][\good]}(\allocation[\buyer]^*(\state); \type[\buyer]) 
    }{\price[\good](\state)} && \forall \good \in \goods\label{eq:implies_foc_single_buyer_1}\\
    \saving[\buyer]^*(\state) > 0 \implies \lambda^*(\state) =\gamma \frac{\partial}{\partial \saving[\buyer]}\mathop{\E}_{(\type^\prime, \budget^\prime, \supply^\prime) } \left[ \budgetval[\buyer]^*(\type^\prime, \budget^\prime + \saving[\buyer]^*, \supply^\prime) \right]   \label{eq:implies_foc_single_buyer_2}
\end{align}

Using the envelope theorem \cite{afriat1971envelope, milgrom2002envelope}, we can also compute $\frac{\partial \budgetval[\buyer]^*}{\partial \saving[\buyer]}(\state)$ as follows:
\begin{align}
    \frac{\partial \budgetval[\buyer]^*}{\partial \budget[\buyer]}(\state) = \lambda^*(\state) \
\end{align}
We note that for all states $\state \in \states$, $\frac{\partial \budgetval[\buyer]^*}{\partial \budget[\buyer]}(\state)$ is well-defined since $\lambda^*(\state)$ is uniquely defined for all states by \Cref{eq:implies_foc_single_buyer_1}.
Hence, combining the above with \Cref{eq:implies_foc_single_buyer_1} and \Cref{eq:implies_foc_single_buyer_2}, we get:
\begin{align}
    \allocation[\buyer][\good]^*(\state) > 0 \implies  \lambda^*  = \frac{\frac{\partial \util[\buyer]}{\partial \allocation[\buyer][\good]}(\allocation[\buyer]^*(\state); \type[\buyer]) 
    }{\price[\good]} && \forall \good \in \goods\\
    \saving[\buyer]^*(\state) > 0 \implies \frac{\partial \budgetval[\buyer]^*}{\partial \budget[\buyer]}(\state)  =\gamma \frac{\partial}{\partial \saving[\buyer]}\mathop{\E}_{(\type^\prime, \budget^\prime, \supply^\prime) } \left[ \budgetval[\buyer]^*(\type^\prime, \budget^\prime + \saving[\buyer]^*, \supply^\prime) \right]  \label{eq:foc_single_buyer_2}
\end{align}
Finally, going back to \Cref{eq:foc_single_buyer_1}, multiplying by $\allocation[\buyer][\good]^*(\state)$ and summing up across all $\good \in \goods$, we obtain:
\begin{align}
    &\sum_{\good \in \goods}\allocation[\buyer][\good]^*(\state)\frac{\partial \util[\buyer]}{\partial \allocation[\buyer][\good]}(\allocation[\buyer]^*(\state); \type[\buyer]) 
    - \lambda^*(\state) \price[\good](\state)\allocation[\buyer][\good]^*(\state) + \mu^*_{j} \allocation[\buyer][\good]^*(\state)= 0
\end{align}
Using Euler's theorem for homogeneous functions on the partial derivatives of the utility functions, we then have:
\begin{align}
    &\util[\buyer](\allocation[\buyer]^*(\state); \type[\buyer]) 
    - \lambda^*(\state) \sum_{\good \in \goods} \price[\good](\state)\allocation[\buyer][\good]^*(\state) + \mu^*_{j} \allocation[\buyer][\good]^*(\state)= 0
\end{align}
Additionally, the KKT Slack complementarity conditions, we have  $\lambda^*(\state) (\budget[\buyer] - \saving[\buyer]^*(\state)) = \lambda^*(\state) \sum_{\good \in \goods} \price[\good](\state)\allocation[\buyer][\good]^*(\state)$:
\begin{align}
    &\util[\buyer](\allocation[\buyer]^*(\state); \type[\buyer]) 
    - \lambda^*(\state) (\budget[\buyer] - \saving[\buyer]^*(\state)) = 0 \\
    &\lambda^*(\state) = \frac{\util[\buyer](\allocation[\buyer]^*(\state); \type[\buyer]) 
    }{\budget[\buyer] - \saving[\buyer]^*(\state)}
\end{align}
Combining the above conditions, with
\Cref{eq:implies_foc_market_1}, and adding to it
\Cref{eq:foc_single_buyer_2}, and ensuring that the KKT primal feasibility conditions hold as well, we obtain the following necessary conditions that need to hold for all states $\state \in \states$:
\begin{align}
    \allocation[\buyer][\good]^*(\state) > 0 &\implies   \frac{\frac{\partial \util[\buyer]}{\partial \allocation[\buyer][\good] }(\allocation[\buyer]^*(\state); \type[\buyer]) 
    }{\price[\good](\state)} =
    \frac{\util[\buyer](\allocation[\buyer]^*(\state); \type[\buyer]) 
    }{\budget[\buyer] - \saving[\buyer]^*(\state)} && \forall \good \in \goods\\
    \saving[\buyer]^*(\state) > 0 &\implies \frac{\partial \budgetval[\buyer]^*}{\partial \budget[\buyer]}(\state)  =\gamma \frac{\partial}{\partial \saving[\buyer]}\mathop{\E}_{(\type^\prime, \budget^\prime, \supply^\prime) } \left[ \budgetval[\buyer]^*(\type^\prime, \budget^\prime + \saving[\buyer]^*(\state), \supply^\prime) \right] 
\end{align}

If additionally the transition  probability function $\trans$ is CSD concave in $
\saving[\buyer]$, then $\budgetval[\buyer]$ is concave and the utility maximization problem is concave, which in turn implies that the above conditions are also sufficient.

\end{proof}

\thmfishermarketrecursiveeqm*

\begin{proof}[Proof of \Cref{thm:fisher_market_recursive_eqm}]
Fix a buyer $\buyer \in \buyers$. Suppose that $\statevalue[][*]$ solves the following Stochastic Stackelberg game:
\begin{align}
    \statevalue[][](\state) = \min_{\price \in \R_+^\numgoods} \max_{(\allocation, \saving) \in \R_+^{\numbuyers \times (\numgoods + 1)} : \allocation \price + \saving \leq \budget} \sum_{\good \in \goods} \supply[\good] \price[\good] + \sum_{\buyer \in \buyers} \left(\budget[\buyer] - \saving[\buyer]\right) \log(\util[\buyer](\allocation[\buyer], \type[\buyer])) \notag\\ 
    + \discount \mathop{\E}_{(\type^\prime, \budget^\prime, \supply^\prime) \sim \trans(\cdot \mid \state, 
    \saving)} \left[\statevalue[][](\type^\prime, \budget^\prime + \saving, \supply^\prime) \right]
\end{align}

Define the Lagrangian associated with the following optimization problem:
\begin{align}
    \min_{\price \in \R_+^\numgoods} \max_{(\allocation, \saving) \in \R_+^{\numbuyers \times (\numgoods + 1)} : \allocation \price + \saving \leq \budget} \sum_{\good \in \goods} \supply[\good] \price[\good] + \sum_{\buyer \in \buyers} \left(\budget[\buyer] - \saving[\buyer]\right) \log(\util[\buyer](\allocation[\buyer], \type[\buyer])) \notag\\ 
    + \discount \mathop{\E}_{(\type^\prime, \budget^\prime, \supply^\prime) \sim \trans(\cdot \mid \state, 
    \saving)} \left[\statevalue[][*](\type^\prime, \budget^\prime + \saving, \supply^\prime) \right]
\end{align}
as follows:
\begin{align}
    \lang(\price, \allocation, \saving, \langmult) &= \sum_{\good \in \goods} \supply[\good]  \price[\good] + \sum_{\buyer \in \buyers} \left(\budget[\buyer] - \saving[\buyer]\right) \log(\util[\buyer](\allocation[\buyer], \type[\buyer])) + \discount \mathop{\E}_{(\state^\prime) \sim \trans(\cdot \mid \state, 
    \saving)} \left[\statevalue[][](\type^\prime, \budget^\prime + \saving, \supply^\prime) \right] \notag \\
    &+ \sum_{\good \in \goods} \langmult[\buyer] \left( \budget[\buyer] - \allocation[\buyer] \cdot \price + \saving[\buyer]\right) \enspace .
\end{align}
By \Cref{thm:necessary_conditions}, the necessary optimality conditions for the stochastic Stackelberg game 
are that for all states $\state \in \states$ there exists $\bm \mu^*(\state) \in \R^{\numbuyers \times (\numgoods + 1)}$, and $\bm \nu^* (\state) \in \R^{\numbuyers \times (\numgoods + 1)}_+$ associated with the non-negativity constraints for $(\allocation(\state), \saving(\state))$, and $\price$ respectively, and $\langmult^*(\state) \in \R^\numgoods_+$ associated with the spending constraint for $(\allocation(\state), \saving(\state))$ such that:
\begin{align}
    &\supply[\good]  - \sum_{\buyer \in \buyers} \langmult[\buyer]^*(\state) \allocation[\buyer][\good]^*(\state) - \nu_\good^*(\state) \doteq 0 \label{eq:foc_rfm_1} && \forall \good \in \goods\\
    &\frac{\budget[\buyer] - \saving[\buyer]^*(\state)}{\util[\buyer](\allocation[\buyer]^*(\state))} \frac{\partial\util[\buyer]}{\partial \allocation[\buyer][\good]} (\allocation[\buyer]^*(\state); \type[\buyer])  
    - \langmult[\buyer]^*(\state) \price[\good](\state) + \mu_{\buyer \good}^*(\state) \doteq 0 && \forall \buyer \in \buyers, \good \in \goods \label{eq:foc_rfm_2}\\
    &- \log\left(\util[\buyer](\allocation[\buyer]^*(\state)) \right) +\discount  \frac{\partial }{\partial \saving[\buyer]}\mathop{\E}_{(\type^\prime, \budget^\prime, \supply^\prime)} \left[\statevalue[][](\type^\prime, \budget^\prime + \saving^*(\state), \supply^\prime) \right] - \langmult[\buyer]^*(\state) + \mu_{\buyer (\numgoods + 1)}^*(\state) \doteq 0\label{eq:foc_rfm_3} && \forall \buyer \in \buyers
\end{align}

Note that by \Cref{thm:necessary_conditions}, we also have $\mu_{\buyer (\numgoods + 1)}^*(\state) \saving[\buyer]^*(\state) = \mu_{\buyer  + 1)}^*(\state) \allocation[\buyer][\good]^*(\state) = 0$ which gives us:

\begin{align}
    \price[\good](\state) > 0 
    &\implies \supply[\good]  - \sum_{\buyer \in \buyers} \langmult[\buyer]^*(\state) \allocation[\buyer][\good]^*(\state) =  0  && \forall \good \in \goods\\
    \allocation[\buyer][\good]^*(\state) > 0 
    &\implies \frac{\budget[\buyer] - \saving[\buyer]^*(\state)}{\util[\buyer](\allocation[\buyer]^*(\state))} \frac{\partial\util[\buyer]}{\partial \allocation[\buyer][\good]} (\allocation[\buyer]^*(\state)) 
    - \langmult[\buyer]^*(\state) \price[\good](\state)  = 0 && \forall \buyer \in \buyers, \good \in \goods\\
    \saving[\buyer]^*(\state) > 0 
    &\implies -\log\left(\util[\buyer](\allocation[\buyer]^*(\state); \type[\buyer]) \right) +\discount  \frac{\partial }{\partial \saving[\buyer]}\mathop{\E}_{(\type^\prime, \budget^\prime, \supply^\prime)} \left[\statevalue[][](\type^\prime, \budget^\prime + \saving^*(\state), \supply^\prime) \right] \nonumber\\
    &- \langmult[\buyer]^*(\state) + \mu_{\buyer (\numgoods + 1)}^*(\state) = 0 && \forall \buyer \in \buyers
\end{align}

Re-organizing expressions, we obtain:
\begin{align}
    \price[\good](\state) > 0 
    &\implies \supply[\good]   =  \sum_{\buyer \in \buyers} \langmult[\buyer]^*(\state) \allocation[\buyer][\good]^*(\state) && \forall \good \in \goods \label{eq:implies_foc_market_1}\\
    \allocation[\buyer][\good]^*(\state) > 0 
    &\implies 
    \frac{\util[\buyer](\allocation[\buyer]^*(\state))}{\budget[\buyer] - \saving[\buyer]^*(\state)}\langmult[\buyer]^*(\state)  = \frac{ \frac{\partial\util[\buyer]}{\partial \allocation[\buyer][\good]} (\allocation[\buyer]^*(\state)) 
    }{\price[\good](\state) } && \forall \buyer \in \buyers, \good \in \goods \label{eq:implies_foc_market_2}\\
    \saving[\buyer]^*(\state) > 0 
    &\implies - \log\left(\util[\buyer](\allocation[\buyer]^*(\state)) \right) +\discount  \frac{\partial }{\partial \saving[\buyer]}\mathop{\E}_{(\type^\prime, \budget^\prime, \supply^\prime)} \left[\statevalue[][](\type^\prime, \budget^\prime + \saving^*(\state), \supply^\prime) \right] \nonumber\\
    &-\langmult[\buyer]^*(\state)  = 0 && \forall \buyer \in \buyers\label{eq:implies_foc_market_3}
\end{align}

Using the envelope theorem, we can compute $\frac{\partial \statevalue[][]}{\partial \budget[\buyer]}$ as follows:
\begin{align}
    \frac{\partial \statevalue[][*]}{\partial \budget[\buyer]}(\state) = \log\left(\util[\buyer](\allocation[\buyer]^*(\state); \type[\buyer]) \right) + \langmult[\buyer]^*(\state)
\end{align}
Once again note that $\frac{\partial \statevalue[][]}{\partial \budget[\buyer]}$ is well-defined by \Cref{eq:implies_foc_market_2}.

Re-organizing expressions, we get:
\begin{align}
    \langmult[\buyer]^*(\state) = \frac{\partial \statevalue[][]}{\partial \budget[\buyer]}(\state) - \log\left(\util[\buyer](\allocation[\buyer]^*(\state); \type[\buyer]) \right)\label{eq:envelope_value_rfm}
\end{align}

Combining \Cref{eq:envelope_value_rfm} and \Cref{eq:implies_foc_market_3}, we obtain:
\begin{align}
    \saving[\buyer]^*(\state) > 0 
    &\implies - \log\left(\util[\buyer](\allocation[\buyer]^*(\state); \type[\buyer]) \right) +\discount  \frac{\partial }{\partial \saving[\buyer]}\mathop{\E}_{(\type^\prime, \budget^\prime, \supply^\prime)} \left[\statevalue[][](\type^\prime, \budget^\prime + \saving^*(\state), \supply^\prime) \right] \nonumber\\
     &- \frac{\partial \statevalue[][]}{\partial \budget[\buyer]}(\state) + \log\left(\util[\buyer](\allocation[\buyer]^*(\budget); \type[\buyer]) \right)  = 0 && \forall \buyer \in \buyers\\
    \saving[\buyer]^*(\state) > 0 
    &\implies \discount  \frac{\partial }{\partial \saving[\buyer]}\mathop{\E}_{(\type^\prime, \budget^\prime, \supply^\prime)} \left[\statevalue[][](\type^\prime, \budget^\prime + \saving^*(\state), \supply^\prime) \right] - \frac{\partial \statevalue[][]}{\partial \budget[\buyer]}(\state)  = 0 && \forall \buyer \in \buyers\\
    \saving[\buyer]^*(\state) > 0 
    &\implies  \frac{\partial \statevalue[][]}{\partial \budget[\buyer]}(\state) = \discount  \frac{\partial }{\partial \saving[\buyer]}\mathop{\E}_{(\type^\prime, \budget^\prime, \supply^\prime)} \left[\statevalue[][](\type^\prime, \budget^\prime + \saving^*(\state), \supply^\prime) \right] && \forall \buyer \in \buyers\label{eq:implies_intertime_foc_market}
\end{align}

Going back to \Cref{eq:foc_rfm_2}, multiplying both sides by $\allocation[\buyer][\good]^*(\state)$ and summing up across all $\good \in \goods$, we get:
\begin{align}
    \sum_{\good \in \goods}\frac{\budget[\buyer] - \saving[\buyer]^*(\state)}{\util[\buyer](\allocation[\buyer]^*(\state); \type[\buyer])} \sum_{\good \in \goods} \allocation[\buyer][\good]^*(\state)\frac{\partial\util[\buyer]}{\partial \allocation[\buyer][\good]} (\allocation[\buyer]^*(\state))  - \langmult[\buyer]^*(\state) \sum_{\good \in \goods}\price[\good](\state) \allocation[\buyer][\good]^*(\state) = 0 \\
    \frac{\budget[\buyer] - \saving[\buyer]^*(\state)}{\util[\buyer](\allocation[\buyer]^*(\state); \type[\buyer])} \util[\buyer](\allocation[\buyer]^*(\state); \type[\buyer])  - \langmult[\buyer]^*(\state) \sum_{\good \in \goods}\price[\good](\state) \allocation[\buyer][\good]^*(\state) = 0 && \text{(Euler's Theorem)}\\
    \budget[\buyer] - \saving[\buyer]^*(\state)  - \langmult[\buyer]^*(\state) \sum_{\good \in \goods}\price[\good](\state) \allocation[\buyer][\good]^*(\state) = 0
\end{align}

By \Cref{thm:necessary_conditions}, we have that $\langmult[\buyer]^*(\state) \left( \budget[\buyer] - \sum_{\good \in \goods}\price[\good](\state) \allocation[\buyer][\good]^*(\state) - \saving[\buyer]^*(\state)  \right) = 0$, which gives us:
\begin{align}
    \budget[\buyer] - \saving[\buyer]^*(\state)  - \langmult[\buyer]^*(\state) (\budget[\buyer] - \saving[\buyer]^*(\state) )= 0\\
    \langmult[\buyer]^*(\state)  = 1
\end{align}

Combining the above with \Crefrange{eq:implies_foc_market_1}{eq:implies_foc_market_3} we obtain:

\begin{align}
    &\price[\good](\state) > 0 \implies \supply[\good]   =  \sum_{\buyer \in \buyers} \allocation[\buyer][\good]^*(\state) && \forall \good \in \goods \\
    &\allocation[\buyer][\good]^*(\state) > 0 \implies \frac{\util[\buyer](\allocation[\buyer]^*(\state))}{\budget[\buyer] - \saving[\buyer]^*(\state)} = \frac{\frac{\partial\util[\buyer]}{\partial \allocation[\buyer][\good]} (\allocation[\buyer]^*(\state); \type[\buyer])}{\price[\good](\state) } && \forall \buyer \in \buyers, \good \in \goods \\
    &\saving[\buyer]^*(\state) > 0 \implies  \frac{\partial \statevalue[][]}{\partial \budget[\buyer]}(\budget) = \discount  \frac{\partial }{\partial \saving[\buyer]}\mathop{\E}_{(\type^\prime, \budget^\prime, \supply^\prime)} \left[\statevalue[][](\type^\prime, \budget^\prime + \saving^*(\state), \supply^\prime) \right] && \forall \buyer \in \buyers
\end{align}
Since the utility functions are non-satiated, and by the second equation, the buyers are utility maximizing at state $\state$ over all allocations, we must also have that Walras' law holds, i.e., $\price \cdot \left( \supply - \sum_{\buyer \in \buyers} \allocation[\buyer]\right) - \sum_{\buyer \in \buyers} \saving[\buyer] = 0$. Walras' law combined with the first equation above then imply the second condition of a recursive competitive equilibrium. Finally, by \Cref{lemma:savings_foc}, the last two equations imply the first condition of recursive competitive equilibrium.


\end{proof}
\section{Experiment Details}\label{sec_app:experiments}
\subsection{Stochastic Fisher market without interest rates}
For the without interest rates setup, we initialized a stochastic Fisher market with $\numbuyers = 2$ buyers and $\numgoods = 2$ goods. To simplify the analysis, we assumed deterministic transitions such that the buyers get a constant new budgets of $9.5$ at each time period, and their types/valuations as well as the supply of goods does not change at each state, i.e., the type/valuation space and supply space has cardinality 1. This reduced the market to a deterministic repeated market setting in which the amount of budget saved by the buyers differentiates different states of the market. To initialize the state space of the market, we first fixed a range of $[10,50]^\numgoods$ for the buyers' valuations and drew for all buyers $\buyer \in \buyers$ valuations $\valuation[\buyer]$ from that range uniformly at random at the beginning of the experiment. (We scaled the valuations differently for different markets to ensure positive utilities though.) We have assumed the supply of goods is $\ones[\numgoods]$ and that the budget space was $[9,10]^\numbuyers$. This means that our state space for our experiments was $\states = \{ (\valuation[1], \valuation[2])\} \times \{ \ones[\numgoods] \} \times [9, 10]^\numbuyers$. We note that although the assumption that buyers valuations/type space has cardinality one does simplify the problem, the supply of the goods being $\ones$ at each state is wlog because goods are divisible and the allocation of goods to buyers at each state can then be interpreted as the percentage of a particular good allocated to a buyer. We assumed initial budgets of $\budget[][0] = 10_{\numbuyers}$ for buyers. 

Since the state space is continuous, the value function has continuous domain in the stochastic Fisher market setting. As a result, we had to use fitted variant of value iteration. In particular, we assumed that the value function had a linear form at each state such that $\statevalue(\type, \budget, \supply; \a, c) = \a^T\budget + c$ for some parameters $\a \in \R^\numbuyers, c \in \R$, and we tried to approximate the value function at the next step of value iteration by using linear regression. That is, at each value iteration step, we uniformly sampled 25 budget vectors from the range $[9,10]^\numbuyers$. Next, for each sampled budget $\budget$, we solved the min-max step given that budget as a state. This process gave us (budget, value) pairs on which we ran linear regression to approximate the value function at the next iterate. 

To solve the generalized min-max operator at each step of value iteration, we used the \mydef{nested gradient descent ascent (GDA)} \cite{goktas2021minmax} (\Cref{alg:nested_gda_on_qfunc}) along with JAX gradients, which is not guaranteed to converge to a global optimum since the min-max Stackelberg game for stochastic Fisher markets is convex-non-concave. Then, we have run value iteration for $30$ iterations. We ran nested GDA with learning rates $\learnrate[\allocation] = 1.4$, $\learnrate[\price] = 1.5\times10^{-2}$ for linear, $\learnrate[\allocation] = 1.5$, $\learnrate[\price] = 6.5\times 10^{-4}$ for leontief, and $\learnrate[\allocation] = 1.4$, $\learnrate[\price] = 5\times 10^{-3}$ for Cobb-Douglas. The outer loop of nested GDA was run for $\numiters_{\price} = 60$ iterations, while its inner loop was run for $\numiters_{\allocation} = 100$ iterations, and break from the nested GDA if we obtain an excess demand with norm lower than $0.01$. We depict the trajectory of the average value of the value function at each iteration of value iteration under nested GDA in \Cref{fig:avg_values_small}.

\subsection{Stochastic Fisher market with interest rates}
For the with interest rates setup, we initialized a stochastic Fisher market with $\numbuyers = 5$ buyers and $\numgoods = 5$ goods. This time, we implemented a stochastic transitions. Though buyers still get a constant new budgets of $9.5$ at each time step, and their types/valuations as well as the supply of goods does not change at each state, their savings from last time step may decrease or increase according to some probabilistic interest rates. In specific, at each time step, we consider five interest rates $\{0.9, 1.0, 1.1, 1.2, 1.5\}$, each with probability $0.2$.
Thus, we have a stochastic market setting in which the amount of budget possessed by the buyers at the beginning of each time step differentiates different states of the market. To initialize the state space of the market, we first fixed a range of $[0,1]^\numgoods$ for the buyers' valuations and drew for all buyers $\buyer \in \buyers$ valuations $\valuation[\buyer]$ from that range uniformly at random at the beginning of the experiment. (We scaled the valuations differently for different markets to ensure positive utilities though.)
We have assumed the supply of goods is $\ones[\numgoods]$ and that the budget space was $[9,10]^\numbuyers$. This means that our state space for our experiments was $\states = \{ (\valuation[1], \valuation[2], \valuation[3], \valuation[4], \valuation[5])\} \times \{ \ones[\numgoods] \} \times [9, 10]^\numbuyers$. We note that although the assumption that buyers valuations/type space has cardinality one does simplify the problem, the supply of the goods being $\ones$ at each state is wlog because goods are divisible and the allocation of goods to buyers at each state can then be interpreted as the percentage of a particular good allocated to a buyer. We assumed initial budgets of $\budget[][0] = 10_{\numbuyers}$ for buyers. 

Since the state space is continuous, the value function has continuous domain in the stochastic Fisher market setting. As a result, we had to use fitted variant of value iteration. In particular, we assumed that the value function had a linear form at each state such that $\statevalue(\type, \budget, \supply; \a, c) = \a^T\budget + c$ for some parameters $\a \in \R^\numbuyers, c \in \R$, and we tried to approximate the value function at the next step of value iteration by using linear regression. That is, at each value iteration step, we uniformly sampled 25 budget vectors from the range $[9,10]^\numbuyers$. Next, for each sampled budget $\budget$, we solved the min-max step given that budget as a state. This process gave us (budget, value) pairs on which we ran linear regression to approximate the value function at the next iterate. 

To solve the generalized min-max operator at each step of value iteration, we used the \mydef{nested gradient descent ascent (GDA)} \cite{goktas2021minmax} (\Cref{alg:nested_gda_on_qfunc}) along with JAX gradients, which is not guaranteed to converge to a global optimum since the min-max Stackelberg game for stochastic Fisher markets is convex-non-concave. Then, we have run value iteration for $30$ iterations. We ran nested GDA with learning rates $\learnrate[\allocation] = 1.7$, $\learnrate[\price] = 2\times10^{-2}$ for linear, $\learnrate[\allocation] = 2$, $\learnrate[\price] = 5\times 10^{-5}$ for leontief, and $\learnrate[\allocation] = 1.8$, $\learnrate[\price] = 2.5\times 10^{-2}$ for Cobb-Douglas. The outer loop of nested GDA was run for $\numiters_{\price} = 60$ iterations, while its inner loop was run for $\numiters_{\allocation} = 100$ iterations, and break from the nested GDA if we obtain an excess demand with norm lower than $0.01$. We depict the trajectory of the average value of the value function at each iteration of value iteration under nested GDA in \Cref{fig:avg_values_big}. 

\subsection{Other Details}
\paragraph{Programming Languages, Packages, and Licensing}
We ran our experiments in Python 3.7 \cite{van1995python}, using NumPy \cite{numpy}, CVXPY \cite{diamond2016cvxpy}, and JAX \cite{jax2018github}.
\Cref{fig:avg_values_small} and \Cref{fig:avg_values_big} were graphed using Matplotlib \cite{matplotlib}.

Python software and documentation are licensed under the PSF License Agreement. Numpy is distributed under a liberal BSD license.  Matplotlib only uses BSD compatible code, and its license is based on the PSF license. CVXPY is licensed under an APACHE license. 

\paragraph{Implementation Details}

In our execution of \Cref{alg:nested_gda_on_qfunc}, in order to project each allocation computed onto the budget set of the consumers, i.e., $\{\allocation \in \R^{\numbuyers \times \numgoods}_+ \mid \allocation\price \leq \budget\}$, we used the CVXPY with MOSEK solver with warm start option, a feature that enables the solver to exploit work from previous solves. 
\paragraph{Computational Resources}
Our experiments were run on Google Colab with 12.68GB RAM, and took about 8 hours to run the Stochastic Fisher market without interest rates experiment (for each utility function class) and about 8.5 hours to run the Stochastic Fisher market with interest rates experiment (for each utility function class). Only CPU resources were used.

\paragraph{Code Repository}
The data our experiments generated, and the code used to produce our visualizations, can be found in our code repository ({\color{blue}\rawcoderepo}).

\end{document}